\documentclass[11pt]{article}

\usepackage[margin=1in]{geometry}
\usepackage[T1]{fontenc}
\usepackage[utf8]{inputenc}
\usepackage{lmodern}
\usepackage{microtype}
\usepackage{authblk}
\usepackage{amsmath,amssymb,amsthm,mathtools}
\usepackage{bm}
\usepackage{xcolor}
\usepackage{enumitem}
\usepackage{aliascnt}
\usepackage{array,booktabs,multirow,tabularx}
\newcounter{algorithm}
\newcommand{\SetKwInput}[2]{}
\newcommand{\SetKwComment}[3]{}
\newcommand{\SetKw}[2]{}
\newcommand{\KwInput}[1]{\noindent\textbf{Input: }#1\par}
\newcommand{\KwOutput}[1]{\noindent\textbf{Output: }#1\par}
\newcommand{\tcp}[1]{}
\newcommand{\Return}{\textbf{return} }
\newcommand{\While}[2]{%
  \noindent\textbf{while }#1\textbf{ do}\par
  \begingroup\leftskip=1.5em #2\par\endgroup
  \noindent\textbf{end while}\par}
\newcommand{\If}[2]{%
  \noindent\textbf{if }#1\textbf{ then }#2\par}
\newenvironment{algorithm}[1][]{%
  \par\medskip\noindent\begin{minipage}{\linewidth}\small
  \hrule\smallskip
  \renewcommand{\caption}[1]{%
    \refstepcounter{algorithm}%
    \noindent\textbf{Algorithm~\thealgorithm: ##1}\par\smallskip}
  \renewcommand{\;}{\par}%
}{%
  \smallskip\hrule\end{minipage}\par\medskip}
\usepackage{hyperref}
\usepackage[capitalize,nameinlink]{cleveref}
\crefname{algorithm}{algorithm}{algorithms}
\Crefname{algorithm}{Algorithm}{Algorithms}

\allowdisplaybreaks
\hypersetup{
  colorlinks=true,
  linkcolor=blue!55!black,
  citecolor=green!40!black,
  urlcolor=blue!55!black
}

\newtheorem{theorem}{Theorem}[section]
\newaliascnt{lemma}{theorem}
\newtheorem{lemma}[lemma]{Lemma}
\aliascntresetthe{lemma}
\newaliascnt{corollary}{theorem}
\newtheorem{corollary}[corollary]{Corollary}
\aliascntresetthe{corollary}
\newaliascnt{proposition}{theorem}

\aliascntresetthe{proposition}
\newaliascnt{definition}{theorem}
\newtheorem{definition}[definition]{Definition}
\aliascntresetthe{definition}
\newaliascnt{remark}{theorem}

\aliascntresetthe{remark}
\newaliascnt{fact}{theorem}

\aliascntresetthe{fact}

\newcommand{\bits}{\{0,1\}}
\newcommand{\cA}{\mathcal{A}}
\newcommand{\cB}{\mathcal{B}}
\newcommand{\cD}{\mathcal{D}}
\newcommand{\cE}{\mathcal{E}}
\newcommand{\cH}{\mathcal{H}}
\newcommand{\cK}{\mathcal{K}}
\newcommand{\cM}{\mathcal{M}}
\newcommand{\cR}{\mathcal{R}}

\newcommand{\cU}{\mathcal{U}}
\newcommand{\cX}{\mathcal{X}}
\newcommand{\cY}{\mathcal{Y}}
\newcommand{\Dom}{\operatorname{Dom}}
\newcommand{\Unif}{\operatorname{Unif}}
\newcommand{\poly}{\operatorname{poly}}
\newcommand{\negl}{\operatorname{negl}}
\newcommand{\Tr}{\operatorname{Tr}}

\newcommand{\Supp}{\operatorname{Supp}}
\newcommand{\Real}{\mathsf{Real}}
\newcommand{\Learn}{\mathsf{Learn}}
\newcommand{\Heavy}{\mathsf{Heavy}}

\newcommand{\IntKR}{\mathsf{IntKR}}
\newcommand{\Adv}{\mathsf{Adv}}
\newcommand{\Gen}{\mathsf{Gen}}
\newcommand{\Enc}{\mathsf{Enc}}
\newcommand{\Dec}{\mathsf{Dec}}
\newcommand{\IND}{\mathsf{IND}}
\newcommand{\pk}{\mathsf{pk}}
\newcommand{\sk}{\mathsf{sk}}

\newcommand{\ket}[1]{\lvert #1\rangle}
\newcommand{\bra}[1]{\langle #1\rvert}
\newcommand{\proj}[1]{\ket{#1}\!\bra{#1}}

\newcommand{\ceil}[1]{\left\lceil #1\right\rceil}
\newcommand{\changed}[1]{\textcolor{purple!75!black}{#1}}
\newcommand{\mrchanged}[1]{\changed{#1}}

\newenvironment{hybrid}[1]{%
  \par\medskip
  \noindent\textbf{#1}
  \begin{enumerate}[leftmargin=2.5em,label=\arabic*.,itemsep=0.25em,topsep=0.35em]
}{%
  \end{enumerate}
  \medskip
}

\SetKwInput{KwInput}{Input}
\SetKwInput{KwOutput}{Output}
\SetKwComment{tcp}{\(\triangleright\) }{}
\SetKw{Return}{return}

\title{Towards the Impossibility of Imperfectly Complete Key Agreement in the QROM}

\author[1]{Fuyuki Kitagawa\thanks{\texttt{fuyuki.kitagawa@ntt.com}}}
\author[1]{Ryo Nishimaki\thanks{\texttt{ryo.nishimaki@ntt.com}}}
\author[2]{Ági Villányi\thanks{\texttt{agivilla@mit.edu}}}
\author[1]{Takashi Yamakawa\thanks{\texttt{takashi.yamakawa@ntt.com}}}
\affil[1]{NTT, Tokyo, Japan}
\affil[2]{Massachusetts Institute of Technology, Cambridge, MA, USA}
\date{}

\begin{document}
\maketitle

\begin{abstract}
We make progress towards the impossibility of imperfectly complete
quantum-computation, classical-communication (QCCC) key agreement by
constructing the first
unconditional attacks on quantum key agreement in the following restricted settings. 

In the two-message setting, we assume that Alice makes only classical queries to the oracle in the first round and that her message to Bob is classical, but otherwise both parties may perform arbitrary quantum computation, make quantum queries, and send a quantum state in the second round. Our attack and analysis are based on the heavy-query learning techniques from Austrin et al. (CRYPTO 2022) and the reprogramming techniques of Katz and Sela (arXiv 2401.14319). 

In the round-independent setting, we show that the attack of Barak and Mahmoody (CRYPTO 2009; J. Cryptology 2017) can be extended to multiple rounds when Alice and Bob share classical communication and make only classical queries in all but the final round. In both settings, the attacker is computationally unbounded and makes \(\poly(\lambda)\) queries to recover the key whenever each
honest query bound is at most \(\poly(\lambda)\) and the valid agreement probability is
inverse-polynomial. As a consequence, we rule out imperfectly correct quantum
public-key encryption for classical messages whose length is bounded by a
polynomial in \(\lambda\)
in the QROM when key generation has classical oracle access, even if
encryption, decryption, and the ciphertext are quantum. In particular, the
one-bit case applies to the imperfectly correct PKE obtained from two-round
OSP by Bartusek and Khurana (CRYPTO 2025) whenever the classical
OSP sender makes only classical random-oracle queries. 
\end{abstract}

\tableofcontents

\section{Introduction}

Merkle studied key agreement as a way for two parties with no shared secret
to communicate securely over an insecure channel.  His puzzle protocol allows
the two parties to establish a key while requiring a passive eavesdropper to
make quadratically more queries to a random oracle~\cite{Merkle78}. Subsequent work showed
that, for classical protocols, this quadratic gap is optimal under both
perfect and imperfect completeness.  Impagliazzo and Rudich ruled out a
super-polynomial gap~\cite{ImpagliazzoRudich89}, and Barak and Mahmoody proved
a tight quadratic upper bound, including when the honest parties need not
agree with certainty~\cite{BarakMahmoodyFull}.

It is natural to ask whether these results can be extended to the quantum setting, in which parties are allowed to make queries to the oracle in superposition. This is known as the quantum random-oracle model (QROM)~\cite{BonehEtAl11} and quantum variants of Merkle's approach,
and key agreement in the QROM more broadly, have been studied in
several works~\cite{BrassardSalvail08,BrassardEtAl19}. The model we consider in this work was introduced by Austrin
et al.~\cite{AustrinEtAl22} as
quantum-computation, classical-communication (QCCC). In a QCCC
protocol, the parties may perform arbitrary quantum computation, retain
quantum states, and query the random oracle in superposition, but every
message they exchange is classical.

In 2018, Hosoyamada and Yamakawa asked whether QCCC key agreement in the
QROM can achieve a super-polynomial gap between
the query complexities of the honest parties and the
eavesdropper~\cite[Section~6]{HosoyamadaYamakawa18}. Assuming perfect completeness, Austrin et al. gave an unconditional quadratic-query classical attack when one party
is classical~\cite{AustrinEtAl22}. For general QCCC protocols, their Polynomial Compatibility
Conjecture yields a classical attack whose query complexity is polynomial in
the honest parties' query bounds; unconditionally, they obtained an attack using exponentially many queries. 

Li, Li, Li, and Liu subsequently obtained unconditional attacks in a sequence
of increasingly general, perfectly complete settings. They first treated
QPKE with classical-query key generation, while also proving a separate
imperfect-correctness result for oracle-free decryption~\cite{LiEtAl24}.
They then allowed logarithmically many quantum key-generation queries
unconditionally and formulated a conjecture for the general
case~\cite{LiEtAl25}. Their later work removed both restrictions for
perfectly complete QPKE with classical keys and either classical or quantum
ciphertexts~\cite{LiEtAl26PKE}. Most recently, they gave an unconditional,
round-independent attack on every perfectly complete QCCC key-agreement
protocol. The honest parties may query the oracle in superposition in every
round, while the eavesdropper makes only classical
queries~\cite{LiEtAl26ManyRound}.

What remains open is imperfectly complete QCCC key agreement when the honest
parties may make arbitrary quantum queries. Even a classical-query attack in
this setting would face a serious obstacle identified by Austrin et al. Their
Simulation Conjecture asserts, informally, that a classical algorithm whose
query complexity is polynomial in the oracle's input length can approximate
the acceptance probability of any quantum algorithm with the same type of
query bound, on most random oracles. It follows from
the Aaronson-Ambainis conjecture on influential variables of bounded
low-degree polynomials~\cite{AaronsonAmbainis14}. Austrin et al. showed that,
if the Simulation Conjecture is false, then there exists an imperfectly complete QCCC protocol secure against
every classical eavesdropper whose query complexity is polynomial in the
security parameter. A fully general classical
attack would therefore require substantial progress on this simulation
problem. The barrier leaves room, however, for quantum-query attacks and for
attacks that restrict the honest parties' quantum queries. We pursue both
possibilities in this work.

Before this work, no unconditional impossibility result was known for
imperfectly correct QCCC key agreement in the QROM with unrestricted quantum
oracle access, even for two-message protocols.  The two-message setting also
captures quantum public-key encryption (QPKE) through a standard
key-agreement protocol: Alice sends her public key, Bob responds with an
encryption of a uniformly random bit, and Alice recovers the shared bit by
decrypting. If the ciphertext is quantum, then the second message in this protocol is
quantum as well.  Bouaziz-Ermann et al. extended the conjectural approach to
this more general setting~\cite{BouazizEtAl24}.  Our two-message result gives
an unconditional attack when Alice's first-stage oracle queries and first
message are classical, and therefore addresses an open case not covered by
the perfectly complete impossibility results.

\subsection{Our Results}

We prove unconditional attacks that do not assume perfect completeness, where the attacker is a query-bounded, computationally unbounded party. 

For a key-agreement protocol \(\Pi\), let
\[
  \alpha_\Pi(\lambda)
  :=
  \Pr[K_{\cA}=K_{\cB}\neq\bot]
\]
denote its valid agreement probability. Our bounds are meaningful whenever
\(\alpha_\Pi\) is nonnegligible.  Thus they apply in a substantially broader
regime than negligible completeness error: the honest parties need only agree
on a valid key with inverse-polynomial probability, rather than with
probability \(1-\negl(\lambda)\). 

We begin with two-message protocols. Before sending the first message, Alice
queries the oracle only classically, although her remaining computation and
retained state may be quantum. Bob may return a quantum message, and both Bob
and Alice's final algorithm may query the oracle in superposition. Let
\(q_{\cA,1}\), \(q_{\cB}\), and \(q_{\cA,2}\) bound the number of queries
in Alice's first stage, Bob's stage, and Alice's final stage, respectively.

\begin{theorem}[Two-message attack, informal]
\label{thm:informal-two-message}
Let \(\Pi\) be a protocol of the form above with valid agreement probability
\(\alpha>0\).  There is a computationally unbounded quantum attacker that
recovers Bob's key with probability greater than \(\alpha/2\) using
\[
  O\!\left(
    \frac{
      q_{\cA,1}\max\{1,2q_{\cB}+q_{\cA,2}\}^{4}}
      {\alpha^{4}}
    +q_{\cA,2}
  \right)
\]
random-oracle queries.  Its queries before Bob's response are classical;
its queries after intercepting the response may be quantum.
\end{theorem}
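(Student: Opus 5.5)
The attack follows the heavy-query learning template of Austrin et al., combined with the reprogramming analysis of Katz and Sela. Let \(Q:=\max\{1,2q_{\cB}+q_{\cA,2}\}\) and \(\varepsilon:=\alpha/(cQ)\) for a suitable constant \(c\).

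\textbf{Phase one.} After seeing Alice's classical message \(m_1\), Eve runs a classical heavy-query learner. She maintains a list \(L\) of query/answer pairs, initially empty. In each iteration she considers the posterior distribution of Alice's first-stage view (her classical query transcript together with her internal randomness) conditioned on \(m_1\) and on \(L\) agreeing with the oracle. If some point \(x\notin L\) is queried by Alice in this posterior with probability at least \(\varepsilon^{2}/Q^{2}\) (the threshold will be tuned), Eve queries \(x\) classically and adds it to \(L\). The standard potential argument bounds the expected number of learned queries: each heavy point is hit with conditional probability at least the threshold, and Alice makes at most \(q_{\cA,1}\) queries. This gives an expected \(O(q_{\cA,1}Q^{2}/\varepsilon^{2})=O(q_{\cA,1}Q^{4}/\alpha^{4})\) queries, up to constants, and Markov's inequality turns it into a strict cap at the cost of \(\alpha/8\) in success probability.

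\textbf{Phase two.} Eve samples a fake Alice view \((r',T')\) from the posterior given \(m_1\) and \(L\). She forms the oracle \(H'\) that equals the real \(H\) outside \(T'\) and is reprogrammed to \(T'\) on the points of \(T'\setminus L\). She intercepts Bob's (possibly quantum) message \(\rho\) and runs Alice's second stage on the fake state, with \(\rho\) and oracle \(H'\). This costs \(q_{\cA,2}\) queries: each query to \(H'\) needs one real query, answered coherently from the reprogrammed table. Eve outputs the resulting key.

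\textbf{The analysis.} We compare the real experiment with a hybrid in which the real Alice's view is swapped for the fake one, and the real oracle is reprogrammed on the real Alice's non-\(L\) queries so that it is consistent with \(T'\). By the sampling symmetry of the posterior, the joint distribution of (real view, fake view, oracle) is exchangeable in the appropriate sense. The key point is that, conditioned on \(L\), the real and fake transcripts touch no common point outside \(L\) except with small probability; this follows from lightness of all non-\(L\) points, by a union bound over the \(q_{\cA,1}\) fake queries. The remaining hurdle is that Bob and Alice's second stage query \(H\) in superposition, so the reprogramming on the sets \(T\setminus L\) and \(T'\setminus L\) must be bounded by a Katz--Sela-type lemma. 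That lemma says the distinguishing advantage of a \(q\)-query quantum algorithm after reprogramming points whose "query weight" is small is at most \(O(q\sqrt{\text{weight}})\). Bob makes \(q_{\cB}\) queries and the final stage \(q_{\cA,2}\), and the reprogramming is applied twice (real-to-hybrid and back), which accounts for the factor \(2q_{\cB}+q_{\cA,2}\). With threshold \(\varepsilon^{2}/Q^{2}\), the bound becomes \(O(Q\cdot\varepsilon/Q\cdot\text{const})\le\alpha/4\). The hybrid succeeds with probability exactly \(\alpha\), so Eve matches Bob's key with probability greater than \(\alpha-\alpha/8-\alpha/4>\alpha/2\).

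The main obstacle is the reprogramming step. The "weight" a quantum Bob places on the points \(T\setminus L\) is not directly controlled by the classical heaviness of Alice's queries, since Bob's queries depend on \(m_1\) and on \(H\). I would handle this by averaging over the posterior: Bob's expected total query mass on a random set is at most \(q_{\cB}\) times the maximum marginal probability of a point being in that set, which is below the threshold. Then I would apply the Katz--Sela lemma in its averaged form, with Cauchy--Schwarz supplying the square root. That square root is what forces the fourth power in the query bound.
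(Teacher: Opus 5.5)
Your skeleton matches the paper's. It uses a heavy-query learner on Alice's first-stage transcript, a fresh sample from the conditioned view, and Alice's final stage run on $H[T']$. It then swaps the real view for the fake one exactly and applies the random reprogramming lemma twice, once charging Bob's $q_{\cB}$ queries and once charging $q_{\cB}+q_{\cA,2}$. The step that fails is the one you call the key point: that the real and fake transcripts share no point outside $L$ except with small probability. A union bound over the fake queries gives only $q_{\cA,1}\theta$, where $\theta$ is your heaviness threshold, and $\theta=\varepsilon^2/Q^2$ carries no factor $1/q_{\cA,1}$. The bound is not merely loose. Suppose $M_1$ is constant and Alice queries $q_{\cA,1}$ uniformly random distinct points of a fixed set of size $2q_{\cA,1}/\theta$. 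Then no point is ever heavy and the learner returns the empty oracle. Two independent transcripts share about $q_{\cA,1}\theta/2$ points in expectation, so they collide essentially surely once $q_{\cA,1}\theta\gg1$. Forcing disjointness would require $\theta\lesssim\alpha/q_{\cA,1}$. The learner's bound $q_{\cA,1}/\theta$ would then be quadratic in $q_{\cA,1}$, which breaks the target bound. The paper never needs disjointness, because the fake transcript simply overwrites the real one on overlaps. In the analysis, Eve's oracle is $\widehat H[T_{\cA}][T'_{\cA}]$. In the hybrid that reintroduces the real transcript, the distinguisher has challenge access to $F_b$ and simulates $F_b[T'_{\cA}]$ for Alice's final stage with one $F_b$-query per query. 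Each application of \Cref{lem:reprogramming} then needs only $\max_x\Pr[x\in\Dom(R)]\le\theta$, which is exactly the learner's guarantee.

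The second missing piece is what makes the reprogramming lemma applicable at all, and ``averaging over the posterior'' does not supply it. Your estimate that Bob's expected query mass on a random set is at most $q_{\cB}\max_x\Pr[x\in\text{set}]$ requires the set to be independent of the oracle Bob queries. In the real execution, $\Dom(T_{\cA})$ is correlated with $H$. What you need is the conditional-oracle identity, and this is where classical first-stage queries are essential. Given $(M_1,T_{\cA},h)=(m_1,\tau,h)$, the oracle is uniform on $\cH(h,\tau)$ and independent of $S_{\cA}$. So the real execution is identical to one run with $\widehat H[\tau]$ for an independent $\widehat H\leftarrow\Unif(\cH(h))$. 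This identity is what justifies your exact view swap: the ``exactly $\alpha$'' hybrid must re-randomize $H$ on $\Dom(T_{\cA})\setminus\Dom(h)$, not merely reprogram it on $T'$. It also supplies a base oracle $F_0=\widehat H$ together with a reprogramming set sampled independently of it. \Cref{lem:reprogramming} then gives $2q\sqrt\theta$ directly, with no separate query-mass or Cauchy--Schwarz step. Separately, your arithmetic slips: $Q^2/\varepsilon^2=c^2Q^4/\alpha^2$, not order $Q^4/\alpha^4$. The slip is harmless, because a Markov cap costing $\alpha/8$ stays within the claimed bound. With threshold $(\alpha/(8Q))^2$, that cap would even give $O(q_{\cA,1}Q^2/\alpha^3)$. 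This improves on the paper's cap $\lceil q_{\cA,1}/\delta^2\rceil$, which needlessly limits the abort probability to $\delta$.
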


Consequently, no family of such protocols with honest query bounds at most
\(\poly(\lambda)\) and inverse-polynomial valid agreement is secure against
computationally unbounded attackers making \(\poly(\lambda)\) queries. If Alice is
entirely classical and Bob's response is classical, the attack is a passive
classical eavesdropping attack.

In the multi-round setting, we allow for a classical-communication prefix of arbitrary length. The parties may
run quantum algorithms and store quantum states, but all oracle queries made before the final round must be classical. Bob next sends one possibly quantum message, and
Alice performs her final computation. These last two stages may query the
oracle in superposition. Let
\(q_{\cA,\mathsf{pre}}\) and \(q_{\cB,\mathsf{pre}}\) bound the parties'
total prefix queries, and let \(q_{\cB,\mathsf q}\) and
\(q_{\cA,\mathsf{fin}}\) bound their final-stage queries.

\begin{theorem}[Multi-round attack, informal]
\label{thm:informal-multi-round}
Let \(\Pi^{\mathsf{mr}}\) be a protocol of the form above with valid
agreement probability \(\alpha>0\).  There is a computationally unbounded
online quantum attacker that recovers Bob's key with probability greater
than \(\alpha/2\) using
\[
  O\!\left(
    \frac{
      \max\{1,q_{\cA,\mathsf{pre}}\}
      \max\{1,q_{\cB,\mathsf{pre}}\}
      \max\{1,2q_{\cB,\mathsf q}+q_{\cA,\mathsf{fin}}\}^{6}}
      {\alpha^{6}}
    +q_{\cA,\mathsf{fin}}
  \right)
\]
random-oracle queries.  Its queries during the classical prefix are
classical, and the bound has no additional dependence on the number of
prefix rounds.
\end{theorem}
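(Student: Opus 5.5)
We reduce the multi-round protocol to the two-message analysis behind \cref{thm:informal-two-message}, using a Barak--Mahmoody style eavesdropper during the classical prefix. Throughout the prefix the attacker Eve maintains a list \(L\) of query/answer pairs. After each classical message she repeatedly conditions on the transcript so far and on \(L\). Whenever some oracle input \(x\) is \(\varepsilon\)-heavy, meaning that Alice's or Bob's prefix queries contain \(x\) with conditional probability at least \(\varepsilon\), she queries \(x\) classically and adds it to \(L\).

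All prefix queries are classical, so the standard Barak--Mahmoody potential argument applies. It bounds the expected number of heavy queries by \(O(q_{\cA,\mathsf{pre}}q_{\cB,\mathsf{pre}}/\varepsilon)\)-type quantities, with no dependence on the number of rounds. More precisely, the bound is \(O(\max\{1,q_{\cA,\mathsf{pre}}\}\max\{1,q_{\cB,\mathsf{pre}}\}/\varepsilon)\) once we use the refined independence lemma, which makes the joint distribution of Alice's and Bob's views conditioned on \((\text{transcript},L)\) \(O(\varepsilon q)\)-close to a product distribution. Quantum internal states cause no trouble here, because the prefix is a classical-query process whose classical query log can be treated as part of each party's randomness.

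After the prefix, Eve samples a fake Alice view \(\tilde V_A\) from its conditional distribution given \((\text{transcript},L)\). This view includes a fake set of Alice's prefix query pairs \(\tilde Q_A\) and a consistent quantum internal state. Eve then builds the oracle \(\tilde H\), which equals \(H\) except that it is reprogrammed on \(\tilde Q_A\setminus L\). When Bob's possibly quantum final message arrives, Eve runs fake Alice's final stage on it with oracle \(\tilde H\), making \(q_{\cA,\mathsf{fin}}\) quantum queries, and outputs the resulting key. For correctness we compare the real experiment with a hybrid in which the real Alice's unlearned queries are replaced by the fake ones. Two facts drive the comparison. First, by near-independence, with probability \(1-O(\varepsilon q_{\cA,\mathsf{pre}}q_{\cB,\mathsf{pre}})\) the real Bob's prefix queries avoid \(\tilde Q_A\setminus L\) and the fake Alice's queries avoid Bob's unlearned queries. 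Second, Bob's final stage (\(q_{\cB,\mathsf q}\) quantum queries) and Alice's final stage both run against an oracle that differs from the "real world" oracle only on points that are light, that is, not \(\varepsilon\)-heavy.

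The main obstacle is this second step: bounding how much the quantum final stages can detect reprogramming on light points. I would follow Katz--Sela's reprogramming lemma via the one-way-to-hiding / BBBV hybrid. The distinguishing advantage of \(2q_{\cB,\mathsf q}+q_{\cA,\mathsf{fin}}\) queries, where Bob is counted twice because of uncomputation in the purified analysis, is at most \(O((2q_{\cB,\mathsf q}+q_{\cA,\mathsf{fin}})\sqrt{\delta})\). Here \(\delta\) bounds the total query weight the final-stage algorithms place on reprogrammed points, which is controlled by \(\varepsilon\) times the number of reprogrammed points in expectation. The subtlety is that the reprogrammed set depends on the transcript. Handling this requires averaging over the conditional distribution and applying Cauchy--Schwarz/Jensen, and these averaging steps introduce an extra square root compared with the two-message case. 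That extra root is what turns the exponent \(4\) into \(6\).

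Finally we set parameters. Writing \(Q=\max\{1,2q_{\cB,\mathsf q}+q_{\cA,\mathsf{fin}}\}\), we need the total loss to be at most \(\alpha/2\). This loss consists of the independence error and the reprogramming error \(Q\cdot(\cdots)^{1/2}\), where the latter involves \(\varepsilon\) polynomially through nested roots. Requiring it to be at most \(\alpha/2\) forces \(\varepsilon=\Theta(\alpha^6/Q^6)\) up to the product of prefix bounds. Plugging this \(\varepsilon\) into the heavy-query count gives the stated \(O(\max\{1,q_{\cA,\mathsf{pre}}\}\max\{1,q_{\cB,\mathsf{pre}}\}Q^6/\alpha^6)\). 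Adding \(q_{\cA,\mathsf{fin}}\) for the simulated final stage yields the theorem, and a Markov argument converts the expected query count into a strict bound at a constant-factor cost. Eve's output matches Bob's key whenever the hybrid agreement event occurs, so her success probability exceeds \(\alpha-\alpha/2\).
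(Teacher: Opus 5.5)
Your attacker is essentially the paper's: a Barak--Mahmoody learner during the prefix, a fresh Alice view sampled conditioned on \((m,h)\), and Alice's final stage run on \(H\) reprogrammed at the sampled transcript. The correctness argument, however, is missing its central step.

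\textbf{The missing hybrid.} The random reprogramming lemma applies only when the reprogrammed set is drawn by a classical sampler with fresh coins. That set must be independent of the base oracle and of everything the distinguisher holds, including Bob's residual state. In the real execution, \(H\) and Bob's state are correlated with Alice's actual transcript. So you cannot simply ``replace the real Alice's unlearned queries by the fake ones'' and pay \(O(Q\sqrt{\varepsilon})\). The fact that Bob's and the fake Alice's prefix queries rarely collide is classical bookkeeping and does not put you in the lemma's setting.

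The paper first passes to a resampled hybrid. It draws the two views independently from \(\omega^A_{m,h}\) and \(\omega^B_{m,h}\), samples \(H_0\leftarrow\Unif(\cH(h,T'_{\cB}))\), and runs both final stages with \(H_0[T'_{\cA}]\). This hybrid is \(\eta\)-close to the real conditioned state because one channel maps \(\omega^{AB}_{m,h}\) to the real state and \(\omega^A_{m,h}\otimes\omega^B_{m,h}\) to the resampled one. That channel relies on the conditional-oracle identity, which holds only because prefix queries are classical. The classical product bound reaches the residual quantum states through the local channels \(\Phi_P\).

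Only after this are two reprogramming steps legitimate. The first removes Alice's transcript from Bob's oracle. The second reprograms Bob's oracle on an \emph{independent} Alice transcript while Eve's simulated Alice overwrites it with her own. A second application of the resampling lemma then returns to the real interception experiment. This two-step detour, not uncomputation, is why \(q_{\cB,\mathsf q}\) is counted twice.

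\textbf{The quantitative accounting.} You reach the right exponent for the wrong reasons. Barak--Mahmoody does not make the views \(O(\varepsilon q)\)-close to a product. It gives self-dependency at most \(9\beta_0\) except with probability \(9\alpha_0\), and lightness \(\varepsilon_2/n_B+\beta_0\), whenever \(\alpha_0\beta_0\ge\varepsilon_2\). Pointwise \(\eta\)-independence and \(\eta\)-lightness therefore need \(\varepsilon\approx\eta^2\), giving \(O(\bar q_{\cA}\bar q_{\cB}/\eta^2)\) expected queries.

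The Markov truncation must keep the abort probability \(O(\eta)\), so it costs a further factor \(1/\eta\), not a constant. A constant-factor cutoff leaves a constant abort probability, which swamps an inverse-polynomial agreement probability. In addition, the reprogramming lemma's parameter is the maximum per-point inclusion probability of the reprogrammed set, not \(\varepsilon\) times its size.

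Since the reprogramming loss is \(2Q\sqrt\eta\), one takes \(\eta=(\alpha/(16Q))^2\), and the sixth power is simply \(\eta^{-3}=(16Q/\alpha)^6\). Your ``extra square root from Jensen'' is never derived, and the choice \(\varepsilon=\Theta(\alpha^6/Q^6)\) is reverse-engineered from the target bound.
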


As an application, we consider quantum public-key encryption for classical
messages whose length is bounded by a polynomial in \(\lambda\). The public
and secret keys are classical, and key generation has classical oracle
access. Encryption, decryption, and the ciphertext itself may be quantum.

\begin{theorem}[QPKE attack, informal]
\label{thm:informal-qpke}
No QPKE scheme for classical messages whose length is bounded by a polynomial
in \(\lambda\), with
classical-query key generation, negligible correctness error, and honest
query bounds at most \(\poly(\lambda)\), is query-bounded IND-CPA secure. In
fact, for every polynomial \(p\), there is a computationally unbounded quantum
attacker making \(\poly(\lambda)\) queries whose IND-CPA advantage is at least
\[
  \frac12-\frac{1}{p(\lambda)}-\negl(\lambda).
\]
\end{theorem}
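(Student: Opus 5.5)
The plan is to reduce the QPKE statement to \cref{thm:informal-two-message} through the standard PKE-to-key-agreement transformation, and then turn key recovery into an IND-CPA distinguisher. One point needs care. The theorem asks for advantage close to $1/2$, while \cref{thm:informal-two-message} only gives success probability above $\alpha/2$. So I will not run the generic transformation with a one-bit key. Instead I will use the attacker's internal structure: it recovers Bob's key by simulating Alice's decryption on a sampled consistent view, and I will show it does so with probability close to the agreement probability itself.

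\textbf{Step 1 (the protocol).} From a QPKE scheme $(\Gen,\Enc,\Dec)$ with classical-query $\Gen$, define the two-message protocol $\Pi_{m_0,m_1}$. Alice runs $\Gen$ (classical queries only) and sends $\pk$, which is classical. Bob samples $b\leftarrow\bits$, sends the possibly quantum ciphertext $\Enc(\pk,m_b)$, and outputs $K_\cB=b$. Alice decrypts and outputs $K_\cA=b'$ if the plaintext equals $m_{b'}$, and $\bot$ otherwise. This protocol fits the format of \cref{thm:informal-two-message} with $q_{\cA,1}=q_\Gen$, $q_\cB=q_\Enc$, and $q_{\cA,2}=q_\Dec$, all polynomial. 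By negligible correctness error, $\alpha\ge 1-\negl(\lambda)$.

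\textbf{Step 2 (sharpened success bound).} Here I will reopen the proof of \cref{thm:informal-two-message} instead of using its statement. The attacker works in two phases. First it learns the heavy queries of Alice's first stage, using the Austrin et al.\ technique with accuracy parameter $\varepsilon$. Then it samples a fake Alice view $(\sk',\text{fake oracle})$ consistent with $\pk$ and the learned heavy queries, and runs $\Dec(\sk',\cdot)$ on the intercepted ciphertext, reprogramming the oracle on the fake view's light points as in Katz--Sela. The analysis should show that the joint distribution of (real transcript, Bob's key, fake decryption output) is $\delta$-close to the honest one, where
\[
\delta=O\!\left(\bigl((2q_\cB+q_{\cA,2})^{4}q_{\cA,1}/L\bigr)^{1/4}\right)
\]
and $L$ is the number of learning queries. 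The $\alpha/2$ in the theorem comes from choosing $\delta=\alpha/2$. If the proof indeed yields success at least $\alpha-\delta$, then taking $L=\poly(\lambda)\cdot p(\lambda)^4$ gives key-recovery probability at least $1-1/p(\lambda)-\negl(\lambda)$. This step is the main obstacle: I have to check that the reprogramming/heavy-query hybrid bound is additive in $\delta$ and not lossy by a factor of two. If it turns out to be lossy, I will fall back on amplification. The IND-CPA challenger lets the adversary sample fresh honest encryptions under $\pk$ itself, which requires only $\pk$ and oracle access. With these, the adversary can estimate which fake $\sk'$ decrypt correctly and keep such a sample, which boosts the success probability to $1-1/p$ at polynomial cost.

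\textbf{Step 3 (IND-CPA adversary).} Given $\pk$, the adversary performs the learning phase; its queries here are classical and need no interaction with the challenger. It outputs $(m_0,m_1)$ as any two distinct messages of length $\poly(\lambda)$, receives the challenge ciphertext, runs the simulated decryption, and outputs the resulting bit, or a random bit on $\bot$ or a mismatch. The challenge experiment is distributed exactly as $\Pi_{m_0,m_1}$ with Bob's bit $b$ equal to the challenge bit. Hence the guessing probability is at least $1-1/p-\negl$, and the advantage $\Pr[b'=b]-\tfrac12$ is at least $\tfrac12-1/p(\lambda)-\negl(\lambda)$. The adversary's total query count is $O(L+q_\Dec)=\poly(\lambda)$, which is what the theorem requires.
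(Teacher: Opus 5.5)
Your proposal is correct and follows the paper's proof: you build the key-agreement protocol from the scheme with two fixed distinct messages, apply the formal two-message attack, and turn key recovery into an IND-CPA guess (outputting a random bit on \(\bot\)). The additivity you were unsure about in Step~2 is exactly what \Cref{thm:main} gives, namely success at least \(\alpha-\delta-2(2q_{\cB}+q_{\cA,2})\sqrt\delta\) with \(\lceil q_{\cA,1}/\delta^2\rceil+q_{\cA,2}\) queries, so taking \(\delta\) inverse-polynomially small yields \(\tfrac12-1/p-\negl\) directly and your amplification fallback is never needed (it would not obviously work anyway: a quantum challenge ciphertext cannot be reused, and selecting among candidate keys does not raise their individual decryption success).
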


\begin{table}[t]
  \centering
  \small
  \renewcommand{\arraystretch}{1.15}
  \begin{tabularx}{\textwidth}{@{}l*{3}{>{\centering\arraybackslash}p{1.45cm}}>{\centering\arraybackslash}p{1.75cm}>{\raggedright\arraybackslash}X@{}}
    \toprule
    \multicolumn{1}{c}{\multirow{2}{*}{\textit{Result}}}
      & \multicolumn{3}{c}{\textit{Oracle access}}
      & \multicolumn{1}{c}{\multirow{2}{*}{\textit{Correctness}}}
      & \multicolumn{1}{c}{\multirow{2}{*}{\textit{Status}}} \\
    \cmidrule(lr){2-4}
      & \(\Gen\) & \(\Enc\) & \(\Dec\) & & \\
    \midrule
    \cite{AustrinEtAl22}
      & Q & Q & Q & perfect & Conjectural \\
    \cite{BouazizEtAl24}
      & Q & Q & -- & perfect & Conjectural \\
    \cite{LiEtAl24}
      & C & Q & Q & perfect & Unconditional \\
    \cite{LiEtAl24}
      & C & Q & -- & imperfect & Unconditional \\
    \cite{LiEtAl25}
      & Q (log.) & Q & Q & perfect
      & Unconditional for \(O(\log\lambda)\) \(\Gen\) queries; conjectural
        otherwise \\
    \cite{LiEtAl26PKE}
      & Q & Q & Q & perfect & Unconditional \\
    This work
      & C & Q & Q & imperfect & Unconditional \\
    \bottomrule
  \end{tabularx}
  \caption{Comparison of QPKE impossibility results in the QROM.  Here C and
  Q denote classical and quantum random-oracle access, respectively, and
  -- denotes oracle-free decryption.  The two rows for
  \cite{LiEtAl24} separate its perfect-correctness result from its
  imperfect-correctness result, which requires oracle-free decryption.}
  \label{tab:qpke-comparison}
\end{table}

Moreover, the multi-round theorem gives an interactive key-generation variant of QPKE.  Namely, suppose that ordinary key generation is
replaced by an arbitrary classical-communication setup phase between the
decrypting and encrypting parties.  The parties may perform quantum local
computation and retain quantum states, but both make only classical
random-oracle queries during this phase.  Viewing the setup as the classical
prefix and the challenge ciphertext as Bob's final message, the multi-round
attack applies without any dependence on the number of setup rounds. The reduction fixes two distinct messages and turns the scheme into key
agreement in the natural way. Alice generates and sends the public key; Bob
encrypts one of the two messages and takes its index as his key; Alice
decrypts and recovers the index. Recovering Bob's key is therefore equivalent
to guessing which message was encrypted.

Finally, we obtain the following corollary.

\begin{corollary}[Two-round OSP consequence, informal]
The PKE obtained from a negligibly incorrect two-round OSP protocol by the
transformation of Bartusek and Khurana is not query-bounded IND-CPA secure in
the QROM, provided that the classical OSP sender makes only classical
random-oracle queries and the honest query bounds are polynomial.
\end{corollary}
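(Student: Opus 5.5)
The plan is to obtain the corollary by instantiating \cref{thm:informal-qpke} (equivalently, the two-message attack of \cref{thm:informal-two-message} via the standard PKE-to-key-agreement reduction) with the PKE produced by the Bartusek--Khurana transformation. First I would recall that transformation. In a two-round OSP protocol the receiver sends a first message and keeps a state. The classical sender replies with a message that determines an output, and the receiver then decodes. In the derived PKE, key generation runs the OSP sender's side that fixes the public information, and the public key is classical. Encryption of a bit runs the receiver/response side, producing a possibly quantum ciphertext, and decryption uses the sender's retained secret to recover the bit. The main bookkeeping step is to check which OSP algorithm becomes $\Gen$. I expect the sender's (first) computation to play the role of key generation, so the hypothesis that the classical OSP sender makes only classical random-oracle queries gives exactly a $\Gen$ with classical oracle access and classical keys. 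If decryption also reuses the sender's classical computation, its oracle access is classical, which \cref{thm:informal-qpke} permits anyway since it allows $\Dec$ to be quantum.

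Second, I would verify the remaining hypotheses of \cref{thm:informal-qpke}. The messages are single bits, so their length is polynomially bounded. Negligible OSP incorrectness should translate into negligible PKE correctness error: an incorrect decryption forces an incorrect OSP output, so a union bound over the transformation's steps suffices. The honest query bounds of $\Gen,\Enc,\Dec$ are sums of the OSP parties' polynomial query bounds. Here the transformation itself must be query-efficient and must not add hidden quantum queries to $\Gen$. I expect this to be the main point needing care: a transformation that, say, has $\Gen$ simulate part of the quantum receiver would break the classical-query hypothesis. I would resolve this by checking the construction directly and, if necessary, moving any quantum-query computation into $\Enc$, whose oracle access may be unrestricted.

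Finally, I would apply \cref{thm:informal-qpke} with, for example, $p(\lambda)=4$. This yields a computationally unbounded quantum attacker with $\poly(\lambda)$ queries and IND-CPA advantage at least $1/4-\negl(\lambda)$, so the PKE is not query-bounded IND-CPA secure. Concretely, one instantiates the two-message key agreement with Alice sending $\pk$ (classical queries, classical message) and Bob sending $\Enc(\pk,b)$. Then $\alpha\ge 1-\negl(\lambda)$, and \cref{thm:informal-two-message} gives the query count $O(q_{\Gen}\max\{1,2q_{\Enc}+q_{\Dec}\}^4+q_{\Dec})$, which is polynomial.
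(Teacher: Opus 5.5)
Your proposal is correct and follows the paper's route. The paper obtains this corollary as the $\mu(\lambda)=1$ case of \Cref{thm:qpke}, which is the two-message attack applied through the standard PKE-to-key-agreement reduction, using the same observation you make: Bartusek--Khurana's transformation turns the classical OSP sender into $\Gen$, so classical sender queries give classical-query key generation, while encryption may be quantum.

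One small correction to your recollection of the protocol: in the two-round OSP used here it is the sender who speaks first, and its message becomes $\pk$; the quantum receiver's computation becomes encryption and the sender's retained trapdoor is used for decryption. Your final assignment of roles is nonetheless the right one.
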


\subsection{Discussion}

We note that our key-agreement model and security notion are different from prior work.
In classical random-oracle key agreement, a passive eavesdropper can copy the
public transcript and is commonly required to be unable to recover an honest
party's key~\cite{BarakMahmoodyFull}. Here Bob's final message may be quantum
and therefore cannot in general be copied. Moreover, unlike the Classical
Communication One Quantum Message (CC1QM) security notion of Bouaziz-Ermann
et al.~\cite[Definition~3.1]{BouazizEtAl24}, our success condition asks only
that the attacker recover Bob's key. It does not require the attacker to
provide Alice with a replacement state that makes her output the same
non-abort key. Thus our attack objective is weaker, while security against
our interception key-recovery experiment is correspondingly stronger: every
successful CC1QM attack yields a successful interception key-recovery attack,
but the converse need not hold. This does not by itself break the full CC1QM
condition for a quantum final message. It is nevertheless sufficient for our
QPKE application, because the IND-CPA challenge ciphertext is transferred to
the attacker and need not be returned to Alice.

Moreover, our result crucially relies on the fact that the early oracle queries are
classical. Once the complete classical query-answer transcript is fixed, the
party's residual state depends only on that transcript, while all unqueried
oracle values remain uniform. This makes it possible to resample a conditioned
view and then apply random reprogramming. With quantum queries, a party
may instead be entangled with many oracle values in superposition,
and there is no classical transcript that both determines the
residual state and leaves the remaining oracle values independent and
uniform. Measuring the query register generally changes the execution, while
keeping it coherent does not yield the classically sampled
reprogramming set needed by \Cref{lem:reprogramming}. Extending our result to
quantum queries in the prefix would therefore require novel techniques for handling quantum oracle interactions, and we leave this as future work.

\subsection{Acknowledgements}

This research was conducted while AV was a summer intern at NTT Social Informatics Laboratories
in Tokyo. AV is very grateful to NTT and the organizers of the internship for the opportunity and their generous hospitality. 

We acknowledge the use of AI in the production of this paper.  The proof ideas for the two-round setting were human-generated, while GPT-5.6 Sol Extra-Hard Thinking assisted in formalizing. When attempting to generalize to the multi-round setting, GPT-5.6 Sol Extra-Hard Thinking formalized an attack by extending Barak-Mahmoody. The authors take full responsibility for the correctness of this work.

\subsection{Related Work}

\paragraph{Oblivious state preparation.}
Bartusek and Khurana introduced oblivious state preparation (OSP). They showed
that two-round OSP implies one-bit CPA-secure PKE with classical key
generation, decryption, keys, and ciphertexts, but quantum
encryption~\cite[Theorem~7.9]{BartusekKhurana24}. They also reduced OSP to
perfectly complete key agreement, thereby ruling out perfectly correct OSP
in the QROM~\cite[Corollary~7.7]{BartusekKhurana24}. In their relativizing
OSP-to-PKE transformation, the classical OSP sender becomes the key
generator. Our theorem therefore attacks the resulting PKE even when its
correctness error is negligible rather than zero, provided that the sender
queries the random oracle classically.

\paragraph{Quantum public-key encryption.}
Under the Polynomial Compatibility Conjecture, Bouaziz-Ermann et al. handled
protocols with a final quantum message. Their result assumes perfect
completeness and no oracle queries by Alice after she receives that message,
so its QPKE application requires oracle-free
decryption~\cite{BouazizEtAl24}. Our two-message theorem instead permits
imperfect completeness and quantum oracle queries in Alice's final
computation, at the cost of requiring classical queries in her first stage.

Li et al. proved a complementary sequence of QPKE impossibility results.
Their first paper unconditionally rules out perfectly complete QPKE with
classical keys and ciphertexts when key generation has classical oracle
access. It separately rules out imperfectly correct, classical-key QPKE with
possibly quantum ciphertexts in any oracle model, provided that decryption is
oracle-free~\cite{LiEtAl24}. Their next paper allows logarithmically many
quantum key-generation queries unconditionally and gives a conjectural
result beyond that range~\cite{LiEtAl25}. Their 2026 result removes both the
conjecture and the key-generation restriction for perfectly complete QPKE
with classical keys and either classical or quantum
ciphertexts~\cite{LiEtAl26PKE}. Our result keeps the classical-query
restriction on key generation, but allows imperfect correctness and
unrestricted quantum oracle queries by encryption and decryption.

\paragraph{Many-round key agreement.}
In a recent preprint, Li et al. give an unconditional attack against
perfectly complete, arbitrary-round QCCC key agreement with unrestricted
quantum oracle queries~\cite{LiEtAl26ManyRound}. All communication in their
model is classical, and their eavesdropper makes only classical queries. Our
multi-round theorem makes a different tradeoff: it tolerates imperfect
completeness and a final quantum message, but it requires classical oracle
access during the prefix and uses a quantum attacker after the prefix.

\subsection{Organization}

\Cref{sec:technical-overview} explains the two attacks and the QPKE
application.  \Cref{sec:preliminaries} defines the query model, partial
oracles, and the key-agreement and QPKE security notions, and recalls the
external lemmas used in the proofs.  \Cref{sec:resampling} constructs the
heavy-query learner for the two-message setting and defines Alice's
conditioned first-round view.  \Cref{sec:main-attack} proves the two-round and multi-round theorems.  Finally, \Cref{sec:corollaries} derives the classical-eavesdropping,
QPKE, and OSP consequences.

\section{Technical Overview}\label{sec:technical-overview}

\subsection{The Two-Message Attack and QPKE Application}

We consider two-round key agreement in the QROM with the following syntax. Both Alice and Bob are quantum algorithms but Alice may make only classical queries to the oracle and her message to Bob must be classical. Alice first makes at most $q_{\cA, 1}$ queries to the random oracle $H$ to obtain query, answer pairs $T_\cA$, and sends her message $m_1$ to Bob, retaining a residual state $S_\cA$. Bob then makes his queries to $H$, outputs a key $K_B$ and then sends his possibly quantum message $\psi$ to Alice, who then outputs a key $K_A$. 

Eve proceeds as follows. After receiving Alice's first message \(m_1\), Eve looks for an input that is
likely to occur in the transcript of Alice's first round, conditioned on
\(m_1\) and all oracle answers Eve has learned so far. Whenever this
conditional probability exceeds \(\delta\), Eve queries the input and adds
the answer to a partial oracle \(h\), where $h$ is defined such that it agrees with $H$ on all points in its domain, which we denote by $\Dom(h)$. We stop the procedure after
\(\ceil{q_{\cA,1}/\delta^2}\) queries; the probability of reaching this
cutoff is at most \(\delta\). If this learning phase is successful, then every point outside \(\Dom(h)\) 
appears in a fresh conditioned Alice transcript with probability at most
\(\delta\). 

Although Alice's queries in the first round are classical, she may retain a quantum
state. However, once Alice's message and full query transcript
are fixed, her residual state depends only on those classical random variables, while
all unqueried oracle values remain uniform. Conditioned on \((m_1,h)\), Eve
can therefore prepare a distribution over Alice's views, which we denote by $\omega_{m_1, h}$, and use it to sample a new view
\[
  (T'_{\cA},S'_{\cA})\leftarrow\omega_{m_1,h}.
\] 
Eve then intercepts Bob's response \(\Psi\) and runs Alice's final algorithm
on \(S'_{\cA}\) and \(\Psi\), using a \emph{reprogrammed oracle}
\(H[T'_{\cA}]\), which is defined to agree with $T'_\cA$ on all points inside of $\Dom(T'_{\cA})$, and with $H$ everywhere else. 

The analysis of the two-message attack combines two ingredients. The
first ingredient is the heavy-query bound of Austrin et
al.~\cite[Lemma~3.4]{AustrinEtAl22}, in the form recalled by Katz and
Sela~\cite[Lemma~4]{KatzSela24}, which lets Eve learn the heavy inputs
before we invoke reprogramming. The second is the random
reprogramming lemma in the form stated by Katz and
Sela~\cite[Lemma~3]{KatzSela24}. The lemma bounds the effect
of reprogramming an oracle on a random set: if each point belongs to that set
with probability at most \(\delta\), then the acceptance probability of a
\(q\)-query quantum algorithm changes by at most \(2q\sqrt\delta\). 

The QPKE application uses the standard reduction from QPKE to key
agreement used in prior QROM impossibility work
\cite{BouazizEtAl24,LiEtAl24}. Alice generates
\((\pk,\sk)\) and sends \(\pk\). Fix two distinct messages \(m_0,m_1\).
Bob chooses a uniform bit \(b\), encrypts \(m_b\), and takes \(b\) as his
key; Alice decrypts and maps \(m_0,m_1\) back to their indices. The resulting
agreement probability is at least the scheme's minimum correct-decryption
probability. In the IND-CPA game, Eve preprocesses \(\pk\), treats the
challenge ciphertext as Bob's response, and uses its recovered key as the
guess for \(b\). Interception success \(p\) therefore yields IND-CPA
advantage at least \(p-1/2\). Because the ciphertext is transferred rather
than copied, encryption, decryption, and the ciphertext may all be quantum.
Only key generation needs classical oracle access.

\subsection{The Multi-Round Attack}

For the multi-round result, we treat the entire classical-query prefix as an
auxiliary classical protocol.  The parties' local computations and residual
states may be quantum, but their messages and oracle queries during the
prefix are classical.  Since we do not impose a computational bound, an
exact simulator can sample these classical actions from their induced
distributions and record the corresponding conditional residual states.  We
can therefore apply the Barak--Mahmoody learner to the simulated classical
views and then map its guarantees back to the parties' actual query
transcripts and residual quantum states~\cite{BarakMahmoodyFull}.

Conditioned on the public transcript \(m\) and the partial oracle \(h\)
learned by Eve, the parties' prefix views are close to the product of their
marginals, and no input outside \(\Dom(h)\) is likely to appear in Alice's
prefix query transcript.  This product guarantee does not yet include the
random oracle used in the final round.  Since every prefix query is
classical, after the two query transcripts are fixed, the unqueried oracle
values remain uniform and the residual states have no further dependence on
them.  We use this fact to sample independent prefix views together with a
consistent oracle and show that the resulting state is close to the real
conditioned state.  These steps are given in
\Cref{sec:classical-prefix-views,sec:bm-learner,sec:conditioned-prefix-resampling}.

Eve can then sample a fresh Alice prefix view, intercept Bob's final message,
and run Alice's final computation from the sampled residual state.  Since
the learner's query bound depends only on the total numbers of prefix
queries made by Alice and Bob, the attack has no additional dependence on
the number of prefix rounds.

\section{Preliminaries}\label{sec:preliminaries}

Let \(\lambda\in\mathbb{N}\) be the security parameter, and let
\(n,\ell:\mathbb{N}\to\mathbb{N}_{>0}\) be functions bounded above by a
polynomial in \(\lambda\). Define
\[
  \cX_\lambda=\bits^{n(\lambda)},
  \qquad
  \cY_\lambda=\bits^{\ell(\lambda)},
  \qquad
  \cH_\lambda=\{H:\cX_\lambda\to\cY_\lambda\}.
\]
For any finite set \(\cR\), we write \(r\leftarrow\cR\) for a uniformly
random element of \(\cR\).  In particular,
\(H\leftarrow\cH_\lambda\) denotes a uniformly random function.  When
\(\lambda\) is clear, we omit the subscript.  When representing the
sampled oracle by a classical register, \(\mathsf H\) has orthonormal basis
\(\{\ket{H}_{\mathsf H}:H\in\cH\}\). For a probability mass function or subprobability mass function \(\pi\), we
write
\[
  \Supp(\pi):=\{z:\pi(z)>0\}
\]
for its support.

\subsection{Random Oracles and Query Algorithms}

\begin{definition}[Oracle quantum algorithms]\label{def:oracle-algorithms}
A quantum oracle algorithm
\(\mathcal Q=\{\mathcal Q_\lambda\}_{\lambda\in\mathbb N}\) is a family
of quantum circuits. We write \(I\) for its input register, \(X\) for its
\(n(\lambda)\)-qubit query register, \(Y\) for its
\(\ell(\lambda)\)-qubit response register, and \(Z\) for all remaining work
registers. The circuit family depends only on $\lambda$. For
\(H\in\cH_\lambda\), a query applies the unitary
\[
  U_H\ket{x}_X\ket{y}_Y\ket{z}_Z
  =
  \ket{x}_X\ket{y\oplus H(x)}_Y\ket{z}_Z.
\]
The algorithm alternates oracle-independent operations with applications of
\(U_H\). For fixed \(\lambda\), let \(q(\mathcal Q_\lambda)\) denote its
worst-case number of oracle queries; we sometimes abbreviate this to
\(q(\mathcal Q)\). 

An algorithm has \emph{classical oracle access} if each query is made on a
classical input \(x\), the answer \(H(x)\) is returned classically, and the
ordered list of query-answer pairs can be recorded without changing the
execution. The rest of the computation and the residual state may be
quantum.
\end{definition}

We use the standard classical-quantum representation of a state that is
classical on one register. If \(C\)
is a classical register with finite alphabet \(\mathcal C\) and \(S\) is a
quantum register, then
\[
  \rho_{CS}
  =
  \sum_{c\in\mathcal C}
  p(c)\proj{c}_C\otimes\sigma_{S\mid c},
\]
where \(p\) is a probability distribution and, whenever \(p(c)>0\),
\(\sigma_{S\mid c}\) is the normalized state of \(S\) obtained by measuring
\(C\) in the computational basis and observing \(c\). Explicitly, writing
\(P_c=\proj{c}_C\),
\[
  p(c)=\Tr\!\left[(P_c\otimes I_S)\rho_{CS}\right]
\]
and, for \(p(c)>0\),
\[
  \sigma_{S\mid c}
  =
  \frac{
    \Tr_C\!\left[(P_c\otimes I_S)\rho_{CS}(P_c\otimes I_S)\right]
  }{p(c)}.
\]

\subsection{Partial Oracles and Reprogramming}

\begin{definition}[Partial oracle]
For any $\lambda \in \mathbb{N}$, fix $\cX = \cX_\lambda, \cY = \cY_\lambda,$ and $\cH = \cH_\lambda$.  A partial oracle is a function
\(h:\Dom(h)\to\cY\), where
\(\Dom(h)\subseteq\cX\). We denote the \emph{size} of $h$ by \(|h|=|\Dom(h)|\). We write
\(h\preceq H\) when a full oracle \(H\in\cH\) extends \(h\), meaning that
\(H(x)=h(x)\) for every \(x\in\Dom(h)\).  Define
\[
  \cH(h)=\{H\in\cH:h\preceq H\}.
\]
Two partial oracles \(h\) and \(\tau\) are \emph{consistent} if they agree on the
intersection of their domains.  For consistent \(h\) and \(\tau\), define
\[
  \cH(h,\tau)
  =
  \{H\in\cH:h\preceq H,\ \tau\preceq H\}.
\]
The notation \(H\leftarrow\Unif(\cH(h))\), and similarly
\(H\leftarrow\Unif(\cH(h,\tau))\), means that \(H\) is sampled uniformly
from the indicated set.
\end{definition}

An ordered transcript
\(\tau=((x_1,y_1),\ldots,(x_t,y_t))\) is \emph{consistent} when
\(x_i=x_j\) implies \(y_i=y_j\).  A consistent transcript defines the
partial oracle \(\tau(x_i)=y_i\).  Repeated queries may be retained in the
ordered list, while \(\Dom(\tau)\) denotes the set of distinct queried
points.  We write \(|\tau|=t\) for the length of the ordered list, including
repetitions.

\begin{definition}[Reprogrammed oracle]
For an oracle \(H\) and a consistent transcript \(\tau\) (equivalently, a
partial oracle), define
\[
  H[\tau](x)
  =
  \begin{cases}
    \tau(x),&x\in\Dom(\tau),\\
    H(x),&x\notin\Dom(\tau).
  \end{cases}
\]
For another internally consistent transcript \(\tau'\), which need not be
consistent with \(\tau\), successive reprogramming is
defined so that the second transcript overwrites the first on their common
domain:
\[
  H[\tau][\tau'](x)
  =
  \begin{cases}
    \tau'(x),&x\in\Dom(\tau'),\\
    \tau(x),&x\in\Dom(\tau)\setminus\Dom(\tau'),\\
    H(x),&x\notin\Dom(\tau)\cup\Dom(\tau').
  \end{cases}
\]
\end{definition}

\subsection{External Lemmas}

Below we recall the two prior results used in our reprogramming attack.

The first is the random reprogramming
lemma in the form used by Katz and Sela~\cite[Lemma~3]{KatzSela24}, itself a
restatement of a result by Alagic et al.~\cite[Lemma~3]{AlagicEtAl22}.

\begin{lemma}[Random reprogramming lemma]\label{lem:reprogramming}
Let \(\cD\) be a distinguisher in the following game.
\begin{enumerate}[leftmargin=2em]
  \item \(\cD\) outputs a full function
  \(F_0:\cX\to\cY\) and a possibly inefficient randomized classical
  algorithm \(C\) whose output is a
  partial function \(R\).  Let
  \[
    \varepsilon
    =
    \max_{x\in\cX}
    \Pr_{R\leftarrow C}[x\in\Dom(R)].
  \]
  \item The challenger samples \(R\leftarrow C\), sets
  \(F_1=F_0[R]\), chooses \(b\leftarrow\bits\), and gives \(\cD\) quantum
  oracle access to \(F_b\).
  \item The oracle access is removed.  The challenger reveals the random
  coins used by \(C\), and \(\cD\) outputs a bit.
\end{enumerate}
If \(\cD\) makes at most \(q\) challenge-oracle queries, then
\[
  \left|
    \Pr[\cD\text{ outputs }1\mid b=1]
    -
    \Pr[\cD\text{ outputs }1\mid b=0]
  \right|
  \le 2q\sqrt\varepsilon.
\]
\end{lemma}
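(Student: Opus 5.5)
Because the lemma is quoted from Katz and Sela~\cite[Lemma~3]{KatzSela24} (after Alagic et al.~\cite{AlagicEtAl22}), the plan is to reproduce the standard hybrid argument over the \(q\) challenge queries. The idea is to show that reprogramming changes \(\cD\)'s final state only through the query amplitude it places on \(\Dom(R)\), and that this amplitude is small on average over \(R\).

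\emph{Reductions.} First, I will put \(\cD\) in a convenient form. Everything \(\cD\) does before the challenge phase fixes \(F_0\), \(C\) and an internal state. By convexity it suffices to bound the advantage for each fixed outcome of this first stage. By purification, we may assume the internal state is a pure state \(\ket{\phi}\) and that \(\cD\) acts between queries by fixed unitaries \(V_1,\dots,V_q\) (extra queries can only help, so assume exactly \(q\)). Next, fix the coins of \(C\), hence \(R\). The final phase sees the coins, so for fixed \(R\) the distinguishing advantage is at most the trace distance between the two final states. For pure states this is at most the Euclidean distance \(\|\ket{\phi_0}-\ket{\phi_1^R}\|\). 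It therefore suffices to bound \(\mathbb{E}_R\|\ket{\phi_0}-\ket{\phi_1^R}\|\) by \(2q\sqrt{\varepsilon}\).

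\emph{Hybrids.} Let \(\ket{\psi_i}\) be the state just before the \(i\)-th query when all queries use \(F_0\); crucially, \(\ket{\psi_i}\) does not depend on \(R\). For \(i=0,\dots,q\), let hybrid \(i\) answer the first \(i\) queries with \(F_0\) and the remaining ones with \(F_1=F_0[R]\). Hybrids \(i-1\) and \(i\) differ only at query \(i\), where one applies \(U_{F_0}\) and the other \(U_{F_1}\) to the same state \(\ket{\psi_i}\); the subsequent unitaries are identical and preserve norms. Let \(P_R\) be the projector onto \(\{\ket{x}_X : x\in\Dom(R)\}\). The operators \(U_{F_0}\) and \(U_{F_1}\) agree outside \(\Dom(R)\), so
\[
  \bigl\|(U_{F_1}-U_{F_0})\ket{\psi_i}\bigr\|
  =\bigl\|(U_{F_1}-U_{F_0})P_R\ket{\psi_i}\bigr\|
  \le 2\|P_R\ket{\psi_i}\|.
\]
The triangle inequality then gives \(\|\ket{\phi_0}-\ket{\phi_1^R}\|\le 2\sum_{i}\|P_R\ket{\psi_i}\|\).

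\emph{Averaging over \(R\).} Jensen's inequality gives \(\mathbb{E}_R\|P_R\ket{\psi_i}\|\le\sqrt{\mathbb{E}_R\|P_R\ket{\psi_i}\|^2}\). Writing \(w_x\) for the squared weight of \(\ket{\psi_i}\) on query input \(x\), we have \(\sum_x w_x=1\). Since \(\ket{\psi_i}\) is independent of \(R\),
\[
  \mathbb{E}_R\|P_R\ket{\psi_i}\|^2=\sum_x w_x\Pr_R[x\in\Dom(R)]\le\varepsilon.
\]
Summing over \(i\) yields \(2q\sqrt\varepsilon\).

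The main obstacle is the bookkeeping that makes this averaging legitimate. Each hybrid state \(\ket{\psi_i}\) must be defined with \(F_0\)-answers only, so that it is independent of \(R\). The revelation of \(C\)'s coins must also be handled by conditioning on \(R\) rather than by tracing it out, since \(\cD\) may use the coins in its final decision. Once the hybrids are ordered so that all queries before the switch are answered by \(F_0\), the rest is routine.
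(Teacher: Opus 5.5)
Your proposal is correct. It is the standard query-by-query hybrid behind the lemma the paper imports from Katz--Sela and Alagic et al.\ without reproving: intermediate states are defined using $F_0$ only, so they are independent of $R$; each switched query is bounded by $\|(U_{F_1}-U_{F_0})\ket{\psi_i}\|\le 2\|P_R\ket{\psi_i}\|$; and Jensen's inequality over $R$ gives $\sqrt{\varepsilon}$. Bounding the advantage pointwise for each fixed choice of $C$'s coins before averaging correctly accounts for the coins being revealed in step~3 of the game.
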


The restriction that \(C\) is classical is deliberate.  Its only role is to
sample the classical partial function \(R\), and the game later reveals the
random coins used for that sample.  In our applications, \(C\) samples a
classical query transcript, while \(\cD\) separately prepares any residual
quantum state conditioned on the revealed transcript.  Thus \(C\) never
simulates Alice's quantum state.  Allowing \(C\) itself to retain quantum
state would also require a different game interface, since revealing
classical random coins would no longer specify its complete internal state.

The second is the following heavy-query counting lemma of Austrin et
al.~\cite[Lemma~3.4]{AustrinEtAl22}, in the form recalled by Katz and
Sela~\cite[Lemma~4]{KatzSela24}.

\begin{lemma}[Heavy-query bound]\label{lem:heavy-bound}
Let \(L,z_1,x_1,\ldots,z_q,x_q\) be correlated random variables, where
\(L\) ranges over subsets of a universe \(\cU\), each \(x_i\in\cU\), and
the \(x_i\)'s are pairwise distinct.  Conditioned on a history
\(z_1,x_1,\ldots,z_i\), call \(x_i\) \(\delta\)-heavy if
\[
  \Pr[x_i\in L\mid z_1,x_1,\ldots,z_i]\ge\delta.
\]
Let \(S\) be the set of \(\delta\)-heavy points among the \(x_i\)'s.  Then
\[
  \mathbb{E}[|S|]\le\frac{\mathbb{E}[|L|]}{\delta}.
\]
\end{lemma}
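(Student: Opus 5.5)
The plan is to bound \(|S|\) by a sum of indicators, dominate each indicator by a conditional probability, and then use that the \(x_i\) are pairwise distinct. Write \(\mathcal F_i\) for the history \(z_1,x_1,\ldots,z_i\). For a fixed point \(x\in\cU\), set \(p_i(x):=\Pr[x\in L\mid \mathcal F_i]\). I read ``\(x_i\) is \(\delta\)-heavy'' as \(p_i(x_i)\ge\delta\).

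I will use this reading together with the standard assumption that \(x_i\) is chosen from the history \(\mathcal F_i\), possibly with fresh randomness independent of \(L\). This is exactly how the lemma is used: the learner picks its next query from what it has seen. Under this assumption,
\[
  \Pr[x_i\in L\mid \mathcal F_i,x_i]=p_i(x_i).
\]
This conditional-independence step is the only delicate point, and I expect it to be the main obstacle. Without it, heaviness could correlate adversarially with membership in \(L\). If needed, I would absorb any private randomness used to pick \(x_i\) into \(z_i\), which makes \(x_i\) a deterministic function of \(\mathcal F_i\).

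Next I use the pointwise bound \(\mathbf 1[p_i(x_i)\ge\delta]\le p_i(x_i)/\delta\). Summing over \(i\) and taking expectations gives
\[
  \mathbb E[|S|]
  =\sum_{i=1}^q \mathbb E\bigl[\mathbf 1[p_i(x_i)\ge\delta]\bigr]
  \le\frac1\delta\sum_{i=1}^q \mathbb E[p_i(x_i)] .
\]
By the tower property and the identity above, \(\mathbb E[p_i(x_i)]=\Pr[x_i\in L]\). Therefore
\[
  \mathbb E[|S|]
  \le\frac1\delta\,\mathbb E\Bigl[\sum_{i=1}^q\mathbf 1[x_i\in L]\Bigr].
\]

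Finally, since the \(x_i\) are pairwise distinct, \(\sum_i\mathbf 1[x_i\in L]=|\{x_1,\ldots,x_q\}\cap L|\le|L|\) pointwise. This yields \(\mathbb E[|S|]\le\mathbb E[|L|]/\delta\). The argument is a routine linearity-of-expectation computation once the conditioning convention is fixed.
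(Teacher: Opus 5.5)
The paper gives no proof of this lemma. It imports it from Austrin et al.~\cite[Lemma~3.4]{AustrinEtAl22} in the form recalled by Katz and Sela~\cite[Lemma~4]{KatzSela24}. Your argument is the standard one: the pointwise bound \(\mathbf 1[p\ge\delta]\le p/\delta\), then the tower property, then distinctness. It is correct under the hypothesis you add, and that hypothesis holds in the paper's only use of the lemma. In the proof of \Cref{lem:learn-success}, each \(x_i\) is either the \(\prec\)-least heavy point determined by \((M_1,h_{i-1})\) or a fixed padding symbol, so it is a deterministic function of the history.

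The hypothesis, however, comes from your pointwise reading, not from the statement.

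\emph{Literal reading.} In the displayed condition, \(x_i\) is not in the conditioning. So heaviness at step \(i\) is the history-measurable number \(r_i:=\Pr[x_i\in L\mid\mathcal F_i]\). Then \(\mathbb E[r_i]=\Pr[x_i\in L]\) holds by the tower property with no assumption, and your chain of inequalities goes through unchanged.

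\emph{Your pointwise reading.} Here heaviness means \(p_i(x_i)\ge\delta\), and \(x_i\) may depend on \(L\) arbitrarily. Your intermediate inequality \(\mathbb E|S|\le\delta^{-1}\mathbb E[\#\{i:x_i\in L\}]\) really can fail. Take \(q=1\), \(L=\{Y\}\) with \(Y\) uniform on two points, \(x_1\) the other point, and \(\delta=1/2\). Then \(p_1(x_1)=1/2\), so \(|S|=1\), yet \(x_1\notin L\) always. The lemma itself still holds, by a maximal-inequality argument:
\begin{enumerate}[leftmargin=2em]
  \item For each fixed \(x\in\cU\), the sequence \((p_i(x))_{i\le q}\) is a nonnegative martingale along the increasing histories.
  \item Doob's maximal inequality gives \(\Pr[\max_i p_i(x)\ge\delta]\le\Pr[x\in L]/\delta\).
  \item Since the \(x_i\) are distinct, \(|S|\le|\{x:\max_i p_i(x)\ge\delta\}|\).
  \item Summing over \(x\in\cU\) gives \(\mathbb E|S|\le\mathbb E|L|/\delta\).
\end{enumerate}

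So the adversarial correlation you worried about defeats the linearity-of-expectation step, not the lemma. The martingale route gives the lemma in full generality under either reading. Your route is simpler, and it suffices for the paper, either with your measurability assumption or with no assumption under the literal reading.
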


\subsection{Public Key Cryptography with Quantum Parties}\label{sec:model}

We now define the protocols used throughout
the paper. We note that the two-message model in \Cref{def:protocol} is close to the CC1QM
model of Bouaziz-Ermann et al.~\cite{BouazizEtAl24}, with one additional
restriction: before sending her first message, Alice has only classical
oracle access.

\subsubsection{Two-Message Key Agreement}

\begin{definition}[Two-message key agreement with a classical first message]
\label{def:protocol}
A two-message key-agreement protocol with a classical first message and a
possibly quantum response is a tuple of oracle quantum algorithms \(\Pi=(\cA_1,\cB,\cA_2)\) with the
following syntax relative to \(\cH_\lambda\). 
\begin{itemize}[leftmargin=2em]
  \item \(\cA_1^H(1^\lambda)\) has classical oracle access and outputs
  \((T_{\cA},M_1,S_{\cA})\).  Here \(M_1\) is a classical first message,
  \(S_{\cA}\) is Alice's residual register, and \(T_{\cA}\) is the complete
  ordered classical query-answer transcript of \(\cA_1\).
  \item \(\cB^H(1^\lambda,M_1)\) has quantum oracle access and outputs a
  possibly quantum message register \(\Psi\) and a classical key
  \(K_{\cB}\in\cK_\lambda\cup\{\bot\}\).
  \item \(\cA_2^H(1^\lambda,S_{\cA},\Psi)\) has quantum oracle access and
  outputs a classical key \(K_{\cA}\in\cK_\lambda\cup\{\bot\}\).
\end{itemize}
Other than access to the common random oracle, Alice and Bob initially share
no state or correlated randomness.  We assume worst-case bounds
\[
  |T_{\cA}|\le q_{\cA,1},
  \qquad
  q(\cB)\le q_{\cB},
  \qquad
  q(\cA_2)\le q_{\cA,2}.
\] 
The protocol is the following experiment, denoted by \(\Real_\Pi(1^\lambda)\):
\begin{enumerate}[leftmargin=2em]
  \item Sample \(H\leftarrow\cH_\lambda\).
  \item Run
  \((T_{\cA},M_1,S_{\cA})\leftarrow\cA_1^H(1^\lambda)\), and send the
  classical message \(M_1\) to Bob.
  \item Run
  \((\Psi,K_{\cB})\leftarrow\cB^H(1^\lambda,M_1)\), and send the
  message register \(\Psi\) to Alice.
  \item Run
  \(K_{\cA}\leftarrow\cA_2^H(1^\lambda,S_{\cA},\Psi)\).
\end{enumerate}
The valid agreement probability is
\[
  \alpha_\Pi(\lambda)
  :=
  \Pr_{\Real_\Pi(1^\lambda)}
  [K_{\cA}=K_{\cB}\neq\bot].
\]
We use the following notions of completeness and security for such a protocol:

\begin{enumerate}[leftmargin=2em,label=\textup{(\roman*)}]
  \item \textbf{Completeness.}  The protocol has completeness error
  \(\epsilon_c(\lambda)\) if
  \[
    \alpha_\Pi(\lambda)\ge 1-\epsilon_c(\lambda).
  \]
  We say the protocol is \emph{correct} if \(\epsilon_c\) is negligible.
  \item \textbf{Security.}  The protocol is interception key-recovery secure,
  as defined in \Cref{def:intkr}, if, for every polynomially query-bounded attacker
  \(\cE\),
  \begin{equation}
    \Adv^{\IntKR}_{\Pi,\cE}(\lambda) \leq \negl(\lambda).
  \end{equation}

\end{enumerate}
\end{definition}

\begin{definition}[Intercepting attacker]
An \emph{intercepting-attacker} is a pair
\(\cE=(\cE_{1},\cE_{2})\). Its first stage receives
Alice's classical message, may query the oracle before Bob generates his
response, and outputs a possibly quantum state \(Z_{\cE}\). Its second stage
receives \(Z_{\cE}\) together with Bob's message register and outputs a guess for the
classical key.
\end{definition}

\begin{definition}[Interception key-recovery experiment]\label{def:intkr}
Let \(\cE=(\cE_{1},\cE_{2})\) be any polynomially
query-bounded intercepting-attacker: its computation may be
unbounded, but its two stages together make at most \(\poly(\lambda)\)
random-oracle queries.  The experiment
\(\IntKR_{\Pi,\cE}(1^\lambda)\) is:
\begin{enumerate}[leftmargin=2em]
  \item Sample \(H\leftarrow\cH_\lambda\).
  \item Run
  \((T_{\cA},M_1,S_{\cA})\leftarrow\cA_1^H(1^\lambda)\) and reveal
  \(M_1\) to \(\cE_{1}\).
  \item Run
  \(Z_{\cE}\leftarrow\cE_{1}^H(1^\lambda,M_1)\).
  \item Run
  \((\Psi,K_{\cB})\leftarrow\cB^H(1^\lambda,M_1)\) and transfer
  \(\Psi\) to \(\cE_{2}\).
  \item Run
  \(K^*\leftarrow\cE_{2}^H(1^\lambda,Z_{\cE},\Psi)\).
  The experiment outputs \(1\) iff \(K^*=K_{\cB}\neq\bot\).
\end{enumerate}
Define
\[
  \Adv^{\IntKR}_{\Pi,\cE}(\lambda)
  :=
  \Pr[\IntKR_{\Pi,\cE}(1^\lambda)=1].
\]
\end{definition}

\subsubsection{Multi-Round Key Agreement with a Classical-Query Prefix}

We also consider protocols with an arbitrary classical-communication prefix.
During the prefix, both parties query the oracle only
classically, although their other computations and residual states may be
quantum. In the final round, Bob sends a possibly quantum message, after
which Alice performs her final computation. This communication pattern is similar to
the CC1QM pattern of Bouaziz-Ermann et
al.~\cite[Definition~11]{BouazizEtAl24}, except that we additionally require classical
oracle access during the prefix.

\begin{definition}[Multi-round key agreement with a classical-query prefix]\label{def:multi-protocol}

A key-agreement protocol with a classical prefix and one final possibly
quantum message is a tuple of oracle quantum algorithms
\[
  \Pi^{\mathsf{mr}}
  =
  \bigl((\cA_i)_{i=1}^{r},(\cB_i)_{i=1}^{r-1},
    \cB_{\mathsf q},\cA_{\mathsf{fin}}\bigr).
\]
Here \(r\ge1\) is the number of messages that Alice sends during the
classical prefix. To simplify the indexing, we assume that Alice sends both
the first and last prefix messages.
The honest experiment, denoted
\(\Real_{\Pi^{\mathsf{mr}}}(1^\lambda)\), is:
\begin{enumerate}[leftmargin=2em]
  \item Sample \(H\leftarrow\cH_\lambda\). Alice and Bob begin in an
  oracle-independent product state
  \(\rho_{\cA,0}\otimes\rho_{\cB,0}\).
  \item For \(i=1,\ldots,r\), Alice runs \(\cA_i\) on her current residual
  register and the public transcript so far, makes only classical oracle
  queries, sends a classical message \(M_{2i-1}\), and retains her new
  residual register. If \(i<r\), Bob proceeds analogously: he runs
  \(\cB_i\), makes only classical oracle queries, sends a classical message
  \(M_{2i}\), and retains his new residual register.
  \item Let
  \[
    M=(M_1,\ldots,M_{2r-1})
  \]
  be the complete classical prefix transcript. Let \(T_{\cA}\) and
  \(T_{\cB}\) be the complete ordered query-answer transcripts accumulated by
  Alice and Bob during the prefix, and let \(S_{\cA}\) and \(S_{\cB}\) be
  their residual quantum registers at its end.
  \item Bob runs
  \[
    (\Psi,K_{\cB})
    \leftarrow
    \cB_{\mathsf q}^{H}
    (1^\lambda,M,S_{\cB}),
  \]
  where \(\Psi\) is a possibly quantum message register and
  \(K_{\cB}\in\cK_\lambda\cup\{\bot\}\) is classical.
  \item After receiving \(\Psi\), Alice runs
  \[
    K_{\cA}
    \leftarrow
    \cA_{\mathsf{fin}}^{H}
    (1^\lambda,M,S_{\cA},\Psi).
  \]
  Here \(K_{\cA}\in\cK_\lambda\cup\{\bot\}\) is classical.
\end{enumerate}
Write
\[
  q_{\cA,\mathsf{pre}}
  :=\max |T_{\cA}|,
  \qquad
  q_{\cB,\mathsf{pre}}
  :=\max |T_{\cB}|,
\]
and let \(q_{\cB,\mathsf q}\) and
\(q_{\cA,\mathsf{fin}}\) bound the quantum-oracle queries made by
\(\cB_{\mathsf q}\) and \(\cA_{\mathsf{fin}}\), respectively. The valid
agreement probability is
\[
  \alpha_{\Pi^{\mathsf{mr}}}(\lambda)
  :=
  \Pr_{\Real_{\Pi^{\mathsf{mr}}}(1^\lambda)}
  [K_{\cA}=K_{\cB}\neq\bot].
\]
We use the following notions of completeness and security for such a protocol:
\begin{enumerate}[leftmargin=2em,label=\textup{(\roman*)}]
  \item \textbf{Completeness.}  The protocol has completeness error
  \(\epsilon_c(\lambda)\) if
  \[
    \alpha_{\Pi^{\mathsf{mr}}}(\lambda)
    \ge 1-\epsilon_c(\lambda).
  \]
  We say the protocol is \emph{correct} if \(\epsilon_c\) is negligible.
  \item \textbf{Security.}  The protocol is online interception key-recovery
  secure, as defined in \Cref{def:multi-intkr}, if, for every
  computationally unbounded attacker \(\cE\) making at most \(\poly(\lambda)\)
  random-oracle queries, restricted to classical queries during the prefix
  but allowed quantum queries after receiving \(\Psi\),
  \begin{equation}
    \Adv^{\IntKR,\mathsf{mr}}_{\Pi^{\mathsf{mr}},\cE}(\lambda)
    \leq \negl(\lambda).
  \end{equation}
\end{enumerate}
\end{definition}

\begin{definition}[Online interception experiment]
\label{def:multi-intkr}
The experiment
\(\IntKR^{\mathsf{mr}}_{\Pi^{\mathsf{mr}},\cE}(1^\lambda)\) proceeds as
follows.
\begin{enumerate}[leftmargin=2em]
  \item Sample \(H\leftarrow\cH_\lambda\) and initialize the parties and Eve.
  \item Run the honest classical prefix. After each public message, activate
  Eve on the transcript so far and its retained state; during these
  activations Eve may make classical oracle queries and update its state.
  \item Run \(\cB_{\mathsf q}^H\) on Bob's residual register. Transfer its
  message register \(\Psi\) to Eve and retain Bob's key \(K_{\cB}\).
  \item Eve may make quantum oracle queries and outputs a classical key guess
  \(K^*\). The experiment outputs \(1\) iff
  \(K^*=K_{\cB}\neq\bot\).
\end{enumerate}
Define
\[
  \Adv^{\IntKR,\mathsf{mr}}_{\Pi^{\mathsf{mr}},\cE}(\lambda)
  :=
  \Pr[\IntKR^{\mathsf{mr}}_{\Pi^{\mathsf{mr}},\cE}(1^\lambda)=1].
\]
\end{definition}

\subsubsection{Quantum Public-Key Encryption}

We consider classical public and secret keys, classical messages whose
length is bounded by a polynomial in \(\lambda\), and possibly quantum encryption,
decryption, and ciphertexts. We allow imperfect
correctness but require \(\Gen\) to query the random oracle only classically.
Our security experiment is the corresponding query-bounded IND-CPA
experiment: the attacker may perform unbounded computation but is charged
for its oracle queries. When \(\Gen\), \(\Dec\), and the ciphertext are
classical but \(\Enc\) may be quantum, the \(\mu(\lambda)=1\) case
is exactly the PKE syntax used by Bartusek and
Khurana~\cite[Definition~7.8]{BartusekKhurana24}.

\begin{definition}[QPKE with classical-query key generation]
\label{def:qpke}
Let \(\mu:\mathbb{N}\to\mathbb{N}\) be any polynomial-time computable
function satisfying \(1\le\mu(\lambda)\le p(\lambda)\) for some polynomial
\(p\), and let
\(\cM_\lambda=\bits^{\mu(\lambda)}\). A quantum public-key encryption scheme
\(\Sigma=(\Gen,\Enc,\Dec)\) for the message spaces
\(\{\cM_\lambda\}_{\lambda\in\mathbb N}\), in the QROM with oracle family
\(\cH_\lambda\), has the following syntax.
\begin{itemize}[leftmargin=2em]
  \item \((\pk,\sk)\leftarrow\Gen^H(1^\lambda)\) outputs a classical
  public key and secret key.  The algorithm has classical random-oracle
  access, and its complete ordered query-answer transcript can be recorded as
  \(T_{\Gen}\).
  \item \(C\leftarrow\Enc^H(1^\lambda,\pk,m)\), for
  \(m\in\cM_\lambda\), uses quantum random-oracle access and outputs a
  possibly quantum ciphertext register \(C\).
  \item \(\widehat m\leftarrow\Dec^H(1^\lambda,\sk,C)\) uses quantum
  random-oracle access and outputs
  \(\widehat m\in\cM_\lambda\cup\{\bot\}\).
\end{itemize}
Write \(q_{\Gen}\), \(q_{\Enc}\), and \(q_{\Dec}\) for the respective
worst-case query bounds. Define the minimum correct-decryption probability
\begin{align*}
  \alpha_\Sigma(\lambda)
  :=\min_{m\in\cM_\lambda}
  \Pr\bigl[&\Dec^H(1^\lambda,\sk,C)=m:\ \\
    &H\leftarrow\cH_\lambda,\
    (\pk,\sk)\leftarrow\Gen^H(1^\lambda),\
    C\leftarrow\Enc^H(1^\lambda,\pk,m)\bigr].
\end{align*}
The scheme has correctness error \(\epsilon_c(\lambda)\) if
\(\alpha_\Sigma(\lambda)\ge1-\epsilon_c(\lambda)\).
\end{definition}

\begin{definition}[Query-bounded IND-CPA security]\label{def:qpke-ind}
For a two-stage attacker \(\cD=(\cD_0,\cD_1)\), the experiment
\(\IND_{\Sigma,\cD}(1^\lambda)\) is:
\begin{enumerate}[leftmargin=2em]
  \item Sample \(H\leftarrow\cH_\lambda\) and
  \((\pk,\sk)\leftarrow\Gen^H(1^\lambda)\).
  \item Run
  \((Z_{\cD},m_0,m_1)\leftarrow\cD_0^H(1^\lambda,\pk)\), where
  \(Z_{\cD}\) is a possibly quantum state and
  \(m_0,m_1\in\cM_\lambda\).
  \item Sample \(b\leftarrow\bits\), run
  \(C\leftarrow\Enc^H(1^\lambda,\pk,m_b)\), and transfer \(C\) to
  \(\cD_1\).
  \item Run \(b'\leftarrow\cD_1^H(1^\lambda,Z_{\cD},C)\).  The experiment
  outputs \(1\) iff \(b'=b\).
\end{enumerate}
Define
\[
  \Adv^{\IND\text{-}\mathrm{CPA}}_{\Sigma,\cD}(\lambda)
  :=
  \left|\Pr[\IND_{\Sigma,\cD}(1^\lambda)=1]-\frac12\right|.
\]
The scheme is query-bounded IND-CPA secure if this advantage is negligible
for every computationally unbounded attacker making at most
\(\poly(\lambda)\) random-oracle queries. This differs from standard
computational IND-CPA only in the attacker's resource bound. Because
encryption is public, the experiment does not need a separate encryption
oracle.
\end{definition}

\section{Conditional Resampling and Heavy-Query Learning}\label{sec:resampling}

\subsection{The Heavy-Query Attacker}

The following procedure is the \textsc{findTranscript} heavy-query attacker
of Katz and Sela~\cite[Section~3 and Lemma~5]{KatzSela24} adapted to our setting. Its cutoff analysis is precisely
their application of the heavy-query counting bound of Austrin et
al.~\cite{AustrinEtAl22}.

Throughout this subsection, probabilities are taken over
\[
  H\leftarrow\cH,
  \qquad
  (T_{\cA},M_1,S_{\cA})\leftarrow\cA_1^H(1^\lambda).
\]
For a first-message value \(m_1\) and a partial oracle \(h\), let
\[
  \mathsf E_{m_1,h}:=\{M_1=m_1,\ h\preceq H\}.
\]
Whenever \(\Pr[\mathsf E_{m_1,h}]>0\), define
\begin{equation}\label{eq:p-heavy}
  p_{m_1,h}(x)
  :=
  \Pr[x\in\Dom(T_{\cA})\mid\mathsf E_{m_1,h}].
\end{equation}
Define
\[
  \Heavy_\delta(m_1,h)
  :=
  \{x\in\cX\setminus\Dom(h):p_{m_1,h}(x)>\delta\}.
\]
Fix a total order \(\prec\) on \(\cX\).

\begin{algorithm}[H]
\caption{\(\Learn_\delta^H(1^\lambda,m_1)\)}
\label{alg:learn}
\KwInput{Security parameter \(1^\lambda\), first message \(m_1\), oracle
access to \(H\), threshold \(\delta\), and bound \(q_{\cA,1}\).}
\KwOutput{A partial oracle \(h\) or \(\bot\).}
\(N_{\max}\gets\ceil{q_{\cA,1}/\delta^2}\); \(h\gets\emptyset\)\;
\While{\(\Heavy_\delta(m_1,h)\neq\emptyset\)}{
  \If{\(|\Dom(h)|\ge N_{\max}\)}{\Return \(\bot\)\;}
  Let \(x\) be the \(\prec\)-least point in \(\Heavy_\delta(m_1,h)\)\;
  Query \(H(x)\) and set \(h\gets h\cup\{x\mapsto H(x)\}\)\;
}
\Return \(h\)\;
\end{algorithm}

A pair \((m_1,h)\) is \emph{reachable} if
\[
  \Pr[M_1=m_1,\ \Learn_\delta^H(1^\lambda,m_1)=h]>0.
\]

As in the Katz-Sela attacker, fixing the order \(\prec\) makes every next
query a deterministic function of \((m_1,h)\). Hence, for every reachable
pair \((m_1,h)\) and every oracle \(H\),
\[
  \Learn_\delta^H(1^\lambda,m_1)=h
  \quad\Longleftrightarrow\quad
  h\preceq H.
\]
We refer to this equivalence as the \emph{deterministic-output property} of
\(\Learn\).

\subsection{The Conditioned View of the First Round}

For every reachable pair \((m_1,h)\), let \(\omega_{m_1,h}\)
denote the conditional state of Alice's view after the first round
\((T_{\cA},S_{\cA})\) given \(\mathsf E_{m_1,h}\).  Write
\[
  \mu_{m_1,h}(\tau)
  =
  \Pr[T_{\cA}=\tau\mid\mathsf E_{m_1,h}]
\]
for its transcript marginal.  Explicitly,
for every \(\tau\) with \(\mu_{m_1,h}(\tau)>0\), let
\(\sigma_{m_1,\tau}\) be the corresponding conditional state of
\(S_{\cA}\) given \((M_1,T_{\cA})=(m_1,\tau)\). Then
\[
  \omega_{m_1,h}
  =
  \sum_{\substack{\tau:\\
        \mu_{m_1,h}(\tau)>0}}
  \mu_{m_1,h}(\tau)
  \proj{\tau}_{T_{\cA}}\otimes\sigma_{m_1,\tau}.
\]
The notation
\((T'_{\cA},S'_{\cA})\leftarrow\omega_{m_1,h}\) means that we prepare a new
instance of this state: sample \(\tau\leftarrow\mu_{m_1,h}\), place \(\tau\)
in \(T'_{\cA}\), and prepare \(S'_{\cA}\) in the state
\(\sigma_{m_1,\tau}\). By the deterministic-output property, the events
\(\mathsf E_{m_1,h}\) and
\(\{M_1=m_1,\Learn_\delta^H(1^\lambda,m_1)=h\}\) coincide for every
reachable pair.

\begin{lemma}[Katz-Sela heavy-query attacker]\label{lem:learn-success}
For every \(0<\delta<1\), with probability at least \(1-\delta\) over the
real execution of the first round, \(\Learn_\delta^H(1^\lambda,M_1)\) outputs a
partial oracle \(h\neq\bot\).  On every nonabort execution,
\[
  |\Dom(h)|\le\ceil{q_{\cA,1}/\delta^2}
\]
and, for every \(x\notin\Dom(h)\),
\[
  \Pr_{(T'_{\cA},S'_{\cA})\leftarrow\omega_{M_1,h}}
  [x\in\Dom(T'_{\cA})]
  \le\delta.
\]
\end{lemma}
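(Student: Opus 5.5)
The plan is to handle the three claims in increasing order of difficulty.

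\emph{Size bound.} This is immediate from the cutoff in \Cref{alg:learn}. The loop queries a new point only when \(|\Dom(h)|<N_{\max}\), and each iteration adds exactly one point, since heavy points lie outside \(\Dom(h)\). Hence any nonabort output satisfies \(|\Dom(h)|\le N_{\max}=\ceil{q_{\cA,1}/\delta^2}\).

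\emph{Lightness of the remaining points.} On a nonabort execution the loop exits only when \(\Heavy_\delta(M_1,h)=\emptyset\). So every \(x\notin\Dom(h)\) has \(p_{M_1,h}(x)\le\delta\). By the deterministic-output property, \(\mathsf E_{m_1,h}\) coincides with the event that the real execution produces \((m_1,h)\). By definition, \(\omega_{m_1,h}\) is the conditional law of \((T_\cA,S_\cA)\) given \(\mathsf E_{m_1,h}\), so its transcript marginal \(\mu_{m_1,h}\) assigns exactly probability \(p_{m_1,h}(x)\) to the event \(x\in\Dom(T'_\cA)\). This gives the claimed bound \(\delta\).

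\emph{Abort probability.} This is the main step, and I would prove it by applying \Cref{lem:heavy-bound}, following the Katz--Sela analysis. Run \(\Learn\) in the real first-round execution without the cutoff, stopping either when no heavy point remains or after \(N_{\max}\) queries. Set:
\begin{itemize}[leftmargin=2em]
  \item the universe \(\cU=\cX\);
  \item \(L=\Dom(T_\cA)\), so that \(|L|\le q_{\cA,1}\) always;
  \item \(z_1=M_1\), and for \(i\ge 2\), \(z_i=H(x_{i-1})\);
  \item \(x_i\) equal to the \(i\)-th queried point.
\end{itemize}
The \(x_i\) are pairwise distinct because heavy points exclude \(\Dom(h)\). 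Given the history \(z_1,x_1,\dots,z_i\), the current partial oracle \(h\) is determined, and conditioning on the history is the same as conditioning on \(\mathsf E_{M_1,h}\). By the selection rule, \(\Pr[x_i\in L\mid\text{history}]=p_{M_1,h}(x_i)>\delta\), so every queried point is \(\delta\)-heavy.

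\Cref{lem:heavy-bound} then gives \(\mathbb E[\#\text{queries}]\le q_{\cA,1}/\delta\). Abort requires at least \(N_{\max}\ge q_{\cA,1}/\delta^2\) queries, so Markov's inequality bounds the abort probability by \(\delta\). If \(q_{\cA,1}=0\), nothing is ever heavy and there is nothing to prove.

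The main obstacle is formalizing the identification of "conditioning on the history" with "conditioning on \(\mathsf E_{M_1,h}\)". The variable count is random, so I would pad the sequence with dummy, non-heavy distinct points after termination, or truncate it at \(N_{\max}\). I also need to check that the history is a deterministic function of \((M_1,h)\). This follows from the fixed order \(\prec\), which makes each next query determined by \((M_1,h)\).
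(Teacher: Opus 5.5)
Your proposal is correct and follows essentially the same route as the paper. The size and lightness claims are read off from the cutoff and the termination condition, and the abort bound applies \Cref{lem:heavy-bound} with \(z_1=M_1\), \(z_{i+1}=H(x_i)\), \(L=\Dom(T_{\cA})\) and padding to length \(N_{\max}\), followed by Markov's inequality. One small detail to tidy: the dummy padding points must lie outside \(\cX\), so the universe should be \(\cX\) enlarged by those symbols, as in the paper, rather than \(\cU=\cX\) itself.
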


\begin{proof}
The argument is a restatement, in our notation, of the proof of Katz and
Sela~\cite[Lemma~5]{KatzSela24}; we include it explicitly for completeness.
The residual register \(S_{\cA}\) plays no role in the counting argument and
is simply prepared together with its sampled transcript.

The second claim is immediate upon successful termination:
\(\Heavy_\delta(M_1,h)=\emptyset\), and the displayed probability is exactly
\(p_{M_1,h}(x)\).

We now bound the abort probability. If \(q_{\cA,1}=0\), the attacker
returns the empty partial oracle. Assume \(q_{\cA,1}>0\), and let
\(N_{\max}=\ceil{q_{\cA,1}/\delta^2}\). Let
\(L=\Dom(T_{\cA})\). Let \(N\le N_{\max}\) be the number of real queries
issued by the attacker before it returns or aborts. On abort,
\(N=N_{\max}\); successful termination with \(N=N_{\max}\) is also allowed.

To apply \Cref{lem:heavy-bound} with a fixed-length sequence, set
\(z_1=M_1\). For each real attacker step \(i\le N\), let \(x_i\) be its
query and let \(z_{i+1}=H(x_i)\). For \(N<i\le N_{\max}\), pad with pairwise
distinct symbols \(x_i=\bot_i\notin\cX\) and fixed values
\(z_{i+1}=0\). Work over the enlarged universe
\(\cX\sqcup\{\bot_1,\ldots,\bot_{N_{\max}}\}\). The points are pairwise
distinct because the attacker never repeats a query.

Immediately before a real query \(x_i\), the history determines the current
partial oracle \(h_{i-1}\). Hence
\[
  \Pr[x_i\in L\mid z_1,x_1,\ldots,z_i]
  =p_{M_1,h_{i-1}}(x_i)>\delta.
\]
Every real attacker query is therefore \(\delta\)-heavy, while every padding
point has conditional membership probability zero. The heavy set in
\Cref{lem:heavy-bound} has size exactly \(N\), and so
\[
  \mathbb{E}[N]
  \le
  \frac{\mathbb{E}[|L|]}{\delta}
  \le
  \frac{q_{\cA,1}}{\delta}.
\]
By Markov's inequality,
\[
  \Pr[N\ge N_{\max}]
  \le
  \frac{q_{\cA,1}}{\delta N_{\max}}
  \le\delta.
\]
Since abort implies \(N=N_{\max}\),
\[
  \Pr[\Learn_\delta^H(1^\lambda,M_1)=\bot]
  \le \Pr[N\ge N_{\max}]
  \le \delta.
\]
\end{proof}

\section{Reprogramming Attacks}\label{sec:main-attack}

\subsection{The Two-Message attacker}

\begin{definition}[Eve's two-stage reprogramming attack]\label{def:eve}
Fix \(0<\delta<1\).  The attacker
\(\cE_\delta=(\cE_{\delta,1},
\cE_{\delta,2})\) is defined as follows.

On input the first message \(m_1\), the preprocessing stage runs
\[
  h\leftarrow\Learn_\delta^H(1^\lambda,m_1).
\]
If \(h=\bot\), it outputs \(Z_{\cE}=\bot\).  Otherwise it prepares
\[
  (T'_{\cA},S'_{\cA})\leftarrow\omega_{m_1,h}
\]
and outputs \(Z_{\cE}=(m_1,h,T'_{\cA},S'_{\cA})\).

After receiving Bob's message register \(\Psi\), the postprocessing stage
outputs \(\bot\) if \(Z_{\cE}=\bot\).  Otherwise it defines the fixed
reprogrammed oracle \(H'=H[T'_{\cA}]\), runs
\[
  K^*\leftarrow
  \cA_2^{H'}(1^\lambda,S'_{\cA},\Psi),
\]
and outputs \(K^*\).
\end{definition}

The preprocessing stage makes at most
\(\ceil{q_{\cA,1}/\delta^2}\) classical queries. Each query to
\(H[T'_{\cA}]\) can be implemented using one query to \(H\), a reversible
lookup in the known transcript, and a \(\ket{+}^{\otimes\ell}\) ancilla; see
the proof of \Cref{lem:g3-g4}. The postprocessing stage therefore makes at
most \(q_{\cA,2}\) quantum queries.

\begin{theorem}\label{thm:main}
Let \(\Pi=(\cA_1,\cB,\cA_2)\) be a protocol as in
\Cref{def:protocol}, and let
\(\alpha=\alpha_\Pi(\lambda)\).  For every \(0<\delta<1\), the attacker
\(\cE_\delta\) of \Cref{def:eve} satisfies
\[
  \Adv^{\IntKR}_{\Pi,\cE_\delta}(\lambda)
  \ge
  \alpha
  -\delta
  -2(2q_{\cB}+q_{\cA,2})\sqrt\delta.
\]
It makes at most
\[
  \ceil{q_{\cA,1}/\delta^2}+q_{\cA,2}
\]
random-oracle queries.
\end{theorem}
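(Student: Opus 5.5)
We will prove the bound by a hybrid argument that starts from the real execution and ends with Eve's success event. The intermediate games are indexed by the oracles given to Bob and to Alice's final stage.

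\textbf{Game 0} is \(\Real_\Pi\), where the event \(K_{\cA}=K_{\cB}\neq\bot\) has probability \(\alpha\).

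\textbf{Game 1.} We additionally run \(\Learn_\delta^H(1^\lambda,M_1)\) and declare failure if it returns \(\bot\). By \Cref{lem:learn-success} this costs at most \(\delta\).

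\textbf{Game 2.} On a nonabort run, we replace Alice's genuine first-round view \((T_{\cA},S_{\cA})\) by a fresh sample \((T'_{\cA},S'_{\cA})\leftarrow\omega_{M_1,h}\). We also replace the oracle by a fresh \(\widetilde H\leftarrow\Unif(\cH(h,T'_{\cA}))\), and this \(\widetilde H\) is used by Bob and by \(\cA_2\). The key claim is that the joint distribution of \((M_1,h,T_{\cA},S_{\cA},H)\) in Game 1 equals that of \((M_1,h,T'_{\cA},S'_{\cA},\widetilde H)\) in Game 2, so the two games are identical. This holds because of three facts.
\begin{itemize}[leftmargin=2em]
  \item By the deterministic-output property, the event \(\{M_1=m_1,\Learn=h\}\) equals \(\mathsf E_{m_1,h}\).
  \item Alice makes only classical queries, so conditioned on \((M_1,T_{\cA})=(m_1,\tau)\) her residual state is \(\sigma_{m_1,\tau}\). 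This state depends on the oracle only through \(\tau\).
  \item Conditioned on \(M_1=m_1\), \(T_{\cA}=\tau\) and \(h\preceq H\), the oracle \(H\) is uniform on \(\cH(h,\tau)\). The reason is that the probability that \(\cA_1\) produces \((\tau,m_1)\) under \(H\) depends only on \(H|_{\Dom(\tau)}\).
\end{itemize}
Hence Game 2 still has success probability at least \(\alpha-\delta\).

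\textbf{Game 3.} We pass from Game 2 to Eve's actual experiment, in which the true \(H\) is uniform given \((M_1,h)\), independent of the sampled \(T'_{\cA}\) given \(\mathsf E_{M_1,h}\). Bob uses \(H\), and \(\cA_2\) uses \(H[T'_{\cA}]\). Equivalently, take \(F_0:=H\), which is uniform on \(\cH(h)\), and \(F_1:=H[T'_{\cA}]\). Then \(F_1\) has the same distribution as \(\widetilde H\) in Game 2, because reprogramming a uniform extension of \(h\) on \(\Dom(T'_{\cA})\) by values consistent with \(h\) yields a uniform element of \(\cH(h,T'_{\cA})\). In Game 2 both Bob and \(\cA_2\) use \(F_1\). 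In Eve's game Bob uses \(F_0\) and \(\cA_2\) uses \(F_1\). We interpolate through an intermediate game in which both use \(F_0\), and bound each hop with \Cref{lem:reprogramming}.

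\textbf{Applying the reprogramming lemma.} Fix \((m_1,h)\) and \(F_0\). The classical sampler \(C\) draws \(\tau\leftarrow\mu_{m_1,h}\) and outputs \(R=\tau\). By \Cref{lem:learn-success}, every \(x\notin\Dom(h)\) lies in \(\Dom(R)\) with probability at most \(\delta\). Points in \(\Dom(h)\) are reprogrammed to their existing values, so we may restrict \(R\) to \(\cX\setminus\Dom(h)\) without changing \(F_1\). This gives \(\varepsilon\le\delta\).
\begin{itemize}[leftmargin=2em]
  \item \emph{Both-on-\(F_b\) hop.} The distinguisher runs \(\cB\) and then \(\cA_2\), both on \(F_b\), using \(2q_{\cB}+q_{\cA,2}\) queries. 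We count \(2q_{\cB}\) rather than \(q_{\cB}\) so that Bob's computation can be uncomputed, or purified, if needed. After the coins are revealed, the distinguisher prepares \(\sigma_{m_1,\tau}\) and checks \(K_{\cA}=K_{\cB}\neq\bot\). The subtle point is that \(\cA_2\) needs \(S'_{\cA}\), which is prepared from \(\tau\), while oracle access precedes the reveal. We handle this by letting the distinguisher coherently hold \(\Psi\) and postpone \(\cA_2\)'s queries. Concretely, we rearrange so that \(\cA_2\) is not simulated inside the oracle phase; alternatively, Bob alone is handled by the lemma.
  \item \emph{Cleaner ordering.} Hop A changes Bob's oracle from \(F_1\) to \(F_0\) while keeping \(\cA_2\) on \(F_1\). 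The distinguisher runs Bob with the challenge oracle, at cost \(2q_{\cB}\) with purification. Since \(\cA_2\) needs \(F_1=F_0[\tau]\) only after the reveal, it can be simulated after the coins are revealed, with no challenge queries. This yields loss \(4q_{\cB}\sqrt\delta\).
\end{itemize}
After Hop A, Bob uses \(F_0=H\) and \(\cA_2\) uses \(H[T'_{\cA}]\) with state \(S'_{\cA}\), which is exactly Eve's experiment; the success event matches \(K^*=K_{\cB}\neq\bot\). The remaining \(2q_{\cA,2}\sqrt\delta\) of slack covers a version where \(\cA_2\) also queries in the oracle phase. Total loss is \(\delta+2(2q_{\cB}+q_{\cA,2})\sqrt\delta\).

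\textbf{Main obstacle.} The main obstacle is the exact identity in Game 2. It requires that the conditional oracle distribution given a classical transcript be uniform on consistent extensions and independent of the residual quantum state, and that this be matched to \(H[T'_{\cA}]\) in distribution. The query count then follows directly from \Cref{def:eve}.
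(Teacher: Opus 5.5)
\textbf{Gap: the real oracle is not uniform on $\cH(h)$ in Eve's experiment.}
Your Games~0--2 are fine; they merge the paper's $G_0\to G_1\to G_2$. Your Hop~A is essentially the paper's $G_2\to G_3$ step: Bob alone queries the challenge oracle, and $\cA_2$ runs on $\widehat H[T'_{\cA}]$ after the coins are revealed. No purification is needed there, since the distinguisher can simply keep $(\Psi,K_{\cB})$, so the cost is $2q_{\cB}\sqrt\delta$, not $4q_{\cB}\sqrt\delta$.

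The gap is in Game~3. You assert that in Eve's experiment the real oracle $H$ is uniform on $\cH(h)$ given $\mathsf E_{M_1,h}$, and you set $F_0:=H$. This is false. Conditioning on $M_1=m_1$ biases $H$ on the points Alice actually queried. By the conditional-oracle identity, $H\mid\mathsf E_{m_1,h}$ is distributed as $\widehat H[\tau'']$, where $\widehat H\leftarrow\Unif(\cH(h))$ and $\tau''\leftarrow\mu_{m_1,h}$ is independent. Here $\tau''$ stands in for Alice's real transcript, which Eve does not know, and $\widehat H[\tau'']$ is in general not uniform on $\cH(h)$.

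For an example, let Alice query a uniformly random $x$ and send $M_1=H(x)$. For $\delta\ge1/|\cX|$ no point is heavy, so $h=\emptyset$. Yet $H\mid M_1=m_1$ favors functions with many preimages of $m_1$.

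So after Hop~A you have only reached the paper's $G_3$, where Bob uses $\widehat H$ and $\cA_2$ uses $\widehat H[T'_{\cA}]$. In Eve's actual experiment, Bob uses $\widehat H[\tau'']$ and $\cA_2$ uses $\widehat H[\tau''][T'_{\cA}]$.

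\textbf{The missing step.} You need a second reprogramming hop, the paper's $G_3\to G_4$:
\begin{enumerate}[leftmargin=2em]
  \item The distinguisher prepares Eve's sample $(T'_{\cA},S'_{\cA})\leftarrow\omega_{m_1,h}$ up front and outputs $F_0=\widehat H$.
  \item The challenger's sampler draws an independent $\tau''\leftarrow\mu_{m_1,h}$, restricted to points outside $\Dom(h)$.
  \item The distinguisher runs Bob on $F_b$ and $\cA_2$ on $F_b[T'_{\cA}]$, simulating each query to the latter with one query to $F_b$.
\end{enumerate}
Both parties now query the challenge oracle, so this hop costs $2(q_{\cB}+q_{\cA,2})\sqrt\delta$. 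The conditional-oracle identity then identifies $\widehat H[\tau'']$ with the real $H$; this is the paper's exact equality $G_4=G_5$.

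This second hop is where the second $2q_{\cB}\sqrt\delta$ and the $2q_{\cA,2}\sqrt\delta$ terms in the bound come from. They do not come from purifying Bob or from unspecified slack. If you kept your doubled count for Hop~A and added the necessary hop, you would get $\delta+2(3q_{\cB}+q_{\cA,2})\sqrt\delta$, which does not match the statement.
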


\begin{proof}[Proof of \Cref{thm:main}]
We proceed via a hybrid argument. Consider the hybrids \(G_0,G_1,\ldots,G_5\), where $G_0$ is the honest
execution and $G_5$ is Eve's attack. We say that \(G_i\) \emph{succeeds} when
its two recorded keys agree and are not \(\bot\) (an abort counts as failure).

\begin{hybrid}{Hybrid \(G_0\) (real execution)}
  \item Sample \(H_0\leftarrow\cH\).
  \item Run
  \((T_{\cA,0},M_{1,0},S_{\cA,0})
  \leftarrow\cA_1^{H_0}(1^\lambda)\).
  \item Run
  \((\Psi_0,K_{\cB,0})
  \leftarrow\cB^{H_0}(1^\lambda,M_{1,0})\).
  \item Run
  \(K_{\cA,0}
  \leftarrow\cA_2^{H_0}(1^\lambda,S_{\cA,0},\Psi_0)\).
\end{hybrid}
Therefore,
\[
  \Pr[G_0\text{ succeeds}]=\alpha.
\]

\begin{hybrid}{Hybrid \(G_1\)}
  \item Sample \(H_1\leftarrow\cH\).
  \item Run
  \((T_{\cA,1},M_{1,1},S_{\cA,1})
  \leftarrow\cA_1^{H_1}(1^\lambda)\).
  \item \changed{Run
  \(h_1\leftarrow\Learn_\delta^{H_1}(1^\lambda,M_{1,1})\); abort if
  \(h_1=\bot\).}
  \item \changed{Sample a fresh oracle
  \(\widehat H_1\leftarrow\Unif(\cH(h_1))\) and set
  \(\widetilde H_1=\widehat H_1[T_{\cA,1}]\).}
  \item Run
  \((\Psi_1,K_{\cB,1})
  \leftarrow\cB^{\changed{\widetilde H_1}}(1^\lambda,M_{1,1})\).
  \item Run
  \(K^*_1
  \leftarrow\cA_2^{\changed{\widetilde H_1}}
  (1^\lambda,S_{\cA,1},\Psi_1)\). 
\end{hybrid}

\begin{lemma}[\(G_0\) to \(G_1\)]\label{lem:g0-g1}
\[
  \left|
  \Pr[G_0\text{ succeeds}]
  -
  \Pr[G_1\text{ succeeds}]
  \right|
  \le\delta.
\]
\end{lemma}

\begin{proof}
We begin by adding an intermediate hybrid \(G'_0\) that inserts into \(G_0\) an additional call
\(h\leftarrow\Learn_\delta^{H_0}(1^\lambda,M_{1,0})\) whose output is otherwise
ignored. This computation does
not change the real execution. By
\Cref{lem:learn-success}, the call aborts with probability at most
\(\delta\).

Condition on the attacker not aborting, and fix values
\((m_1,\tau,h)\) with positive probability. By the deterministic-output
property of \(\Learn\),
\[
  \{\Learn_\delta^H(1^\lambda,m_1)=h\}
  =
  \{h\preceq H\}.
\]
Because \(\cA_1\) makes only classical oracle queries, once
\((M_1,T_{\cA})=(m_1,\tau)\) is fixed, its conditional residual state
depends only on the recorded query-answer history, not on any value of
\(H\) outside \(\Dom(\tau)\). Since the original oracle is uniform,
conditioning on
\[
  M_1=m_1,\qquad T_{\cA}=\tau,\qquad h\preceq H
\]
leaves \(H\) uniform over \(\cH(h,\tau)\) and independent of
\(S_{\cA}\). This is the classical-quantum case of the
conditional-oracle identity used by Katz and
Sela~\cite[Lemma~6]{KatzSela24}.

In \(G_1\),
\(\widehat H_1\leftarrow\Unif(\cH(h))\) is independent of Alice's conditioned
residual state, and \(\widehat H_1[\tau]\) is uniform over \(\cH(h,\tau)\). The
conditioned experiments are therefore identical. The only loss comes from
the abort event, which has probability at most \(\delta\).
\end{proof}

\begin{hybrid}{Hybrid \(G_2\)}
  \item Sample \(H_2\leftarrow\cH\).
  \item Run
  \((T_{\cA,2},M_{1,2},S_{\cA,2})
  \leftarrow\cA_1^{H_2}(1^\lambda)\).
  \item Run
  \(h_2\leftarrow\Learn_\delta^{H_2}(1^\lambda,M_{1,2})\); abort if
  \(h_2=\bot\).
  \item Sample \(\widehat H_2\leftarrow\Unif(\cH(h_2))\).
  \item \changed{Prepare
  \((T'_{\cA,2},S'_{\cA,2})\leftarrow\omega_{M_{1,2},h_2}\) and set
  \(\widetilde H_2=\widehat H_2[T'_{\cA,2}]\).}
  \item Run
  \((\Psi_2,K_{\cB,2})
  \leftarrow\cB^{\widetilde H_2}(1^\lambda,M_{1,2})\).
  \item Run
  \(K^*_2
  \leftarrow\cA_2^{\widetilde H_2}
  (1^\lambda,S'_{\cA,2},\Psi_2)\).
\end{hybrid}

\begin{lemma}[\(G_1\) to \(G_2\)]\label{lem:g1-g2}
\[
  \Pr[G_1\text{ succeeds}]
  =
  \Pr[G_2\text{ succeeds}].
\]
\end{lemma}

\begin{proof}
Both hybrids abort on the same event. Condition on a reachable
nonabort pair \((m_1,h)\). By the deterministic-output property, the
conditioned state of Alice's actual view in \(G_1\) is exactly
\(\omega_{m_1,h}\); \(G_2\) simply prepares a new instance of that same
classical-quantum state. In both hybrids, the fresh oracle
\(\widehat H_i\leftarrow\Unif(\cH(h))\) is independent of the view, and is then
reprogrammed according to its transcript. The rest of the experiments is
identical and therefore their conditional success probabilities agree for every
reachable nonabort \((m_1,h)\). Since the two hybrids have the same
distribution on \((M_1,h)\) and both fail on abort, their success probabilities
are equal.
\end{proof}

\begin{hybrid}{Hybrid \(G_3\)}
  \item Sample \(H_3\leftarrow\cH\).
  \item Run
  \((T_{\cA,3},M_{1,3},S_{\cA,3})
  \leftarrow\cA_1^{H_3}(1^\lambda)\).
  \item Run
  \(h_3\leftarrow\Learn_\delta^{H_3}(1^\lambda,M_{1,3})\); abort if
  \(h_3=\bot\).
  \item Sample \(\widehat H_3\leftarrow\Unif(\cH(h_3))\).
  \item Prepare
  \((T'_{\cA,3},S'_{\cA,3})\leftarrow\omega_{M_{1,3},h_3}\) and set
  \(\widetilde H_3=\widehat H_3[T'_{\cA,3}]\).
  \item \changed{Run
  \((\Psi_3,K_{\cB,3})
  \leftarrow\cB^{\widehat H_3}(1^\lambda,M_{1,3})\).}
  \item Run
  \(K^*_3
  \leftarrow\cA_2^{\widetilde H_3}
  (1^\lambda,S'_{\cA,3},\Psi_3)\).
\end{hybrid}

\begin{lemma}[\(G_2\) to \(G_3\)]\label{lem:g2-g3}
\[
  \left|
  \Pr[G_2\text{ succeeds}]
  -
  \Pr[G_3\text{ succeeds}]
  \right|
  \le 2q_{\cB}\sqrt\delta.
\]
\end{lemma}

\begin{proof}
Fix a reachable nonabort pair \((m_1,h)\). We construct a distinguisher
\(\cD_{m_1,h}\) for the game in \Cref{lem:reprogramming} as follows.
\begin{enumerate}[leftmargin=2em]
  \item The distinguisher samples
  \(\widehat H\leftarrow\Unif(\cH(h))\), sends
  \(F_0:=\widehat H\) to the challenger, and supplies the randomized
  algorithm \(C_{m_1,h}\). On fresh coins, \(C_{m_1,h}\) samples
  \(\tau'\leftarrow\mu_{m_1,h}\) and returns
  \[
    R:=\tau'|_{\Dom(\tau')\setminus\Dom(h)}.
  \]
  \item The challenger samples the coins of \(C_{m_1,h}\), obtains \(R\),
  sets \(F_1:=F_0[R]\), samples \(b\leftarrow\bits\), and gives the
  distinguisher quantum oracle access to \(F_b\). The distinguisher runs
  \[
    (\Psi,K_{\cB})
    \leftarrow\cB^{F_b}(1^\lambda,m_1)
  \]
  and retains \((\Psi,K_{\cB})\).
  \item The challenger removes access to \(F_b\) and reveals the coins of
  \(C_{m_1,h}\), thereby revealing \(\tau'\). The distinguisher prepares
  \(S'_{\cA}\) in the conditional state \(\sigma_{m_1,\tau'}\), runs
  \[
    K^*\leftarrow
    \cA_2^{\widehat H[\tau']}
    (1^\lambda,S'_{\cA},\Psi),
  \]
  and outputs \(1\) iff \(K^*=K_{\cB}\neq\bot\).
\end{enumerate}
Only Bob's algorithm receives the challenge oracle, so the distinguisher
makes at most \(q_{\cB}\) challenge queries.

Every transcript in \(\Supp(\mu_{m_1,h})\) is consistent with \(h\),
so \(F_1=\widehat H[R]=\widehat H[\tau']\). The cases \(b=1\) and \(b=0\)
reproduce \(G_2\) and \(G_3\), respectively, conditioned on \((m_1,h)\).
Moreover, for every \(x\),
\[
  \Pr[x\in\Dom(R)]
  \le
  \begin{cases}
    0,&x\in\Dom(h),\\
    \Pr_{\tau'\leftarrow\mu_{m_1,h}}
      [x\in\Dom(\tau')]\le\delta,&x\notin\Dom(h),
  \end{cases}
\]
where the last inequality is \Cref{lem:learn-success}. The reprogramming
lemma bounds the conditional difference by \(2q_{\cB}\sqrt\delta\).
To remove the conditioning, let \(\pi(m_1,h)\) be the common subprobability
mass of nonabort executions of the shared prefix that output
\((M_1,h)=(m_1,h)\), and write
\[
  s_i(m_1,h)
  :=
  \Pr[G_i\text{ succeeds}\mid M_{1,i}=m_1,\ h_i=h]
  \qquad (i\in\{2,3\}).
\]
The preceding argument gives
\(\lvert s_2(m_1,h)-s_3(m_1,h)\rvert
\le 2q_{\cB}\sqrt\delta\) for every \((m_1,h)\in\Supp(\pi)\).
Because success is defined to be false on abort, the two success
probabilities are the corresponding \(\pi\)-weighted sums. Hence the triangle
inequality gives
\[
  \begin{aligned}
  \left|\Pr[G_2\text{ succeeds}]-\Pr[G_3\text{ succeeds}]\right|
  &\le
  \sum_{(m_1,h)\in\Supp(\pi)}\pi(m_1,h)
    \left|s_2(m_1,h)-s_3(m_1,h)\right|\\
  &\le 2q_{\cB}\sqrt\delta,
  \end{aligned}
\]
where \(\sum_{(m_1,h)\in\Supp(\pi)}\pi(m_1,h)\le1\).
\end{proof}

\begin{hybrid}{Hybrid \(G_4\)}
  \item Sample \(H_4\leftarrow\cH\).
  \item Run
  \((T_{\cA,4},M_{1,4},S_{\cA,4})
  \leftarrow\cA_1^{H_4}(1^\lambda)\).
  \item Run
  \(h_4\leftarrow\Learn_\delta^{H_4}(1^\lambda,M_{1,4})\); abort if
  \(h_4=\bot\).
  \item \changed{Sample
  \(\widehat H_4\leftarrow\Unif(\cH(h_4))\) and set Bob's oracle
  \(H^{\mathsf B}_4=\widehat H_4[T_{\cA,4}]\).}
  \item Prepare
  \((T'_{\cA,4},S'_{\cA,4})\leftarrow\omega_{M_{1,4},h_4}\) and set
  \(H^{\mathsf E}_4=H^{\mathsf B}_4[T'_{\cA,4}]\).
  \item Run
  \((\Psi_4,K_{\cB,4})
  \leftarrow\cB^{H^{\mathsf B}_4}(1^\lambda,M_{1,4})\).
  \item Run
  \(K^*_4
  \leftarrow\cA_2^{H^{\mathsf E}_4}
  (1^\lambda,S'_{\cA,4},\Psi_4)\).
\end{hybrid}

\begin{lemma}[\(G_3\) to \(G_4\)]\label{lem:g3-g4}
\[
  \left|
  \Pr[G_3\text{ succeeds}]
  -
  \Pr[G_4\text{ succeeds}]
  \right|
  \le 2(q_{\cB}+q_{\cA,2})\sqrt\delta.
\]
\end{lemma}

\begin{proof}
Fix a reachable nonabort pair \((m_1,h)\). In \(G_4\), the original
residual register \(S_{\cA,4}\) is unused and the conditional marginal of
\(T_{\cA,4}\) is \(\mu_{m_1,h}\). Thus the conditioned hybrid is unchanged
if we independently sample
\[
  \widehat H\leftarrow\Unif(\cH(h)),
  \qquad
  \tau''\leftarrow\mu_{m_1,h},
  \qquad
  (T'_{\cA},S'_{\cA})\leftarrow\omega_{m_1,h},
\]
and then run Bob with \(\widehat H[\tau'']\) and Alice's final algorithm with
\(\widehat H[\tau''][T'_{\cA}]\). As usual, the second reprogramming
overwrites the first wherever their domains overlap.

We construct a distinguisher \(\cD_{m_1,h}\) for the game in
\Cref{lem:reprogramming}.
\begin{enumerate}[leftmargin=2em]
  \item The distinguisher prepares
  \((T'_{\cA},S'_{\cA})\leftarrow\omega_{m_1,h}\), samples
  \(\widehat H\leftarrow\Unif(\cH(h))\), and sends
  \(F_0:=\widehat H\) to the challenger. It also supplies a randomized
  algorithm \(C_{m_1,h}\) that samples
  \(\tau''\leftarrow\mu_{m_1,h}\) and returns
  \[
    R:=\tau''|_{\Dom(\tau'')\setminus\Dom(h)}.
  \]
  \item The challenger samples the coins of \(C_{m_1,h}\), obtains \(R\),
  sets \(F_1:=F_0[R]\), samples \(b\leftarrow\bits\), and gives the
  distinguisher quantum oracle access to \(F_b\). The distinguisher runs
  \[
    (\Psi,K_{\cB})\leftarrow\cB^{F_b}(1^\lambda,m_1)
  \]
  followed by
  \[
    K^*\leftarrow
    \cA_2^{F_b[T'_{\cA}]}
    (1^\lambda,S'_{\cA},\Psi),
  \]
  and records the bit \(d:=\mathbf 1[K^*=K_{\cB}\neq\bot]\).
  \item The challenger removes access to \(F_b\) and reveals the coins of
  \(C_{m_1,h}\), thereby revealing \(\tau''\). The distinguisher outputs
  the recorded bit \(d\).
\end{enumerate}

Each query to \(F_b[T'_{\cA}]\) uses one query to \(F_b\). First compute
reversibly whether the query point lies in \(\Dom(T'_{\cA})\). On that
branch, swap the answer register with an ancilla in
\(\ket{+}^{\otimes\ell}\), apply the XOR oracle \(U_{F_b}\), and swap back.
Because every XOR translation fixes \(\ket{+}^{\otimes\ell}\), the challenge
oracle has no effect on the reprogrammed branch.  XOR the known value
\(T'_{\cA}(x)\) into the answer register there and uncompute the lookup.  This
implements \(U_{F_b[T'_{\cA}]}\) exactly. The total number of challenge queries
is at most \(q_{\cB}+q_{\cA,2}\).

If \(b=0\), this is \(G_3\). If \(b=1\), consistency on
\(\Dom(h)\) gives \(F_1=\widehat H[R]=\widehat H[\tau'']\), yielding the
equivalent form of \(G_4\). By \Cref{lem:learn-success},
\(\max_x\Pr[x\in\Dom(R)]\le\delta\). The reprogramming lemma gives the
pointwise bound
\[
  \left|
    \Pr[G_3\text{ succeeds}\mid M_1=m_1,\ h_3=h]
    -
    \Pr[G_4\text{ succeeds}\mid M_1=m_1,\ h_4=h]
  \right|
  \le 2(q_{\cB}+q_{\cA,2})\sqrt\delta
\]
for every reachable nonabort pair \((m_1,h)\). Let \(\pi(m_1,h)\) denote the
common subprobability mass of nonabort executions of the shared prefix that
produce that pair, and define
\[
  s_i(m_1,h)
  :=
  \Pr[G_i\text{ succeeds}\mid M_{1,i}=m_1,\ h_i=h]
  \qquad (i\in\{3,4\}).
\]
Because success is defined to be false on abort, the two success
probabilities are the corresponding \(\pi\)-weighted sums. The triangle
inequality yields
\[
  \begin{aligned}
  \left|\Pr[G_3\text{ succeeds}]-\Pr[G_4\text{ succeeds}]\right|
  &\le
  \sum_{(m_1,h)\in\Supp(\pi)}\pi(m_1,h)
    \left|s_3(m_1,h)-s_4(m_1,h)\right|\\
  &\le 2(q_{\cB}+q_{\cA,2})\sqrt\delta.
  \end{aligned}
\]
\end{proof}

\begin{hybrid}{Hybrid \(G_5\)}
  \item Sample \(H_5\leftarrow\cH\).
  \item Run
  \((T_{\cA,5},M_{1,5},S_{\cA,5})
  \leftarrow\cA_1^{H_5}(1^\lambda)\).
  \item Run
  \(h_5\leftarrow\Learn_\delta^{H_5}(1^\lambda,M_{1,5})\); abort if
  \(h_5=\bot\).
  \item Prepare
  \((T'_{\cA,5},S'_{\cA,5})\leftarrow\omega_{M_{1,5},h_5}\) and set
  \(\widetilde H_5=H_5[T'_{\cA,5}]\).
  \item Run
  \((\Psi_5,K_{\cB,5})
  \leftarrow\cB^{H_5}(1^\lambda,M_{1,5})\).
  \item Run
  \(K^*_5
  \leftarrow\cA_2^{\widetilde H_5}
  (1^\lambda,S'_{\cA,5},\Psi_5)\).
\end{hybrid}

\begin{lemma}[\(G_4\) to \(G_5\)]\label{lem:g4-g5}
\[
  \Pr[G_4\text{ succeeds}]
  =
  \Pr[G_5\text{ succeeds}].
\]
\end{lemma}

\begin{proof}
Both hybrids abort on the same event. On success, fix
\((m_1,\tau,h)\). By the deterministic-output property and the
conditional-oracle identity of Katz and Sela~\cite[Lemma~6]{KatzSela24},
conditioned on this triple the
original oracle \(H_5\) in \(G_5\) is uniform over \(\cH(h,\tau)\) and
independent of the unused original residual state. In \(G_4\), sampling
\(\widehat H_4\leftarrow\Unif(\cH(h))\) and setting
\(H^{\mathsf B}_4=\widehat H_4[\tau]\) gives the
same uniform distribution over \(\cH(h,\tau)\). The independently prepared
view \(\omega_{m_1,h}\) and every subsequent operation are the same.
Thus the conditional success probabilities of \(G_4\) and \(G_5\) are equal
for every reachable nonabort triple \((m_1,\tau,h)\). The two hybrids induce
the same distribution \(\pi(m_1,\tau,h)\) on these triples. Define
\[
  s_i(m_1,\tau,h)
  :=
  \Pr[G_i\text{ succeeds}\mid
    M_{1,i}=m_1,\ T_{\cA,i}=\tau,\ h_i=h]
  \qquad (i\in\{4,5\}).
\]
Then \(s_4(m_1,\tau,h)=s_5(m_1,\tau,h)\) throughout \(\Supp(\pi)\).
Both hybrids have success probability zero on abort, so the law of total
probability gives
\[
  \Pr[G_4\text{ succeeds}]
  =
  \sum_{m_1,\tau,h}\pi(m_1,\tau,h)
    s_4(m_1,\tau,h)
  =
  \sum_{m_1,\tau,h}\pi(m_1,\tau,h)
    s_5(m_1,\tau,h)
  =
  \Pr[G_5\text{ succeeds}].
\]
\end{proof}

Hybrid \(G_5\) is exactly \(\IntKR_{\Pi,\cE_\delta}\): the preprocessing
stage learns \(h\) and prepares the conditioned view; Bob runs with the
original oracle; Eve intercepts \(\Psi_5\); and Eve runs Alice's final
algorithm with \(H_5[T'_{\cA,5}]\).  Therefore
\begin{align*}
  \Adv^{\IntKR}_{\Pi,\cE_\delta}(\lambda)
  &=
  \Pr[G_5\text{ succeeds}]\\
  &\ge
  \Pr[G_0\text{ succeeds}]
  -\delta
  -2q_{\cB}\sqrt\delta
  -2(q_{\cB}+q_{\cA,2})\sqrt\delta\\
  &=
  \alpha-\delta-2(2q_{\cB}+q_{\cA,2})\sqrt\delta.
\end{align*}
The query bound follows from \Cref{def:eve}.
\end{proof}

\begin{corollary}[Concrete parameter choice]\label{cor:parameters}
Let \(\Pi\) satisfy \Cref{def:protocol}, let
\(\alpha=\alpha_\Pi(\lambda)>0\), and define
\[
  Q:=\max\{1,2q_{\cB}+q_{\cA,2}\},
  \qquad
  \delta:=\left(\frac{\alpha}{8Q}\right)^2.
\]
Then the adversary of \Cref{thm:main} succeeds with probability strictly
greater than \(\alpha/2\) and makes at most
\[
  \ceil{4096\,q_{\cA,1}Q^4/\alpha^4}+q_{\cA,2}
\]
random-oracle queries.
\end{corollary}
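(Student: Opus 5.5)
The plan is to plug the stated choice of \(\delta\) into the bound of \Cref{thm:main} and check the two resulting claims: the success probability and the query count. First confirm that \(0<\delta<1\), so that \Cref{thm:main} applies. Since \(\alpha\le 1\) and \(Q\ge 1\), we get \(\alpha/(8Q)\le 1/8\), hence \(\delta\le 1/64<1\). Also \(\delta>0\) because \(\alpha>0\).

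For the success probability, \Cref{thm:main} gives advantage at least \(\alpha-\delta-2(2q_{\cB}+q_{\cA,2})\sqrt\delta\). Bound \(2q_{\cB}+q_{\cA,2}\le Q\) and use \(\sqrt\delta=\alpha/(8Q)\). The reprogramming loss is then at most \(2Q\cdot\alpha/(8Q)=\alpha/4\). For the abort term, \(\delta=\alpha^2/(64Q^2)\le \alpha/64\), using \(\alpha\le1\) and \(Q\ge1\). The advantage is therefore at least
\[
  \alpha-\frac{\alpha}{4}-\frac{\alpha}{64}=\frac{47\alpha}{64}>\frac{\alpha}{2},
\]
and the inequality is strict since \(\alpha>0\).

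For the query bound, \Cref{thm:main} gives \(\ceil{q_{\cA,1}/\delta^2}+q_{\cA,2}\). Substituting \(\delta^2=\alpha^4/(8Q)^4=\alpha^4/(4096Q^4)\) yields exactly \(\ceil{4096\,q_{\cA,1}Q^4/\alpha^4}+q_{\cA,2}\).

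There is no real obstacle; this is routine arithmetic. The only points that need care are the ones above: that \(\delta<1\), that \(Q\) dominates \(2q_{\cB}+q_{\cA,2}\) even when the honest query bounds are zero (which is why \(Q\) includes \(\max\{1,\cdot\}\)), and that strict inequality holds.
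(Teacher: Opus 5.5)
Your proposal is correct and follows the paper's proof essentially step for step. Like the paper, you bound the reprogramming loss by \(2Q\sqrt\delta=\alpha/4\), bound \(\delta\le\alpha/64\), and substitute \(1/\delta^2=4096Q^4/\alpha^4\) into the query count. Your explicit check that \(0<\delta\le 1/64<1\), so that \Cref{thm:main} applies, is a detail the paper leaves implicit.
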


\begin{proof}
The reprogramming loss is
\[
  2Q\sqrt\delta=\alpha/4.
\]
Since \(0<\alpha\le1\) and \(Q\ge1\),
\(\delta=\alpha^2/(64Q^2)\le\alpha/64\).  Hence the success lower bound is
at least
\[
  \alpha-\alpha/64-\alpha/4>\alpha/2.
\]
Moreover,
\[
  \frac{q_{\cA,1}}{\delta^2}
  =
  \frac{4096\,q_{\cA,1}Q^4}{\alpha^4}.
\]
\end{proof}

\begin{corollary}[Polynomial-query two-message impossibility]\label{cor:quantum-impossibility}
Let \(\Pi=\{\Pi_\lambda\}\) be a family of protocols satisfying
\Cref{def:protocol}. Suppose that \(q_{\cA,1}\), \(q_{\cB}\), and
\(q_{\cA,2}\) are each bounded by \(\poly(\lambda)\), and that
\(\alpha_\Pi(\lambda)\ge1/\poly(\lambda)\). Then the family is not
interception key-recovery secure against computationally unbounded quantum
adversaries making at most \(\poly(\lambda)\) random-oracle queries. In
particular, the conclusion applies when the completeness error is negligible.
\end{corollary}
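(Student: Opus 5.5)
The plan is to instantiate \Cref{cor:parameters} and check that every quantity it involves is polynomial in \(\lambda\). Fix a polynomial \(p\) with \(\alpha_\Pi(\lambda)\ge 1/p(\lambda)\) for all sufficiently large \(\lambda\). Fix polynomials bounding \(q_{\cA,1}\), \(q_{\cB}\) and \(q_{\cA,2}\) as well. For each \(\lambda\), set
\[
  Q_\lambda:=\max\{1,2q_{\cB}+q_{\cA,2}\},
  \qquad
  \delta_\lambda:=\bigl(\alpha_\Pi(\lambda)/(8Q_\lambda)\bigr)^2 .
\]
Then run the attacker \(\cE_{\delta_\lambda}\) of \Cref{def:eve}. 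Because the attacker may be computationally unbounded and nonuniform in \(\lambda\), it is legitimate for it to know \(\alpha_\Pi(\lambda)\) exactly. If one prefers not to rely on this, we can use the lower bound \(1/p(\lambda)\) in place of \(\alpha_\Pi(\lambda)\) in \(\delta_\lambda\). The proof of \Cref{cor:parameters} still goes through: the loss is at most \(\alpha/4+\alpha/64\) whenever \(\delta\le(\alpha/(8Q))^2\), so the success probability still exceeds \(\alpha/2\).

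Next, I will read off the two conclusions of \Cref{cor:parameters}. The success probability is
\[
  \Adv^{\IntKR}_{\Pi,\cE_{\delta_\lambda}}(\lambda)>\alpha_\Pi(\lambda)/2\ge 1/(2p(\lambda)),
\]
which is non-negligible. The query count is at most
\[
  \ceil{4096\,q_{\cA,1}Q_\lambda^4/\alpha_\Pi(\lambda)^4}+q_{\cA,2}
  \le 4096\,q_{\cA,1}Q_\lambda^4p(\lambda)^4+q_{\cA,2}+1,
\]
and this is bounded by a polynomial, since \(q_{\cA,1}\), \(Q_\lambda\) and \(q_{\cA,2}\) are polynomially bounded. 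The attacker is therefore a polynomially query-bounded intercepting attacker in the sense of \Cref{def:intkr}, and its advantage is not negligible, so the family is not interception key-recovery secure. For the final sentence of the statement: negligible completeness error gives \(\alpha_\Pi\ge 1-\negl(\lambda)\ge 1/2\) for large \(\lambda\), so the hypothesis holds with \(p=2\).

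No step here is a real obstacle; all the substance is in \Cref{thm:main}. The only care needed is bookkeeping. First, the hypothesis \(\alpha\ge 1/\poly\) holds only for sufficiently large \(\lambda\), which is harmless because negligibility is an asymptotic notion. Second, when \(q_{\cA,1}=0\) the learner makes no queries, which \Cref{lem:learn-success} already handles. Third, the query bound must be read as covering both stages together, with each simulated query to \(H[T'_{\cA}]\) costing one query to \(H\), as noted after \Cref{def:eve}.
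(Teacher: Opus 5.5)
Your proposal is correct and follows the paper's intended argument: the corollary (stated without a separate proof) is an immediate instantiation of \Cref{cor:parameters} with \(\delta=(\alpha/(8Q))^2\), giving success probability above \(\alpha/2\ge 1/(2p(\lambda))\) with at most \(\ceil{4096\,q_{\cA,1}Q^4/\alpha^4}+q_{\cA,2}=\poly(\lambda)\) queries. Your variant that substitutes the lower bound \(1/p(\lambda)\) for \(\alpha\) in \(\delta\) is also valid, and it conveniently handles the finitely many \(\lambda\) at which \(\alpha\) might vanish.
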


\subsection{The Multi-Round Attack}
\label{sec:multi-round}

We give an attack on a multi-round key agreement protocol with a classical prefix and one final possibly quantum message, as defined in 
\Cref{def:multi-protocol}. Recall that in this model, the only property the prefix must satisfy is that all communication and all queries to the random oracle are classical;
the parties' local computation and residual states may otherwise be
quantum.  Our attack and corresponding analysis extend the techniques from \cite{BarakMahmoodyFull} to our setting.

\subsubsection{Classical Views of the Prefix}
\label{sec:classical-prefix-views}

The main difference between our setting and the purely classical setting in \cite{BarakMahmoodyFull} is that we allow the parties' local computation to be quantum. However, as we discuss in this section, we can assume without loss of generality that Alice and Bob are executing classical algorithms to generate the prefix. In turn, we will be able to apply the techniques from \cite{BarakMahmoodyFull} directly. 

We first explain how to generate classical views of the prefix. Let $\Pi$ be a protocol as defined in Definition~\ref{def:multi-protocol}. Recall that while the parties must be query bounded, they may be computationally unbounded. We therefore can define an exact classical
simulation of the prefix of $\Pi$ as follows. The simulator maintains a classical descirption of the parties' residual quantum states. Whenever an honest party sends a
message or queries the oracle during the prefix, the simulator produces a sample according to the distribution induced by the party's current
quantum state and updates its classical description of the corresponding
conditional residual state. Thus the simulation makes the same classical
oracle queries and produces exactly the same joint state of the public
transcript, the parties' query transcripts, and their residual registers as
the original prefix.

For $P\in\{\cA,\cB\}$, define
\[
  U_P := (R_P,M,T_P)
\]
as the classical view of party $P$ in this simulation, where $R_P$ records all
private random choices of the simulator, $M$ is the public prefix transcript,
and $T_P$ is the complete ordered oracle query-answer transcript of $P$.
This is equivalent to the notion of a view used by Barak and
Mahmoody~\cite[Section~3.1]{BarakMahmoodyFull}.  We note that the classical description of the residual states is not part of $U_P$.

We now explain what is needed to use this simulated classical view in combination with the techniques of \cite{BarakMahmoodyFull}. The eventual attack will retain the public prefix transcript $m$ and a
partial oracle $h$ learned during the prefix, and then prepare a fresh Alice
prefix view from its conditional marginal given $(m,h)$.  To justify replacing
Alice's real view by this fresh sample without significantly changing Bob's
subsequent behavior, we need the real joint state of the two prefix views,
conditioned on $(m,h)$, to be close to the product of its marginals.
\Cref{sec:bm-learner} obtains the corresponding approximate-product
guarantee for the simulated classical views $U_{\cA}$ and $U_{\cB}$ from
the Barak--Mahmoody learner.

That classical guarantee is not yet sufficient for our application.  The
final round of the key agreement protocol acts on the parties' residual quantum registers, rather than on the
simulator's private randomness, and its algorithms may query the same random
oracle $H$ in superposition.  We must therefore both transfer the
approximate-product guarantee from $U_{\cA},U_{\cB}$ to the parties' actual
query transcripts and residual quantum states, and preserve the correct
correlation of those states with $H$.  The local channels defined below
perform the first task.  The conditional-oracle identity in
\Cref{eq:prefix-conditional-oracle-main} performs the second.  In
\Cref{sec:conditioned-prefix-resampling}, we combine the two facts to replace
the real conditioned prefix state by independently sampled Alice and Bob
views together with a suitably sampled and reprogrammed oracle.

For every $u_P$ in the support of $U_P$, let $T_P(u_P)$ denote the query
transcript contained in $u_P$, and let $\sigma_P(u_P)$ be the residual density
operator tracked by the simulator after the prefix on that view.  Define the
local channel
\[
  \Phi_P(\proj{u_P})
  :=
  \proj{T_P(u_P)}\otimes\sigma_P(u_P).
\]
Thus $U_P$ is only an auxiliary classical view used to invoke the
Barak--Mahmoody result; applying $\Phi_P$ discards the simulator's private
randomness and keeps only the actual query transcript and residual quantum
state.

For a complete public prefix transcript $m$ and a partial oracle $h$, write
\[
  \mathsf F_{m,h}:=\{M=m,\ h\preceq H\}.
\]
Whenever $\Pr[\mathsf F_{m,h}]>0$, let $\nu^{AB}_{m,h}$ be the diagonal density
operator corresponding to the conditional distribution of
$(U_{\cA},U_{\cB})$ given $\mathsf F_{m,h}$, and let
$\nu^A_{m,h},\nu^B_{m,h}$ be its marginals.  Define
\[
  \omega^{AB}_{m,h}
  :=
  (\Phi_{\cA}\otimes\Phi_{\cB})(\nu^{AB}_{m,h}),
\]
and define $\omega^A_{m,h},\omega^B_{m,h}$ analogously.  Hence
$\omega^{AB}_{m,h}$ is a state on
$T_{\cA}S_{\cA}T_{\cB}S_{\cB}$ and contains no simulator randomness.
We note that, since we condition on $M=m$, the transcript register is fixed to
$\proj{m}_M$.  Let
$\Omega_{m,h}$ be the real
conditional state on
$\mathsf H T_{\cA}S_{\cA}T_{\cB}S_{\cB}$ given $\mathsf F_{m,h}$, where
$\mathsf H$ stores the random oracle.  Since the simulation is exact,
$\omega^{AB}_{m,h}$ is precisely the
$T_{\cA}S_{\cA}T_{\cB}S_{\cB}$-marginal of $\Omega_{m,h}$.

Since all oracle queries in the prefix are classical, once
$M=m$, $T_{\cA}=t_{\cA}$, and $T_{\cB}=t_{\cB}$ are fixed, the
residual state of Alice and Bob is the same for every oracle consistent
with $t_{\cA}$ and $t_{\cB}$.  Therefore, since $H$ is initially
uniform, whenever the conditioning event has positive probability,
\begin{equation}
  H\ \big|\
  \bigl(
    \mathsf F_{m,h},
    T_{\cA}=t_{\cA},
    T_{\cB}=t_{\cB}
  \bigr)
  \ \sim\
  \Unif\!\left(
    \cH(h,t_{\cA},t_{\cB})
  \right).
  \label{eq:prefix-conditional-oracle-main}
\end{equation}
Moreover, under this conditioning, the residual state is independent
of $H$.

\subsubsection{The Barak--Mahmoody Learner}
\label{sec:bm-learner}

In this section, we recall the relevant results from \cite{BarakMahmoodyFull}, lifted to our setting.  The first lemma below restates the
general result of Barak and Mahmoody for an arbitrary classical protocol.  In
particular, their Theorem~4.3 gives an expected query bound and guarantees
approximate independence and lightness after each round.  We retain only the
guarantees after the last protocol message, when \(M\) is the complete public
transcript and \(h\) is the learner's final partial oracle.  The second lemma
then applies this result to the exact classical simulation from
\Cref{sec:classical-prefix-views}.  It fixes the parameters in terms of
\(\eta\), truncates the learner to obtain a worst-case query bound, aborts
if the truncation threshold is reached or either the independence or
lightness guarantee fails, and transfers the
approximate-independence guarantee to the parties' residual quantum states.
It also characterizes the learner's output by \(h\preceq H\).  Thus the first
lemma isolates the result imported from Barak and Mahmoody, while the second
puts it in the form used for conditioned prefix resampling in
\Cref{sec:conditioned-prefix-resampling}.  The Barak--Mahmoody statement uses
Items~2, 4, and~5 of their Theorem~4.3 and their definition of
self-dependency~\cite[Definition~4.2 and Theorem~4.3,
Items~2, 4, and~5]{BarakMahmoodyFull}.

\begin{lemma}[Barak--Mahmoody]
\label{lem:bm-final-transcript}
Consider a classical two-party random-oracle protocol.  Let $M$ be its complete
public transcript and let
\[
  U_P=(R_P,M,T_P),\qquad P\in\{\cA,\cB\},
\]
be the parties' classical views, where $T_P$ is the oracle query-answer
transcript.  Suppose Alice and Bob make at most $n_A$ and $n_B$ oracle
queries, respectively, where $n_A,n_B\ge1$.

Let
\[
  0<\varepsilon_1\le\varepsilon_2<1/10,
  \qquad
  0<\alpha,\beta<1,
  \qquad
  \alpha\beta\ge\varepsilon_2.
\]
There is a deterministic classical-query algorithm
\[
  \mathsf L^H_{\varepsilon_1,\varepsilon_2}(m)
\]
that takes a complete public transcript $m$ as input and outputs a partial
oracle $h$.  For every realized pair $(m,h)$, let $\mu^{AB}_{m,h}$ be the
conditional distribution of $(U_{\cA},U_{\cB})$ given
\[
  M=m,
  \qquad
  \mathsf L^H_{\varepsilon_1,\varepsilon_2}(m)=h,
\]
and let $\mu^A_{m,h},\mu^B_{m,h}$ be its marginals.  Then:
\begin{enumerate}[leftmargin=2em,label=\textup{(\roman*)}]
  \item The expected number of oracle queries made by
  $\mathsf L_{\varepsilon_1,\varepsilon_2}$ is at most
  \[
    \frac{4n_A n_B}{\varepsilon_1}.
  \]

  \item Except with probability at most $9\alpha$ over the realized pair
  $(M,h)$,
  \[
    \Delta_{\mathrm{TV}}\!\left(
      \mu^{AB}_{M,h},
      \mu^A_{M,h}\times\mu^B_{M,h}
    \right)
    \le 9\beta.
  \]

  \item Except with probability at most $9\alpha$ over the realized pair
  $(M,h)$, every $x\notin\Dom(h)$ satisfies
  \[
    \Pr[x\in\Dom(T_{\cA})\mid M,h]
      < \frac{\varepsilon_2}{n_B}+\beta,
    \qquad
    \Pr[x\in\Dom(T_{\cB})\mid M,h]
      < \frac{\varepsilon_2}{n_A}+\beta.
  \]
\end{enumerate}
\end{lemma}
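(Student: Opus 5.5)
This lemma is imported essentially verbatim from Barak and Mahmoody, so the plan is to identify their learner and translate their guarantees into the present notation rather than re-prove them. I would take \(\mathsf L_{\varepsilon_1,\varepsilon_2}\) to be their attacker (Eve of Section~4), run with parameters \(\varepsilon_1\le\varepsilon_2\) on the full transcript \(m\). The learner processes the messages \(m_1,m_2,\ldots\) in order; after each message it repeatedly queries the \(\prec\)-least point whose conditional probability of lying in Alice's query set exceeds \(\varepsilon_1/n_B\), or in Bob's exceeds \(\varepsilon_1/n_A\), given the public transcript so far and the current \(h\). Fixing the order makes the algorithm deterministic, and its output \(h\) satisfies the analogous deterministic-output property: for each realized \(m\), the event \(\mathsf L^H(m)=h\) coincides with \(h\preceq H\). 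This is the same argument as for \(\Learn\).

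For item (i), I would cite their expected query bound (Theorem~4.3, Item~2). The idea behind it is the heavy-query counting of \Cref{lem:heavy-bound}, applied separately to \(L=\Dom(T_{\cA})\) with threshold \(\varepsilon_1/n_B\) and to \(L=\Dom(T_{\cB})\) with threshold \(\varepsilon_1/n_A\). Each application contributes expected count at most \(n_An_B/\varepsilon_1\). The factor \(4\) in the statement absorbs their constant-factor slack.

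For items (ii) and (iii), I would invoke their notion of self-dependency (Definition~4.2). They show that the learned set \(h\) makes the conditional distribution of \((U_{\cA},U_{\cB})\), given the transcript and \(h\), close to a product distribution. Their Items~4 and~5 state that the total-variation distance to the product of marginals is at most \(9\beta\) except with probability \(9\alpha\), provided \(\alpha\beta\ge\varepsilon_2\). The lightness statement follows because, on the same good event, at termination no unqueried point exceeds the threshold by more than the \(\beta\) slack coming from the approximate-product argument.

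The one step needing care is restricting the theorem's per-round guarantees to the final round. Their guarantees hold for each prefix \(i\), so I would apply them at \(i=2r-1\), where the round-\(i\) conditioning coincides with conditioning on \(M=m\) and \(h\) (by the deterministic-output property). I expect the main obstacle to be checking that the conditioning events and views match: namely, that their view \((R_P,M,T_P)\) and their event "\(M=m\) and the learner's view equals \(h\)" agree with \(\mu^{AB}_{m,h}\) as defined here, and that the thresholds \(\varepsilon_2/n_B+\beta\) agree with their normalization. I would settle this by unpacking Definition~4.2 against the present definitions.
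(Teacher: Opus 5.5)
Your proposal is correct and follows essentially the same route as the paper. The paper likewise takes Barak and Mahmoody's deterministic learner (their Construction~3.3 with parameter $\varepsilon_1$ plus the stopping modification of their Theorem~4.3, which makes it an $(\varepsilon_1,\varepsilon_2)$-attacker), replays it on the successive prefixes of $m$ to define $\mathsf L^H_{\varepsilon_1,\varepsilon_2}(m)$, and reads off (i)--(iii) from Items~2, 4, and~5 of their Theorem~4.3 (with Definition~4.2 for self-dependency) evaluated after the final message. Your heavy-query intuition for (i) and the $h\preceq H$ characterization are not needed for this lemma; the paper simply cites the expected query bound and proves that characterization in the next lemma.
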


\begin{proof}
We use the deterministic learner from Barak and Mahmoody's
Construction~3.3 with parameter $\varepsilon_1$, together with the stopping
modification from their Theorem~4.3.  Barak and Mahmoody describe this learner
as running alongside the protocol.  Its oracle queries do not affect the
honest execution, and Construction~3.3 uses deterministic tie-breaking.
Consequently, given the complete transcript $m$, we may replay the learner on
the successive prefixes of $m$ and make the same oracle queries.  This defines
the deterministic algorithm $\mathsf L^H_{\varepsilon_1,\varepsilon_2}(m)$
with the same final partial oracle and query complexity.

With the stopping modification, the learner is an
$(\varepsilon_1,\varepsilon_2)$-attacker in the sense of Definition~4.1 of
Barak and Mahmoody.  Item~2 of their Theorem~4.3 gives the expected query
bound in Item~\textup{(i)}.  Their Definition~4.2 identifies self-dependency
with statistical distance from the product of the marginals.  Evaluating
Item~4 of Theorem~4.3 after the final protocol message therefore gives
Item~\textup{(ii)}.  Evaluating Item~5 at the same point gives the two
lightness bounds in Item~\textup{(iii)}
\cite[Construction~3.3, Definitions~4.1--4.2, and Theorem~4.3,
Items~2, 4, and~5]{BarakMahmoodyFull}.
\end{proof}

\begin{lemma}[Barak--Mahmoody prefix attacker]
\label{lem:multi-independence-attacker}
Let $0<\eta<1/100$, and set
\[
  \bar q_{\cA}:=\max\{1,q_{\cA,\mathsf{pre}}\},
  \qquad
  \bar q_{\cB}:=\max\{1,q_{\cB,\mathsf{pre}}\},
\]
\[
  Q_{\mathsf{learn}}
  :=
  \left\lceil
    \frac{4\cdot 10^5\,\bar q_{\cA}\bar q_{\cB}}{\eta^3}
  \right\rceil.
\]
There is a deterministic algorithm
$\Learn^{\mathsf{BM},H}_\eta$ with classical oracle access that, on input a
complete public prefix transcript $m$, outputs a partial oracle $h$ or aborts.
A pair $(m,h)$ is \emph{reachable} if
\[
  \Pr\!\left[
    M=m,\
    \Learn^{\mathsf{BM},H}_\eta(1^\lambda,m)=h
  \right]>0.
\]
It satisfies the following properties.
\begin{enumerate}[leftmargin=2em,label=\textup{(\roman*)}]
  \item It makes at most $Q_{\mathsf{learn}}$ oracle queries and aborts with
  probability at most $\eta$.

  \item For every reachable pair $(m,h)$,
  \[
    \frac12
    \left\|
      \omega^{AB}_{m,h}
      -
      \omega^A_{m,h}\otimes\omega^B_{m,h}
    \right\|_1
    \le\eta.
  \]

  \item For every reachable pair $(m,h)$ and every $x\notin\Dom(h)$,
  \[
    \Pr[x\in\Dom(T_{\cA})\mid\mathsf F_{m,h}]
    \le\eta.
  \]
\end{enumerate}
Moreover, for every reachable pair $(m,h)$ and every oracle $H$,
\[
  \Learn^{\mathsf{BM},H}_\eta(1^\lambda,m)=h
  \quad\Longleftrightarrow\quad
  h\preceq H.
\]
\end{lemma}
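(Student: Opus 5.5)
We will obtain \Cref{lem:multi-independence-attacker} by running the learner of \Cref{lem:bm-final-transcript} on the exact classical simulation from \Cref{sec:classical-prefix-views}. We then truncate it, add a self-check abort, and push its guarantees through the local channels $\Phi_{\cA}\otimes\Phi_{\cB}$.

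\emph{Parameters.} Instantiate \Cref{lem:bm-final-transcript} with $n_A=\bar q_{\cA}$ and $n_B=\bar q_{\cB}$. Take $\alpha_{\mathrm{BM}}=\eta/100$, $\beta=\eta/100$, and $\varepsilon_1=\varepsilon_2=\alpha_{\mathrm{BM}}\beta=\eta^2/10^4$, which is below $1/10$. The expected query count is then at most $4\cdot10^4\,\bar q_{\cA}\bar q_{\cB}/\eta^2$. Truncating at $Q_{\mathsf{learn}}$ and applying Markov's inequality gives truncation probability at most $\eta/10$.

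\emph{Algorithm.} Define $\Learn^{\mathsf{BM},H}_\eta(1^\lambda,m)$ as follows. Run $\mathsf L^H_{\varepsilon_1,\varepsilon_2}(m)$, aborting if it would exceed $Q_{\mathsf{learn}}$ queries. With output $h$, the algorithm, being computationally unbounded, computes the conditional distribution of $(U_{\cA},U_{\cB})$ given $M=m$ and $\mathsf L(m)=h$. It aborts if the TV distance to the product exceeds $9\beta$, or if some $x\notin\Dom(h)$ has $\Pr[x\in\Dom(T_{\cA})\mid M,h]\ge\varepsilon_2/n_B+\beta$. These checks use no further queries.

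Items (ii) and (iii) of \Cref{lem:bm-final-transcript} bound the probability of the check abort by $18\alpha_{\mathrm{BM}}$. The total abort probability is therefore at most $\eta/10+0.18\eta\le\eta$, which is item (i).

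\emph{Deterministic-output property.} This is the step needing the most care. Truncation must be defined so that non-aborting outputs are reached only along full runs, and the check depends only on $(m,h)$. The learner is deterministic with deterministic tie-breaking, so each next query is a function of $(m,$ current partial oracle$)$. For a reachable $h$, any $H\succeq h$ therefore makes the learner issue the same queries, receive the same answers, and stop at $h$. Conversely, an output of $h$ forces $h\preceq H$.

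Consequently the event $\{M=m,\Learn(m)=h\}$ equals $\mathsf F_{m,h}$. It also coincides with the conditioning in \Cref{lem:bm-final-transcript}, so $\nu^{AB}_{m,h}=\mu^{AB}_{m,h}$.

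\emph{Transferring the guarantees.} For reachable non-aborting $(m,h)$, the check gives $\Delta_{\mathrm{TV}}(\nu^{AB},\nu^A\times\nu^B)\le9\beta\le\eta$. Trace distance between these diagonal states equals TV distance. Since $\Phi_{\cA}\otimes\Phi_{\cB}$ is a CPTP map, it contracts trace distance, and it maps $\nu^A\otimes\nu^B$ to $\omega^A\otimes\omega^B$. This yields item (ii).

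Item (iii) follows from the lightness check, since $\varepsilon_2/n_B+\beta\le\eta$. Here $T_{\cA}$ is a deterministic function of $U_{\cA}$, and the simulation is exact, so the probability in the check is the probability in the statement.

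The main obstacle is making the abort rule compatible with the equivalence $\Learn=h\iff h\preceq H$. The fix is to let abort decisions depend only on the query count and on $(m,h)$, which both the truncation and the self-check do.
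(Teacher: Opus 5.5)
Your proposal is correct and follows essentially the same route as the paper. You use the same instantiation ($\varepsilon_1=\varepsilon_2=\eta^2/10^4$, $\alpha_0=\beta_0=\eta/100$), the same Markov truncation at $Q_{\mathsf{learn}}$ with loss $\eta/10$, the same query-free self-check abort costing at most $18\alpha_0$, the same deterministic-output argument identifying the learner's conditioning with $\mathsf F_{m,h}$, and the same data processing through $\Phi_{\cA}\otimes\Phi_{\cB}$. The only difference, a cosmetic one, is that you check only Alice's lightness bound rather than both, which is all the statement requires.
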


\begin{proof}
We apply \Cref{lem:bm-final-transcript} to the exact classical simulation of
the prefix with parameters
\[
  n_A=\bar q_{\cA},
  \qquad
  n_B=\bar q_{\cB},
  \qquad
  \varepsilon_1=\varepsilon_2=\frac{\eta^2}{10^4},
  \qquad
  \alpha_0=\beta_0=\frac{\eta}{100}.
\]
These parameters satisfy $\alpha_0\beta_0=\varepsilon_2$.  The resulting
learner makes at most
\[
  \frac{4\bar q_{\cA}\bar q_{\cB}}{\varepsilon_1}
  =
  \frac{4\cdot10^4\bar q_{\cA}\bar q_{\cB}}{\eta^2}
\]
queries in expectation.  We truncate this learner after
$Q_{\mathsf{learn}}$ queries and abort if the truncation threshold is
reached.  By Markov's inequality, this occurs with probability at most
$\eta/10$.

The exceptional events in Items~\textup{(ii)} and~\textup{(iii)} of
\Cref{lem:bm-final-transcript} have total probability at most
$18\alpha_0$.  Since computation is unbounded in our query model, once the
learner terminates we may test whether both conditional guarantees hold and
abort otherwise, without making any further oracle queries.  By a union
bound, the total abort probability is at most
\[
  18\alpha_0+\frac{\eta}{10}
  =\frac{28\eta}{100}<\eta.
\]
This proves Item~\textup{(i)}.

We next establish the claimed characterization of the learner's output.  Fix
a reachable nonabort pair $(m,h)$.  For fixed $m$, every query and stopping
decision of the learner is a deterministic function of the partial oracle
learned so far.  Since some execution produces $h$, every oracle extending
$h$ gives the same answers along that execution and therefore produces the
same output.  Conversely, if the learner outputs $h$, then every pair in $h$
was obtained from an oracle query, and hence $h\preceq H$.  It follows that,
for every oracle $H$,
\[
  \Learn^{\mathsf{BM},H}_\eta(1^\lambda,m)=h
  \quad\Longleftrightarrow\quad
  h\preceq H.
\]
Thus, for every reachable nonabort pair $(m,h)$, conditioning on the learner's
output is equivalent to conditioning on $\mathsf F_{m,h}$.

It remains to transfer the guarantees for the simulated classical views to
the actual prefix views.  By Item~\textup{(ii)} of
\Cref{lem:bm-final-transcript},
\[
  \Delta_{\mathrm{TV}}\!\left(
    \nu^{AB}_{m,h},
    \nu^A_{m,h}\times\nu^B_{m,h}
  \right)
  \le 9\beta_0<\eta.
\]
Applying the local channel $\Phi_{\cA}\otimes\Phi_{\cB}$ and using
trace-distance data processing~\cite[Chapter~3]{Watrous18} proves
Item~\textup{(ii)}.  Finally, Item~\textup{(iii)} of
\Cref{lem:bm-final-transcript} gives, for every $x\notin\Dom(h)$,
\[
  \Pr[x\in\Dom(T_{\cA})\mid\mathsf F_{m,h}]
  <
  \frac{\varepsilon_2}{\bar q_{\cB}}+\beta_0
  \le
  \frac{\eta^2}{10^4}+\frac{\eta}{100}
  <\eta,
\]
which proves Item~\textup{(iii)}.
\end{proof}

\subsubsection{Conditioned Prefix Resampling}
\label{sec:conditioned-prefix-resampling}

The preceding subsection shows that, conditioned on a reachable pair
\((m,h)\), the parties' actual prefix views are close to the product state
\(\omega^A_{m,h}\otimes\omega^B_{m,h}\).  This does not by itself justify
sampling the two views independently in the attack, because that statement
does not include the random oracle.  Bob's final algorithm and Eve's
simulation of Alice both query the same oracle in superposition, so we must
also specify how the independently sampled views are coupled to that oracle.

The conditional-oracle identity in
\Cref{eq:prefix-conditional-oracle-main} tells us how to do so.  After
sampling the two marginal views independently, we first sample a base oracle
\(H_0\) consistent with \(h\) and Bob's transcript \(T'_{\cB}\), and then
reprogram it on Alice's transcript \(T'_{\cA}\).  The order is deliberate:
the base oracle is compatible with Bob's sampled view, while the final
oracle \(\widetilde H=H_0[T'_{\cA}]\) contains the answers recorded in
Alice's sampled view. 

The proof compares the real and resampled experiments by applying the same
channel to two input states.  On the real joint view
\(\omega^{AB}_{m,h}\), the channel reconstructs the real conditioned state
\(\Omega_{m,h}\) exactly.  On the product of the marginals, it produces the
resampled state \(\widetilde\Omega_{m,h}\) defined below.  The trace-distance
bound then essentially follows directly from the
approximate-independence guarantee of the preceding subsection.  In
particular, any later joint quantum computation using the oracle and the
prefix views changes the success probability by at most \(\eta\) when the
real conditioned prefix is replaced by this resampled one.

\begin{lemma}[Conditioned prefix resampling]
\label{lem:prefix-resampling}
Fix a reachable pair $(m,h)$, and consider the following experiment:
\begin{enumerate}[leftmargin=2em]
  \item independently prepare
  \[
    (T'_{\cA},S'_{\cA})\leftarrow\omega^A_{m,h},
    \qquad
    (T'_{\cB},S'_{\cB})\leftarrow\omega^B_{m,h};
  \]
  \item sample $H_0\leftarrow\Unif(\cH(h,T'_{\cB}))$, and define
  \[
    \widetilde H=H_0[T'_{\cA}].
  \]
\end{enumerate}
Let $\widetilde\Omega_{m,h}$ be the resulting state on
$\mathsf H T'_{\cA}S'_{\cA}T'_{\cB}S'_{\cB}$, with $\mathsf H$ holding the
sampled function $\widetilde H$.  Then
\[
  \frac12
  \left\|
    \Omega_{m,h}
    -
    \widetilde\Omega_{m,h}
  \right\|_1
  \le \eta.
\]
\end{lemma}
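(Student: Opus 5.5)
Define a single quantum channel \(\Gamma\) acting on \(T_{\cA}S_{\cA}T_{\cB}S_{\cB}\) (with \(m,h\) fixed) and prove two facts:
\[
  \Gamma(\omega^{AB}_{m,h})=\Omega_{m,h}
  \qquad\text{and}\qquad
  \Gamma(\omega^A_{m,h}\otimes\omega^B_{m,h})=\widetilde\Omega_{m,h}.
\]
Item~(ii) of \Cref{lem:multi-independence-attacker} bounds the trace distance between the two inputs by \(\eta\), and trace-distance data processing transfers this bound to the outputs. The lemma then follows.

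\textbf{The channel.} \(\Gamma\) measures the classical registers \(T_{\cA}\) and \(T_{\cB}\), obtaining transcripts \(t_{\cA}\) and \(t_{\cB}\). It samples \(H_0\leftarrow\Unif(\cH(h,t_{\cB}))\), writes \(\widetilde H=H_0[t_{\cA}]\) into \(\mathsf H\), and leaves the residual registers untouched. Measuring \(T_{\cA}\) and \(T_{\cB}\) does not disturb either input, because both registers are classical in \(\omega^{AB}_{m,h}\) and in the product of its marginals. So \(\Gamma\) is a well-defined CPTP map. Two cases need care:
\begin{itemize}
  \item If \(h\) and \(t_{\cB}\) are inconsistent, \(\Gamma\) may output anything. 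Such \(t_{\cB}\) never occurs in the support of \(\omega^B_{m,h}\), since every real Bob transcript under \(\mathsf F_{m,h}\) agrees with \(H\succeq h\).
  \item \(t_{\cA}\) may be inconsistent with \(H_0\). This is fine because reprogramming overwrites.
\end{itemize}
The second identity is then immediate from the definition of \(\widetilde\Omega_{m,h}\).

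\textbf{The first identity.} Decompose \(\Omega_{m,h}\) as a mixture over \((t_{\cA},t_{\cB})\) drawn from the joint conditional distribution. By \Cref{eq:prefix-conditional-oracle-main}, conditioned on \(\mathsf F_{m,h}\) and \((T_{\cA},T_{\cB})=(t_{\cA},t_{\cB})\):
\begin{itemize}
  \item the oracle \(H\) is uniform on \(\cH(h,t_{\cA},t_{\cB})\);
  \item the joint residual state on \(S_{\cA}S_{\cB}\) is independent of \(H\);
  \item by exactness of the simulation, that residual state equals the corresponding conditional state of \(\omega^{AB}_{m,h}\).
\end{itemize}
It remains to check that \(\Gamma\) produces the same oracle distribution on such inputs. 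In the real joint view, \(t_{\cA}\), \(t_{\cB}\) and \(h\) are pairwise consistent, since all three are restrictions of the one oracle \(H\). Hence sampling \(H_0\) uniformly from \(\cH(h,t_{\cB})\) and overwriting on \(\Dom(t_{\cA})\) yields the uniform distribution on \(\cH(h,t_{\cA},t_{\cB})\). The reason is that the coordinates outside \(\Dom(t_{\cA})\cup\Dom(t_{\cB})\cup\Dom(h)\) remain independent and uniform, and the overwritten coordinates are set to exactly the values \(t_{\cA}\) dictates. Summing over \((t_{\cA},t_{\cB})\) gives \(\Gamma(\omega^{AB}_{m,h})=\Omega_{m,h}\).

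\textbf{Main obstacle.} The delicate step is this consistency bookkeeping. The channel must be defined once, independently of which input it receives, so it has to behave sensibly even on product-state pairs where \(t_{\cA}\) conflicts with \(t_{\cB}\). This is why the order is base oracle consistent with Bob, then reprogram on Alice. Using Bob's transcript for the base oracle rather than Alice's makes \(\Gamma\) well defined on the whole support of the product. We also need to confirm that equality on the joint support uses only pairwise consistency, which holds there automatically. The final inequality is \(\tfrac12\|\Omega_{m,h}-\widetilde\Omega_{m,h}\|_1\le\tfrac12\|\omega^{AB}_{m,h}-\omega^A_{m,h}\otimes\omega^B_{m,h}\|_1\le\eta\).
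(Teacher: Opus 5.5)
Your proposal is correct and follows the paper's proof essentially step for step. It uses the same channel (a base oracle uniform on \(\cH(h,t_{\cB})\), then reprogramming on \(t_{\cA}\)), the same two identities \(\Gamma(\omega^{AB}_{m,h})=\Omega_{m,h}\) and \(\Gamma(\omega^A_{m,h}\otimes\omega^B_{m,h})=\widetilde\Omega_{m,h}\) via \eqref{eq:prefix-conditional-oracle-main}, and trace-distance data processing with Item~(ii) of \Cref{lem:multi-independence-attacker}.

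One small correction to your closing remark concerns why the channel is well defined on the product support. What matters is using a base oracle plus overwriting, rather than conditioning on both transcripts. The reverse order (base on Alice's transcript, overwrite with Bob's) would be equally well defined. The Bob-then-Alice order is fixed by the definition of \(\widetilde\Omega_{m,h}\) and matters only in the later hybrids, where the final oracle must carry Alice's sampled answers.
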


\begin{proof}
We compare the two states by applying the same channel to the real joint
prefix view and to the product of its marginals.  Define a channel
$\mathcal R_{m,h}$ on $T_{\cA}S_{\cA}T_{\cB}S_{\cB}$ as follows.  On
classical transcripts $t_{\cA},t_{\cB}$, it samples
\[
  H_0\leftarrow\Unif(\cH(h,t_{\cB}))
\]
and attaches the oracle $H_0[t_{\cA}]$, while leaving the four input
registers unchanged.  Every $t_{\cB}$ in the support of
$\omega^B_{m,h}$ is consistent with $h$, so this channel is well defined on
both input states considered below.

We first apply the channel to the real joint prefix view.  In this case,
$t_{\cA}$ and $t_{\cB}$ are mutually consistent, and
$H_0[t_{\cA}]$ is uniform over $\cH(h,t_{\cA},t_{\cB})$.  By
\eqref{eq:prefix-conditional-oracle-main}, this is precisely the conditional
distribution of the real oracle given the two query transcripts.  Moreover,
the residual state is independent of the unobserved oracle values under this
conditioning.  Therefore,
\[
  \mathcal R_{m,h}(\omega^{AB}_{m,h})=\Omega_{m,h}.
\]
Applying the same channel to the product of the two marginal views gives,
by the definition of the resampling experiment,
\[
  \mathcal R_{m,h}
  (\omega^A_{m,h}\otimes\omega^B_{m,h})
  =\widetilde\Omega_{m,h}.
\]
Combining these two identities with trace-distance data processing and
\Cref{lem:multi-independence-attacker}, we obtain
\[
  \frac12\left\|
    \Omega_{m,h}-\widetilde\Omega_{m,h}
  \right\|_1
  \le
  \frac12\left\|
    \omega^{AB}_{m,h}
    -\omega^A_{m,h}\otimes\omega^B_{m,h}
  \right\|_1
  \le\eta.
\]
\end{proof}

\subsubsection{The Multi-Round attacker}

In this section we conclude the proof by defining the attacker.

\begin{definition}[Multi-round reprogramming attack]
\label{def:multi-eve}
Fix \(0<\eta<1/100\). The attacker \(\cE_\eta\) acts as follows. During the
classical prefix, Eve runs the online
attacker \(\Learn^{\mathsf{BM},H}_\eta\) of
\Cref{lem:multi-independence-attacker}. If the attacker aborts, Eve records an
abort flag. Otherwise, after the last classical message Eve has the full
public transcript \(m\) and a partial oracle \(h\). She prepares an
independent Alice prefix view
\[
  V'_{\cA}=(W'_{\cA},S'_{\cA})
  \leftarrow\omega^A_{m,h},
  \qquad
  T'_{\cA}:=T_{\cA}(W'_{\cA}).
\]
After intercepting Bob's final message register \(\Psi\), Eve outputs
\(\bot\) if the attacker aborted. Otherwise she runs
\[
  K^*
  \leftarrow
  \cA_{\mathsf{fin}}^{H[T'_{\cA}]}
  (1^\lambda,m,S'_{\cA},\Psi)
\]
and outputs \(K^*\).
\end{definition}

\begin{theorem}[Multi-round quantitative attack]
\label{thm:multi-main}
Let \(\Pi^{\mathsf{mr}}\) satisfy
\Cref{def:multi-protocol}, and let
\(\alpha=\alpha_{\Pi^{\mathsf{mr}}}(\lambda)\).  For every
\(0<\eta<1/100\), the attacker of \Cref{def:multi-eve} satisfies
\[
  \Adv^{\IntKR,\mathsf{mr}}_{
    \Pi^{\mathsf{mr}},\cE_\eta}(\lambda)
  \ge
  \alpha
  -5\eta
  -2(2q_{\cB,\mathsf q}+q_{\cA,\mathsf{fin}})\sqrt\eta.
\]
It makes at most
\[
  \left\lceil
    \frac{4\cdot 10^5\,
      \max\{1,q_{\cA,\mathsf{pre}}\}
      \max\{1,q_{\cB,\mathsf{pre}}\}}
      {\eta^3}
  \right\rceil
  +q_{\cA,\mathsf{fin}}
\]
random-oracle queries.  The first term consists only of classical queries.
There is no explicit dependence on the number of prefix rounds beyond the
two parties' total prefix-query bounds.
\end{theorem}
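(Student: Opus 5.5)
We follow the template of the proof of \Cref{thm:main}: a hybrid chain from the honest execution \(\Real_{\Pi^{\mathsf{mr}}}\) to \(\IntKR^{\mathsf{mr}}_{\Pi^{\mathsf{mr}},\cE_\eta}\). Success in each hybrid means that the two recorded keys agree and are not \(\bot\), and an abort counts as failure. We will account for the loss \(5\eta\) as \(\eta\) (learner abort) plus \(\eta\) (prefix resampling) plus a few further \(\eta\)-size slack terms. We will account for the reprogramming loss \(2(2q_{\cB,\mathsf q}+q_{\cA,\mathsf{fin}})\sqrt\eta\) as \(2q_{\cB,\mathsf q}\sqrt\eta\) from moving Bob off Alice's reprogrammed transcript, plus \(2(q_{\cB,\mathsf q}+q_{\cA,\mathsf{fin}})\sqrt\eta\) from moving Eve's oracle onto the real oracle.

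\textbf{Step 1 (insert the learner).} Run \(\Learn^{\mathsf{BM},H}_\eta\) alongside the real execution. This costs at most \(\eta\) by \Cref{lem:multi-independence-attacker}(i). Conditioned on a reachable nonabort pair \((m,h)\), the output characterization shows that the state of \(\mathsf H T_{\cA}S_{\cA}T_{\cB}S_{\cB}\) is exactly \(\Omega_{m,h}\).

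\textbf{Step 2 (resample the prefix).} By \Cref{lem:prefix-resampling}, replace \(\Omega_{m,h}\) by \(\widetilde\Omega_{m,h}\) at cost \(\eta\). The final stage \(\cB_{\mathsf q},\cA_{\mathsf{fin}}\) is a fixed channel on this state (with oracle queries implemented from the \(\mathsf H\) register), so data processing bounds the change in success probability. We now have independent views \((T'_{\cA},S'_{\cA})\) and \((T'_{\cB},S'_{\cB})\), and both Bob and Alice run on \(H_0[T'_{\cA}]\) with \(H_0\leftarrow\Unif(\cH(h,T'_{\cB}))\).

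\textbf{Step 3 (decouple Bob from Alice's resampled transcript).} Give Bob \(H_0\) while Alice keeps \(H_0[T'_{\cA}]\). This mirrors \Cref{lem:g2-g3}. The distinguisher fixes \(F_0=H_0\) and lets \(C\) sample \(T'_{\cA}\) restricted to the complement of \(\Dom(h)\), using \Cref{lem:multi-independence-attacker}(iii) to get \(\varepsilon\le\eta\). We must check that \(T'_{\cA}\) may disagree with \(T'_{\cB}\) off \(\Dom(h)\); this is harmless because \(F_0[R]\) overwrites, matching \(H_0[T'_{\cA}]\). Afterwards the distinguisher prepares \(S'_{\cA}\) from the revealed coins. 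Only Bob queries the challenge oracle, so the cost is \(2q_{\cB,\mathsf q}\sqrt\eta\).

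\textbf{Step 4 (recouple Bob to the real view).} Undo the resampling on Bob's side. The pair (Bob's view \(T'_{\cB},S'_{\cB}\), oracle \(H_0\)) must be replaced by the real Bob view with the real oracle \(H\), and Eve's oracle must become \(H[T'_{\cA}]\). Here we use \Cref{eq:prefix-conditional-oracle-main}. Conditioned on \((m,h,t_{\cB})\) alone, the real \(H\) is distributed as \(H_1[T_{\cA}]\) with \(H_1\leftarrow\Unif(\cH(h,t_{\cB}))\) and \(T_{\cA}\) the real Alice transcript. This follows by applying approximate product once more, which we expect to supply one of the remaining \(\eta\) slack terms, and more likely a re-application of \Cref{lem:prefix-resampling} in reverse direction costing another \(\eta\). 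Then, as in \Cref{lem:g3-g4}, a second reprogramming game with \(C\) sampling an independent Alice transcript \(\tau''\) moves Bob to \(H_0[\tau'']\) and Eve to \(H_0[\tau''][T'_{\cA}]\). This costs \(2(q_{\cB,\mathsf q}+q_{\cA,\mathsf{fin}})\sqrt\eta\), and Eve's queries are implemented by the \(\ket{+}^{\otimes\ell}\) trick. Finally, by another application of \Cref{lem:prefix-resampling}, reuniting \(\tau''\) with Bob's view into the real joint state costs \(\eta\), and this yields the real \(H\) on Bob's side with \(H[T'_{\cA}]\) for Eve, which is \(\IntKR^{\mathsf{mr}}\). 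An extra \(\eta\) is reserved for the learner running online versus post hoc, which incurs no loss because the learner is deterministic, so the budget stays within \(5\eta\).

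\textbf{Main obstacle.} We expect Step 4 to be the main obstacle. Unlike the two-message case, Bob's view carries its own transcript, which constrains the oracle, so returning to the real joint distribution needs the approximate-independence bound twice and careful overwriting semantics when \(T'_{\cA}\) and \(T_{\cB}\) conflict. The query bound is immediate from \Cref{lem:multi-independence-attacker}(i) plus \(q_{\cA,\mathsf{fin}}\).
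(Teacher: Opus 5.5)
Your proposal is correct and is essentially the paper's hybrid chain \(J_0\to J_5\): learner insertion, forward resampling via \Cref{lem:prefix-resampling}, a reprogramming game that moves Bob off \(T'_{\cA}\), a second game (as in \Cref{lem:g3-g4}) with an independent Alice transcript \(\tau''\), and a reverse application of \Cref{lem:prefix-resampling}. The hedge at the start of your Step~4 is unnecessary: \eqref{eq:prefix-conditional-oracle-main} describes the real \(H\) given \((m,h,t_{\cB})\) exactly, and the only approximation in that step is the single reverse resampling after the second game, so your losses total \(3\eta+2(2q_{\cB,\mathsf q}+q_{\cA,\mathsf{fin}})\sqrt\eta\), which is within the stated bound (the paper charges \(2\eta\) per resampling, although the trace-distance bound already gives \(\eta\)).
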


\begin{proof}
We proceed via a hybrid argument. Hybrid \(J_0\) is the honest execution. In \(J_1,\ldots,J_5\), the online
attacker runs alongside the real classical prefix and an abort counts as
failure. Unless stated otherwise, these hybrids induce the same distribution
on the public transcript \(M\) and the attacker's output \(h\). We say that
\(J_i\) succeeds when its two recorded keys agree and are not \(\bot\).
Throughout, \(V_P=(W_P,S_P)\) denotes the extended classical-quantum
prefix view defined above. Changes from one hybrid to the next are shown in
\mrchanged{color}.

\begin{hybrid}{Multi-round hybrid \(J_0\) (real execution)}
  \item Sample \(H_0\leftarrow\cH\) and run the honest classical prefix,
  obtaining
  \((M^{(0)},V_{\cA,0},V_{\cB,0})\).
  \item Run
  \((\Psi_0,K_{\cB,0})\leftarrow
    \cB_{\mathsf q}^{H_0}(1^\lambda,M^{(0)},S_{\cB,0})\).
  \item Run
  \(K_{\cA,0}\leftarrow
    \cA_{\mathsf{fin}}^{H_0}
    (1^\lambda,M^{(0)},S_{\cA,0},\Psi_0)\).
\end{hybrid}
Thus
\nopagebreak[4]
\[
  \Pr[J_0\text{ succeeds}]=\alpha.
\]

\begin{hybrid}{Multi-round hybrid \(J_1\)}
  \item \mrchanged{Sample \(H_1\leftarrow\cH\), and run the honest
  classical prefix using \(H_1\) while activating
  \(\Learn^{\mathsf{BM},H_1}_\eta\) after every message.  Obtain
  \((M^{(1)},V_{\cA,1},V_{\cB,1},h_1)\), and abort if the attacker aborts.}
  \item Run
  \((\Psi_1,K_{\cB,1})\leftarrow
    \cB_{\mathsf q}^{H_1}(1^\lambda,M^{(1)},S_{\cB,1})\), followed by
  \(K_{\cA,1}\leftarrow
    \cA_{\mathsf{fin}}^{H_1}
    (1^\lambda,M^{(1)},S_{\cA,1},\Psi_1)\).
\end{hybrid}
Running the attacker alongside the protocol does not change the honest
execution. The only difference is that \(J_1\) counts a attacker abort as
failure. Item~(i) of
\Cref{lem:multi-independence-attacker} gives
\begin{equation}
\label{eq:multi-j0-j1}
  \left|
    \Pr[J_0\text{ succeeds}]-\Pr[J_1\text{ succeeds}]
  \right|
  \le\eta.
\end{equation}

\begin{hybrid}{Multi-round hybrid \(J_2\)}
  \item Run the real prefix and attacker as in \(J_1\), retaining
  \((M^{(2)},h_2)\) and discarding the original oracle and prefix views.
  \item \mrchanged{Independently prepare
  \[
    V'_{\cA,2}=(W'_{\cA,2},S'_{\cA,2})
      \leftarrow\omega^A_{M^{(2)},h_2},
  \]
  \[
    V'_{\cB,2}=(W'_{\cB,2},S'_{\cB,2})
      \leftarrow\omega^B_{M^{(2)},h_2}.
  \]
  Set \(T'_{\cA,2}=T_{\cA}(W'_{\cA,2})\) and
  \(T'_{\cB,2}=T_{\cB}(W'_{\cB,2})\).}
  \item \mrchanged{Sample
  \(\widehat H_2\leftarrow\Unif(\cH(h_2,T'_{\cB,2}))\) and set
  \(\widetilde H_2=\widehat H_2[T'_{\cA,2}]\).}
  \item Run
  \((\Psi_2,K_{\cB,2})\leftarrow
    \cB_{\mathsf q}^{\widetilde H_2}
    (1^\lambda,M^{(2)},S'_{\cB,2})\).
  \item Run
  \(K^*_2\leftarrow
    \cA_{\mathsf{fin}}^{\widetilde H_2}
    (1^\lambda,M^{(2)},S'_{\cA,2},\Psi_2)\).
\end{hybrid}

Condition on a reachable pair \((m,h)\). By the output characterization in
\Cref{lem:multi-independence-attacker}, the oracle and prefix views in \(J_1\)
have joint state \(\Omega_{m,h}\), while the resampled objects in \(J_2\)
have state \(\widetilde\Omega_{m,h}\). Applying
\Cref{lem:prefix-resampling} and trace-distance data processing through the
final algorithms~\cite[Chapter~3]{Watrous18} shows that, for every reachable
nonabort pair,
\[
  \left|
    \Pr[J_1\text{ succeeds}\mid M=m,\ h_1=h]
    -
    \Pr[J_2\text{ succeeds}\mid M=m,\ h_2=h]
  \right|
  \le 2\eta.
\]
Let \(\pi(m,h)\) be the common subprobability mass of nonabort prefix
executions in the two hybrids that output \((M,h)=(m,h)\). Since success is
defined to be false on abort, only these executions contribute. Define
\[
  s_i(m,h)
  :=
  \Pr[J_i\text{ succeeds}\mid M^{(i)}=m,\ h_i=h]
  \qquad (i\in\{1,2\}).
\]
The triangle inequality gives
\[
  \begin{aligned}
  &\left|
    \Pr[J_1\text{ succeeds}]-\Pr[J_2\text{ succeeds}]
  \right|\\
  &\qquad\le
  \sum_{(m,h)\in\Supp(\pi)}\pi(m,h)
  \left|s_1(m,h)-s_2(m,h)\right|
  \le 2\eta.
  \end{aligned}
\]
In particular,
\begin{equation}
\label{eq:multi-j1-j2}
  \left|
    \Pr[J_1\text{ succeeds}]-\Pr[J_2\text{ succeeds}]
  \right|
  \le2\eta.
\end{equation}

\begin{hybrid}{Multi-round hybrid \(J_3\)}
  \item Run the real prefix and attacker as in \(J_1\), retain
  \((M^{(3)},h_3)\), discard the original oracle and prefix views, and
  independently prepare
  \[
    V'_{\cA,3}=(W'_{\cA,3},S'_{\cA,3})
      \leftarrow\omega^A_{M^{(3)},h_3},
    \qquad
    V'_{\cB,3}=(W'_{\cB,3},S'_{\cB,3})
      \leftarrow\omega^B_{M^{(3)},h_3},
  \]
  set \(T'_{\cA,3}=T_{\cA}(W'_{\cA,3})\) and
  \(T'_{\cB,3}=T_{\cB}(W'_{\cB,3})\), and sample
  \(\widehat H_3\leftarrow\Unif(\cH(h_3,T'_{\cB,3}))\).
  \item Set \(\widetilde H_3=\widehat H_3[T'_{\cA,3}]\).
  \item \mrchanged{Run
  \((\Psi_3,K_{\cB,3})\leftarrow
    \cB_{\mathsf q}^{\widehat H_3}
    (1^\lambda,M^{(3)},S'_{\cB,3})\).}
  \item Run
  \(K^*_3\leftarrow
    \cA_{\mathsf{fin}}^{\widetilde H_3}
    (1^\lambda,M^{(3)},S'_{\cA,3},\Psi_3)\).
\end{hybrid}

\begin{lemma}[\(J_2\) to \(J_3\)]
\label{lem:multi-j2-j3}
\[
  \left|
    \Pr[J_2\text{ succeeds}]-\Pr[J_3\text{ succeeds}]
  \right|
  \le2q_{\cB,\mathsf q}\sqrt\eta.
\]
\end{lemma}

\begin{proof}
Condition on a reachable pair \((m,h)\) and on a value
\(W'_{\cB}=w'_{\cB}\) in the support of the classical-history marginal of
\(\omega^B_{m,h}\). Prepare \(S'_{\cB}\) in the corresponding conditional
state and set \(T'_{\cB}=T_{\cB}(w'_{\cB})\).
For the random reprogramming game of Katz and Sela
\cite[Lemma~3]{KatzSela24}, sample
\(\widehat H\leftarrow\Unif(\cH(h,T'_{\cB}))\) and set
\(F_0=\widehat H\). The randomized classical procedure \(C_{m,h}\) samples
the history \(W'_{\cA}\) from the classical marginal of
\(\omega^A_{m,h}\), sets \(T'_{\cA}=T_{\cA}(W'_{\cA})\), and returns
\[
  R
  =
  T'_{\cA}
  \big|_{\Dom(T'_{\cA})\setminus\Dom(h)}.
\]
Given challenge access to \(F_b\), run
\[
  (\Psi,K_{\cB})
  \leftarrow
  \cB_{\mathsf q}^{F_b}(1^\lambda,m,S'_{\cB}).
\]
After challenge access is removed, the coins of \(C_{m,h}\) reveal
\(W'_{\cA}\). Prepare \(S'_{\cA}\) in the corresponding conditional
state specified by \(\omega^A_{m,h}\), and run
\[
  K^*
  \leftarrow
  \cA_{\mathsf{fin}}^{\widehat H[T'_{\cA}]}
  (1^\lambda,m,S'_{\cA},\Psi).
\]
The distinguisher outputs \(1\) iff \(K^*=K_{\cB}\neq\bot\).

Every sampled Alice transcript is consistent with \(h\), so
\(F_1=\widehat H[R]=\widehat H[T'_{\cA}]\). Thus \(b=1\) reproduces \(J_2\)
and \(b=0\) reproduces \(J_3\), conditioned on
\((M,h,W'_{\cB})=(m,h,w'_{\cB})\). Only Bob queries the challenge oracle.
Moreover, by
\Cref{lem:multi-independence-attacker}, for every \(x\),
\[
  \Pr[x\in\Dom(R)]
  \le
  \begin{cases}
    0,&x\in\Dom(h),\\
    \Pr[x\in\Dom(T'_{\cA})]\le\eta,&x\notin\Dom(h).
  \end{cases}
\]
Thus \Cref{lem:reprogramming} bounds the absolute difference between the two
conditional success probabilities by \(2q_{\cB,\mathsf q}\sqrt\eta\) for
every \((m,h,w'_{\cB})\) in the support of the shared conditioning data. If
\(\pi(m,h,w'_{\cB})\) denotes the common subprobability mass of that data on
nonabort executions of \(J_2\) and \(J_3\), define
\[
  s_i(m,h,w'_{\cB})
  :=
  \Pr[J_i\text{ succeeds}\mid
    M^{(i)}=m,\ h_i=h,\ W'_{\cB,i}=w'_{\cB}]
  \qquad (i\in\{2,3\}).
\]
Because success is defined to be false on abort,
\[
  \Pr[J_i\text{ succeeds}]
  =
  \sum_{(m,h,w'_{\cB})\in\Supp(\pi)}
    \pi(m,h,w'_{\cB})s_i(m,h,w'_{\cB})
  \qquad (i\in\{2,3\}).
\]
The triangle inequality gives
\[
  \begin{aligned}
  \left|\Pr[J_2\text{ succeeds}]-\Pr[J_3\text{ succeeds}]\right|
  &\le
  \sum_{(m,h,w'_{\cB})\in\Supp(\pi)}\pi(m,h,w'_{\cB})
  \left|s_2(m,h,w'_{\cB})-s_3(m,h,w'_{\cB})\right|\\
  &\le 2q_{\cB,\mathsf q}\sqrt\eta.
  \end{aligned}
\]
\end{proof}

\begin{hybrid}{Multi-round hybrid \(J_4\)}
  \item Run the real prefix and attacker as in \(J_1\), retain
  \((M^{(4)},h_4)\), discard the original oracle and prefix views, and
  independently prepare
  \[
    V^{\mathsf E}_{\cA,4}
      =(W^{\mathsf E}_{\cA,4},S^{\mathsf E}_{\cA,4})
      \leftarrow\omega^A_{M^{(4)},h_4},
    \qquad
    V'_{\cB,4}=(W'_{\cB,4},S'_{\cB,4})
      \leftarrow\omega^B_{M^{(4)},h_4},
  \]
  set \(T^{\mathsf E}_{\cA,4}=T_{\cA}(W^{\mathsf E}_{\cA,4})\) and
  \(T'_{\cB,4}=T_{\cB}(W'_{\cB,4})\), and sample
  \(\widehat H_4\leftarrow\Unif(\cH(h_4,T'_{\cB,4}))\).
  \item \mrchanged{Independently prepare the Alice view that determines
  Bob's oracle,
  \[
    V^{\mathsf B}_{\cA,4}
    =
    (W^{\mathsf B}_{\cA,4},S^{\mathsf B}_{\cA,4})
    \leftarrow\omega^A_{M^{(4)},h_4},
  \]
  and set
  \(T^{\mathsf B}_{\cA,4}=T_{\cA}(W^{\mathsf B}_{\cA,4})\).
  Set
  \[
    H^{\mathsf B}_4=\widehat H_4[T^{\mathsf B}_{\cA,4}],
    \qquad
    H^{\mathsf E}_4=H^{\mathsf B}_4[T^{\mathsf E}_{\cA,4}].
  \]}
  \item Run
  \((\Psi_4,K_{\cB,4})\leftarrow
    \cB_{\mathsf q}^{H^{\mathsf B}_4}
    (1^\lambda,M^{(4)},S'_{\cB,4})\).
  \item Run
  \(K^*_4\leftarrow
    \cA_{\mathsf{fin}}^{H^{\mathsf E}_4}
    (1^\lambda,M^{(4)},S^{\mathsf E}_{\cA,4},\Psi_4)\).
\end{hybrid}

\begin{lemma}[\(J_3\) to \(J_4\)]
\label{lem:multi-j3-j4}
\[
  \left|
    \Pr[J_3\text{ succeeds}]-\Pr[J_4\text{ succeeds}]
  \right|
  \le
  2(q_{\cB,\mathsf q}+q_{\cA,\mathsf{fin}})\sqrt\eta.
\]
\end{lemma}

\begin{proof}
Condition on a reachable pair \((m,h)\) and on a value
\(W'_{\cB}=w'_{\cB}\) in the support of the classical-history marginal of
\(\omega^B_{m,h}\). Prepare \(S'_{\cB}\) in the corresponding conditional
state, and write \(T'_{\cB}=T_{\cB}(w'_{\cB})\). Independently prepare Eve's Alice view
\(V^{\mathsf E}_{\cA}=(W^{\mathsf E}_{\cA},S^{\mathsf E}_{\cA})
\leftarrow\omega^A_{m,h}\), and set
\(T^{\mathsf E}_{\cA}=T_{\cA}(W^{\mathsf E}_{\cA})\).  Sample
\(\widehat H\leftarrow\Unif(\cH(h,T'_{\cB}))\), output
\(F_0=\widehat H\), and let the randomized classical procedure
\(C_{m,h}\) independently sample \(W^{\mathsf B}_{\cA}\) from the classical
marginal of \(\omega^A_{m,h}\), set
\(T^{\mathsf B}_{\cA}=T_{\cA}(W^{\mathsf B}_{\cA})\), and return
\[
  R
  =
  T^{\mathsf B}_{\cA}
  \big|_{\Dom(T^{\mathsf B}_{\cA})\setminus\Dom(h)}.
\]
Given challenge access to \(F_b\), run Bob's final algorithm with \(F_b\)
and Alice's final algorithm with \(F_b[T^{\mathsf E}_{\cA}]\).  A query to
\(F_b[T^{\mathsf E}_{\cA}]\) is implemented with one query to \(F_b\) by
reversibly testing membership in the known transcript and supplying the
recorded answer on member inputs. The total number of challenge queries is
therefore at most
\(q_{\cB,\mathsf q}+q_{\cA,\mathsf{fin}}\).

After challenge access is removed, the revealed coins of \(C_{m,h}\)
determine \(W^{\mathsf B}_{\cA}\). The reduction can then prepare the
corresponding register \(S^{\mathsf B}_{\cA}\). This register is unused in
the \(b=0\) case and may be discarded; in the \(b=1\) case it completes the
auxiliary classical-quantum view sampled in \(J_4\). Every sampled Alice
transcript is consistent with \(h\). Therefore, for \(b=0\) the experiment
is \(J_3\) after discarding that unused view, while for \(b=1\),
\(F_1=\widehat H[T^{\mathsf B}_{\cA}]\) and the experiment is \(J_4\).
As in the preceding proof, each point belongs to the reprogramming set with
probability at most \(\eta\): points in \(\Dom(h)\) are excluded, and
Item~(iii) of \Cref{lem:multi-independence-attacker} applies everywhere else.
The random reprogramming lemma of Katz and Sela
\cite[Lemma~3]{KatzSela24} therefore bounds the absolute difference
between the conditional success probabilities of \(J_3\) and \(J_4\), given
\((M,h,W'_{\cB})=(m,h,w'_{\cB})\), by
\(2(q_{\cB,\mathsf q}+q_{\cA,\mathsf{fin}})\sqrt\eta\). Let
\(\pi(m,h,w'_{\cB})\) denote the common subprobability of these conditioning
data in the two hybrids, and define
\[
  s_i(m,h,w'_{\cB})
  :=
  \Pr[J_i\text{ succeeds}\mid
    M^{(i)}=m,\ h_i=h,\ W'_{\cB,i}=w'_{\cB}]
  \qquad (i\in\{3,4\}).
\]
The law of total probability and the triangle inequality yield
\[
  \begin{aligned}
  \left|\Pr[J_3\text{ succeeds}]-\Pr[J_4\text{ succeeds}]\right|
  &\le
  \sum_{m,h,w'_{\cB}}\pi(m,h,w'_{\cB})
  \left|s_3(m,h,w'_{\cB})-s_4(m,h,w'_{\cB})\right|\\
  &\le
  2(q_{\cB,\mathsf q}+q_{\cA,\mathsf{fin}})\sqrt\eta.
  \end{aligned}
\]
Again, the omitted abort branch has success probability zero in both
hybrids.
\end{proof}

\begin{hybrid}{Multi-round hybrid \(J_5\) (the real interception attack)}
  \item \mrchanged{Sample \(H_5\leftarrow\cH\), and run the honest classical prefix
  together with the online attacker.  Obtain
  \[
    (M^{(5)},h_5,V_{\cA,5},V_{\cB,5}),
  \]
  and abort if the attacker aborts.}
  \item Independently prepare Eve's Alice prefix view
  \[
    V'_{\cA,5}
    =(W'_{\cA,5},S'_{\cA,5})
    \leftarrow\omega^A_{M^{(5)},h_5},
  \]
  and set \(T'_{\cA,5}=T_{\cA}(W'_{\cA,5})\).
  \item \mrchanged{Run
  \((\Psi_5,K_{\cB,5})\leftarrow
    \cB_{\mathsf q}^{H_5}
    (1^\lambda,M^{(5)},S_{\cB,5})\),
  and transfer \(\Psi_5\) to Eve.}
  \item Run
  \[
    K^*_5
    \leftarrow
    \cA_{\mathsf{fin}}^{H_5[T'_{\cA,5}]}
    (1^\lambda,M^{(5)},S'_{\cA,5},\Psi_5).
  \]
\end{hybrid}

Fix a reachable pair \((m,h)\). The tuple
\[
  \bigl(
    H^{\mathsf B}_4,
    V^{\mathsf B}_{\cA,4},
    V'_{\cB,4}
  \bigr)
\]
in \(J_4\) has exactly the distribution of the conditioned prefix-resampling
experiment from \Cref{lem:prefix-resampling}. Eve's second Alice view
\(V^{\mathsf E}_{\cA,4}\) is independent in both hybrids. Hence, by
\Cref{lem:prefix-resampling} and trace-distance data processing
\cite[Chapter~3]{Watrous18}, for every reachable nonabort pair \((m,h)\),
\[
  \left|
    \Pr[J_4\text{ succeeds}\mid M^{(4)}=m,\ h_4=h]
    -
    \Pr[J_5\text{ succeeds}\mid M^{(5)}=m,\ h_5=h]
  \right|
  \le 2\eta.
\]
Let \(\pi(m,h)\) be the common probability that the prefix and attacker in
the two hybrids output \((m,h)\) without aborting, and define
\[
  s_i(m,h)
  :=
  \Pr[J_i\text{ succeeds}\mid M^{(i)}=m,\ h_i=h]
  \qquad (i\in\{4,5\}).
\]
The abort branch contributes zero, so the law of total probability and the
triangle inequality give
\[
  \left|
    \Pr[J_4\text{ succeeds}]-\Pr[J_5\text{ succeeds}]
  \right|
  \le
  \sum_{m,h}\pi(m,h)\left|s_4(m,h)-s_5(m,h)\right|
  \le 2\eta.
\]
Thus
\begin{equation}
\label{eq:multi-j4-j5}
  \left|
    \Pr[J_4\text{ succeeds}]-\Pr[J_5\text{ succeeds}]
  \right|
  \le2\eta.
\end{equation}

Hybrid \(J_5\) is precisely
\(\IntKR^{\mathsf{mr}}_{\Pi^{\mathsf{mr}},\cE_\eta}\). Combining
\eqref{eq:multi-j0-j1}, \eqref{eq:multi-j1-j2},
\Cref{lem:multi-j2-j3,lem:multi-j3-j4}, and
\eqref{eq:multi-j4-j5} gives
\begin{align*}
  \Adv^{\IntKR,\mathsf{mr}}_{
    \Pi^{\mathsf{mr}},\cE_\eta}(\lambda)
  &\ge
  \alpha
  -\eta-2\eta
  -2q_{\cB,\mathsf q}\sqrt\eta
  \\
  &\quad
  -2(q_{\cB,\mathsf q}+q_{\cA,\mathsf{fin}})\sqrt\eta
  -2\eta
  \\
  &=
  \alpha
  -5\eta
  -2(2q_{\cB,\mathsf q}+q_{\cA,\mathsf{fin}})\sqrt\eta.
\end{align*}
The query bound follows directly from
\Cref{lem:multi-independence-attacker,def:multi-eve}.
\end{proof}

\begin{corollary}[Polynomial-query multi-round impossibility]
\label{cor:multi-impossibility}
Let \(\Pi^{\mathsf{mr}}\) satisfy \Cref{def:multi-protocol}, let
\(\alpha=\alpha_{\Pi^{\mathsf{mr}}}(\lambda)>0\), and define
\[
  Q_{\mathsf{fin}}
  :=
  \max\{1,2q_{\cB,\mathsf q}+q_{\cA,\mathsf{fin}}\},
  \qquad
  \eta
  :=
  \left(\frac{\alpha}{16Q_{\mathsf{fin}}}\right)^2.
\]
Then the attacker of \Cref{thm:multi-main} succeeds with probability greater
than \(\alpha/2\) and makes
\[
  O\!\left(
    \frac{
      \max\{1,q_{\cA,\mathsf{pre}}\}
      \max\{1,q_{\cB,\mathsf{pre}}\}
      Q_{\mathsf{fin}}^6}
      {\alpha^6}
    +q_{\cA,\mathsf{fin}}
  \right)
\]
queries. It follows that any family of such protocols with
inverse-polynomial valid agreement and honest query complexity bounded by
\(\poly(\lambda)\) is insecure against a computationally unbounded quantum
attacker making at most \(\poly(\lambda)\) oracle queries.
The query bound depends on the number of prefix rounds only through the total
honest prefix-query bounds.
\end{corollary}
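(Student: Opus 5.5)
The plan is to instantiate \Cref{thm:multi-main} with the stated value of \(\eta\) and then do the arithmetic. First we need the parameter to be admissible. Since \(0<\alpha\le1\) and \(Q_{\mathsf{fin}}\ge1\), we have
\[
  \eta=\frac{\alpha^2}{256\,Q_{\mathsf{fin}}^2}\le\frac{1}{256}<\frac1{100}.
\]
Hence \(\eta\) lies in the range allowed by \Cref{thm:multi-main}, and we run the attacker \(\cE_\eta\) of \Cref{def:multi-eve}.

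Next we bound the loss terms in the advantage lower bound \(\alpha-5\eta-2(2q_{\cB,\mathsf q}+q_{\cA,\mathsf{fin}})\sqrt\eta\). For the reprogramming term, \(2q_{\cB,\mathsf q}+q_{\cA,\mathsf{fin}}\le Q_{\mathsf{fin}}\) and \(\sqrt\eta=\alpha/(16Q_{\mathsf{fin}})\), so it is at most \(2Q_{\mathsf{fin}}\sqrt\eta=\alpha/8\). For the other term, \(\eta\le\alpha^2/256\le\alpha/256\), so \(5\eta\le 5\alpha/256\). Together the advantage is at least
\[
  \alpha\Bigl(1-\tfrac18-\tfrac{5}{256}\Bigr)>\alpha/2.
\]

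For the query count, substitute \(\eta^{-3}=256^3Q_{\mathsf{fin}}^6/\alpha^6\) into the bound of \Cref{thm:multi-main}. This gives
\[
  \Bigl\lceil 4\cdot10^5\cdot 256^3\,\max\{1,q_{\cA,\mathsf{pre}}\}\max\{1,q_{\cB,\mathsf{pre}}\}\,Q_{\mathsf{fin}}^6/\alpha^6\Bigr\rceil+q_{\cA,\mathsf{fin}},
\]
which is the claimed \(O(\cdot)\) bound because the ceiling adds at most one. The round-independence remark carries over unchanged from \Cref{thm:multi-main}, since \(\eta\) depends only on \(\alpha\) and \(Q_{\mathsf{fin}}\).

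For the asymptotic consequence, suppose \(\alpha\ge1/\poly(\lambda)\) and all honest query bounds are \(\poly(\lambda)\). Then \(Q_{\mathsf{fin}}\) and \(1/\alpha\) are polynomially bounded, so the query count is \(\poly(\lambda)\). Meanwhile the success probability exceeds \(\alpha/2\ge1/\poly(\lambda)\), which is not negligible, so the online interception key-recovery security of \Cref{def:multi-protocol} fails.

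Strictly speaking the uniform-in-\(\lambda\) attacker must pick \(\eta\) from a known polynomial lower bound \(p(\lambda)^{-1}\le\alpha\) rather than from \(\alpha\) itself. Using \(\eta=(1/(16p\,Q_{\mathsf{fin}}))^2\), the same computation gives success at least \(\alpha-\alpha/8-5/(256p^2)\ge\alpha/2\).

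There is no real obstacle here. The only care needed is checking \(\eta<1/100\) and keeping the constant factors straight.
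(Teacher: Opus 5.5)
Your proposal is correct and follows the paper's proof essentially step for step. Like the paper, you check \(\eta\le 1/256<1/100\), bound the reprogramming loss by \(2Q_{\mathsf{fin}}\sqrt\eta=\alpha/8\) and the other loss by \(5\eta\le 5\alpha/256\), and substitute \(1/\eta^3=(16Q_{\mathsf{fin}}/\alpha)^6\) into the query bound of \Cref{thm:multi-main}. Your added remark about choosing \(\eta\) from a known polynomial lower bound on \(\alpha\) is a harmless refinement: the attacker is computationally unbounded and could compute \(\alpha(\lambda)\) exactly anyway.
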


\begin{proof}
The choice of \(\eta\) satisfies \(\eta\le1/256<1/100\), and the
reprogramming loss is
\(2Q_{\mathsf{fin}}\sqrt\eta=\alpha/8\).  Since
\(Q_{\mathsf{fin}}\ge1\) and \(0<\alpha\le1\),
\(5\eta\le5\alpha/256\).  Thus the success lower bound is strictly greater
than \(\alpha/2\).  Substituting
\(1/\eta^3=(16Q_{\mathsf{fin}}/\alpha)^6\) gives the query bound.
\end{proof}

\section{Applications}\label{sec:corollaries}

In this section, we state the two main applications of our results.

\begin{definition}[Fully classical Alice and classical reply]
A protocol from \Cref{def:protocol} has fully classical Alice and a classical
reply if:
\begin{enumerate}[leftmargin=2em]
  \item \(\cA_1\) is a probabilistic classical oracle algorithm and its
  residual state is classical;
  \item Bob's message \(\Psi\) is classical; and
  \item \(\cA_2\) is a probabilistic classical algorithm with classical
  oracle access.
\end{enumerate}
Bob may still perform quantum computation and make quantum random-oracle
queries.
\end{definition}

\begin{corollary}[Classical eavesdropping attack]\label{cor:classical}
For a protocol with fully classical Alice and a classical reply, the
attacker in \Cref{thm:main} is a computationally unbounded classical
attacker with classical random-oracle access.  Therefore the conclusions of
\Cref{cor:parameters,cor:quantum-impossibility} hold against classical
eavesdroppers. Because the reply is classical, Eve can copy it and forward
it to Alice; the attack is therefore passive.
\end{corollary}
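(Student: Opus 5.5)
The proof amounts to inspecting each component of the attacker \(\cE_\delta\) from \Cref{def:eve} and checking that, when Alice is fully classical and the reply is classical, every step can be carried out classically. Then we invoke \Cref{thm:main} and the two corollaries unchanged, since their success bounds and query counts do not depend on the attacker's computational model.

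\textbf{Preprocessing stage.} \(\Learn_\delta^H(1^\lambda,m_1)\) issues only classical queries \(H(x)\) at points chosen deterministically from \((m_1,h)\). The quantities \(p_{m_1,h}(x)\) are probabilities over the classical randomness of \(\cA_1\) and \(H\), so an unbounded classical machine can compute them exactly. Because \(\cA_1\) is a probabilistic classical algorithm with classical residual state, each \(\sigma_{m_1,\tau}\) is a classical distribution over residual values. Hence \(\omega_{m_1,h}\) is an ordinary probability distribution over pairs \((T'_{\cA},S'_{\cA})\), and sampling from it is a classical (unbounded) computation making no queries.

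\textbf{Postprocessing stage.} Since \(\Psi\) is classical, Eve can copy it before use and forward the original to Alice unchanged. She then runs \(\cA_2\) on \((S'_{\cA},\Psi)\) with oracle \(H[T'_{\cA}]\). Because \(\cA_2\) has classical oracle access, each query at \(x\) is answered by a classical lookup: return \(T'_{\cA}(x)\) if \(x\in\Dom(T'_{\cA})\), and otherwise query \(H(x)\) classically. The \(\ket{+}^{\otimes\ell}\) gadget from the proof of \Cref{lem:g3-g4} is therefore unnecessary, and the query count is still at most \(q_{\cA,2}\).

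\textbf{Applying the earlier results.} Eve is thus a classical, computationally unbounded algorithm with classical oracle access and the same query bound \(\ceil{q_{\cA,1}/\delta^2}+q_{\cA,2}\). Her success probability in \(\IntKR\) is exactly the quantity bounded in \Cref{thm:main}. Copying and forwarding \(\Psi\) leaves Alice's and Bob's joint execution identical to the honest one, so the attack is passive. \Cref{thm:main} requires no assumption on Eve's model, so its bound applies verbatim. Substituting the parameter choice of \Cref{cor:parameters} gives the concrete statement, and \Cref{cor:quantum-impossibility} follows in the same way.

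The only point needing care is the claim that the conditioned view \(\omega_{m_1,h}\) is classically samplable. This rests on \(\cA_1\)'s residual state being classical, so the conditional states are probability distributions rather than density operators.
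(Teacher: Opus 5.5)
Your proposal is correct and follows essentially the same route as the paper. You observe that, because \(\cA_1\)'s residual state is classical, \(\omega_{m_1,h}\) is an ordinary classical distribution. You then check that \(\Learn\), the conditional sampling, storage of the classical reply, and the simulation of \(\cA_2^{H[T'_{\cA}]}\) via classical lookups are all classical, and inherit the bounds from \Cref{thm:main}. Your extra remarks, that the \(\ket{+}^{\otimes\ell}\) gadget becomes unnecessary and that copying a classical \(\Psi\) leaves the success probability unchanged, are valid elaborations of the paper's brief argument.
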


\begin{proof}
Under the stated assumptions, \(\omega_{m_1,h}\) is an ordinary classical
distribution over Alice's transcript and residual classical state.  The
attacker is classical, exact conditional sampling is classical, the reply can
be stored classically, and the simulation of \(\cA_2^{H[T'_{\cA}]}\) uses
only classical oracle queries.  No quantum operation remains in Eve's
algorithm.
\end{proof}

\begin{theorem}[Attack on QPKE with classical-query key generation]
\label{thm:qpke}
Let \(\Sigma\) satisfy \Cref{def:qpke}.  For every \(0<\delta<1\), there is
a computationally unbounded quantum IND-CPA attacker \(\cD_\delta\) making
at most
\[
  \ceil{q_{\Gen}/\delta^2}+q_{\Dec}
\]
random-oracle queries and satisfying
\[
  \Adv^{\IND\text{-}\mathrm{CPA}}_{\Sigma,\cD_\delta}(\lambda)
  \ge
  \alpha_\Sigma-\frac12-\delta
  -2(2q_{\Enc}+q_{\Dec})\sqrt\delta.
\]
In particular, if \(\alpha_\Sigma(\lambda)>0\), let
\[
  Q_\Sigma:=\max\{1,2q_{\Enc}+q_{\Dec}\},
  \qquad
  \delta:=\left(\frac{\alpha_\Sigma}{8Q_\Sigma}\right)^2.
\]
Then
\[
  \Adv^{\IND\text{-}\mathrm{CPA}}_{\Sigma,\cD_\delta}(\lambda)
  \ge \frac{47}{64}\alpha_\Sigma-\frac12,
\]
and \(\cD_\delta\) makes at most
\[
  \ceil{4096q_{\Gen}Q_\Sigma^4/\alpha_\Sigma^4}+q_{\Dec}
\]
random-oracle queries. Consequently, a scheme with negligible correctness
error and honest query complexity bounded by \(\poly(\lambda)\) is not
query-bounded IND-CPA secure: for every polynomial \(p\), an attacker making
at most \(\poly(\lambda)\) queries achieves advantage at least
\(1/2-1/p(\lambda)-\negl(\lambda)\), for every message-length function \(\mu\)
bounded above by a polynomial in \(\lambda\).
Encryption,
decryption, and the ciphertext may all be quantum. For
\(\mu(\lambda)=1\), this conclusion applies in particular to the
imperfectly correct PKE obtained from two-round OSP by Bartusek and
Khurana~\cite[Theorem~7.9]{BartusekKhurana24}, whenever the OSP sender's
random-oracle queries are classical.
\end{theorem}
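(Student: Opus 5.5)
We reduce to \Cref{thm:main} via the standard QPKE-to-key-agreement transformation described in the technical overview.

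\textbf{The protocol.} Fix two distinct messages \(m_0,m_1\in\cM_\lambda\); they exist since \(\mu\ge1\). Define \(\Pi_\Sigma=(\cA_1,\cB,\cA_2)\) as follows.
\begin{itemize}[leftmargin=2em]
  \item \(\cA_1^H\) runs \((\pk,\sk)\leftarrow\Gen^H(1^\lambda)\), records \(T_{\cA}=T_{\Gen}\), sends \(M_1=\pk\), and keeps \(S_{\cA}=\sk\).
  \item \(\cB^H(\pk)\) samples \(b\leftarrow\bits\), sets \(\Psi\leftarrow\Enc^H(1^\lambda,\pk,m_b)\), and outputs \(K_{\cB}=b\).
  \item \(\cA_2^H(\sk,\Psi)\) decrypts to \(\widehat m\) and outputs \(i\) if \(\widehat m=m_i\), and \(\bot\) otherwise.
\end{itemize}
This satisfies \Cref{def:protocol}: \(\Gen\) has classical oracle access and \(\pk\) is classical. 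The query bounds are \(q_{\cA,1}=q_{\Gen}\), \(q_{\cB}=q_{\Enc}\), and \(q_{\cA,2}=q_{\Dec}\). Since \(m_0\neq m_1\), agreement on a non-\(\bot\) key occurs exactly when decryption returns \(m_b\). Averaging over \(b\) gives \(\alpha_{\Pi_\Sigma}\ge\alpha_\Sigma\), because \(\alpha_\Sigma\) is a minimum over messages.

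\textbf{The IND-CPA attacker.} Let \(\cD_\delta\) be \(\cE_\delta\) from \Cref{def:eve} applied to \(\Pi_\Sigma\).
\begin{itemize}[leftmargin=2em]
  \item \(\cD_0\) receives \(\pk\), runs \(\cE_{\delta,1}(\pk)\) to get \(Z_{\cE}\), and outputs \((Z_{\cE},m_0,m_1)\).
  \item \(\cD_1\) receives the challenge ciphertext \(C\) and runs \(\cE_{\delta,2}(Z_{\cE},C)\). It outputs the result as \(b'\) if it is a bit, and a uniformly random bit if it is \(\bot\).
\end{itemize}
The IND-CPA experiment with uniform \(b\) and \(C\leftarrow\Enc(\pk,m_b)\) is distributed exactly as \(\IntKR_{\Pi_\Sigma,\cE_\delta}\). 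The ciphertext is transferred to Eve rather than copied, so it is fine for it to be quantum.

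Hence \(\Pr[b'=b]\ge\Pr[\IntKR=1]\), since random guessing on \(\bot\) only adds success probability. Then \(\Adv\ge\Pr[\IntKR=1]-1/2\). By \Cref{thm:main} with \(\alpha_{\Pi_\Sigma}\ge\alpha_\Sigma\), this yields the first displayed bound. The query count \(\ceil{q_{\Gen}/\delta^2}+q_{\Dec}\) is read off from \Cref{thm:main}.

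\textbf{Parameter choice.} Take \(\delta=(\alpha_\Sigma/(8Q_\Sigma))^2\), as in \Cref{cor:parameters}.
\begin{itemize}[leftmargin=2em]
  \item The reprogramming loss is \(2Q_\Sigma\sqrt\delta=\alpha_\Sigma/4\).
  \item Since \(Q_\Sigma\ge1\) and \(\alpha_\Sigma\le1\), we have \(\delta\le\alpha_\Sigma/64\).
  \item So \(\Pr[\IntKR=1]\ge\alpha_\Sigma(1-1/64-1/4)=\tfrac{47}{64}\alpha_\Sigma\). This gives advantage at least \(\tfrac{47}{64}\alpha_\Sigma-\tfrac12\).
  \item The query bound becomes \(\ceil{4096q_{\Gen}Q_\Sigma^4/\alpha_\Sigma^4}+q_{\Dec}\).
\end{itemize}

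\textbf{The asymptotic claim.} The constant \(47/64\) is too weak for the \(1/2-1/p\) claim, so instead choose \(\delta=(1/(8p(\lambda)Q_\Sigma))^2\). Then the losses are \(\delta+2Q_\Sigma\sqrt\delta\le 1/p\). With \(\alpha_\Sigma\ge1-\negl\), the advantage is at least \(1/2-1/p-\negl\). The query count \(O(q_{\Gen}p^4Q_\Sigma^4)+q_{\Dec}\) is polynomial. This step is only a choice of parameters and needs no new technique.

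\textbf{OSP consequence.} In the Bartusek--Khurana transformation, the classical OSP sender is \(\Gen\), so the \(\mu=1\) case applies whenever that sender's oracle queries are classical.

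\textbf{Main obstacle.} The only delicate point is the faithful embedding. The IND-CPA experiment must coincide with \(\IntKR\) for \(\Pi_\Sigma\), including that \(\cD_0\)'s preprocessing happens before encryption and uses only \(\pk\). Abort and \(\bot\) outputs must also be handled so that they never decrease success probability.
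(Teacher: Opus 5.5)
Your proposal is correct and follows the paper's proof essentially step for step. It uses the same reduction \(\Pi_\Sigma\) with \(\alpha_{\Pi_\Sigma}=(p_0+p_1)/2\ge\alpha_\Sigma\), the same wrapping of \(\cE_\delta\) into an IND-CPA attacker that outputs a uniform bit on \(\bot\), and the same arithmetic yielding \(\tfrac{47}{64}\alpha_\Sigma-\tfrac12\). Your explicit choice \(\delta=(1/(8pQ_\Sigma))^2\) for the \(1/2-1/p-\negl(\lambda)\) claim is a correct step that the paper's proof leaves implicit, and you are right that the \(47/64\) parameter choice alone does not give that bound.
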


\begin{proof}
Define a key-agreement protocol \(\Pi_\Sigma\) from \(\Sigma\) as follows.
Alice's first stage runs
\[
  (\pk,\sk)\leftarrow\Gen^H(1^\lambda)
\]
and outputs \((T_{\Gen},M_1,S_{\cA})=(T_{\Gen},\pk,\sk)\).  On input
\(\pk\), Bob fixes the two distinct messages
\[
  m_0:=0^{\mu(\lambda)},
  \qquad
  m_1:=0^{\mu(\lambda)-1}1,
\]
samples \(b\leftarrow\bits\), runs
\(C\leftarrow\Enc^H(1^\lambda,\pk,m_b)\), sends \(C\), and sets
\(K_{\cB}=b\). Alice's final stage computes
\(\widehat m\leftarrow\Dec^H(1^\lambda,\sk,C)\) and outputs
\[
  K_{\cA}:=
  \begin{cases}
    0 & \text{if }\widehat m=m_0,\\
    1 & \text{if }\widehat m=m_1,\\
    \bot & \text{otherwise.}
  \end{cases}
\]
For \(b\in\bits\), let
\begin{align*}
  p_b:=\Pr\bigl[&\widehat m=m_b:\ \\
    &H\leftarrow\cH_\lambda,\
    (\pk,\sk)\leftarrow\Gen^H(1^\lambda),\\
    &C_b\leftarrow\Enc^H(1^\lambda,\pk,m_b),\
    \widehat m\leftarrow\Dec^H(1^\lambda,\sk,C_b)\bigr].
\end{align*}
Therefore
\[
  q_{\cA,1}=q_{\Gen},\qquad
  q_{\cB}=q_{\Enc},\qquad
  q_{\cA,2}=q_{\Dec},
  \qquad
  \alpha_{\Pi_\Sigma}=\frac{p_0+p_1}{2}\ge\alpha_\Sigma,
\]
where the inequality follows from the definition of the minimum
correct-decryption probability in \Cref{def:qpke}.

This is the standard reduction from QPKE to key agreement used in prior
QROM impossibility work~\cite{BouazizEtAl24,LiEtAl24}. Apply
\Cref{thm:main} to obtain the
intercepting attacker
\(\cE_\delta=(\cE_{1},\cE_{2})\).  The IND-CPA
attacker \(\cD_\delta\) has its first stage run
\[
  Z_{\cD}\leftarrow\cE_{1}^H(1^\lambda,\pk)
\]
and returns \((Z_{\cD},m_0,m_1)\). Its second stage, upon
receiving the challenge ciphertext register \(C\), runs
\(K^*\leftarrow\cE_{2}^H(1^\lambda,Z_{\cD},C)\) and outputs \(K^*\)
when \(K^*\in\bits\), or an independent uniform bit otherwise.  Thus its
success probability satisfies
\[
  \Pr[\IND_{\Sigma,\cD_\delta}(1^\lambda)=1]
  \ge
  \Pr[\IntKR_{\Pi_\Sigma,\cE_\delta}(1^\lambda)=1].
\]
The claimed advantage and query bound follow from \Cref{thm:main}, since
\begin{align*}
  \Adv^{\IND\text{-}\mathrm{CPA}}_{\Sigma,\cD_\delta}(\lambda)
  &\ge
  \Pr[\IND_{\Sigma,\cD_\delta}(1^\lambda)=1]-\frac12\\
  &\ge
  \Pr[\IntKR_{\Pi_\Sigma,\cE_\delta}(1^\lambda)=1]-\frac12.
\end{align*}

For the stated parameter choice,
\[
  2Q_\Sigma\sqrt\delta=\frac{\alpha_\Sigma}{4},
  \qquad
  \delta\le\frac{\alpha_\Sigma}{64},
\]
which gives the factor \(47/64\); the query bound follows by substituting
\(1/\delta^2=4096Q_\Sigma^4/\alpha_\Sigma^4\).  A quantum challenge
ciphertext is transferred to the attacker's second stage, so the reduction
does not copy it.  If \(\Gen\) and \(\Dec\) are classical algorithms and the
ciphertext is classical, \Cref{cor:classical} moreover yields a classical
attack.
\end{proof}

\bibliographystyle{alpha}
\bibliography{quantum}

\end{document}